\def\BState{\State\hskip-\ALG@thistlm}
\DeclareMathOperator*{\defeq}{\triangleq}
\newtheorem{theorem}{Theorem}
\newtheorem{lemma}{Lemma}
\newtheorem{definition}{Definition}
\newcommand{\bit}{\begin{itemize}}
\newcommand{\eit}{\end{itemize}}
\newcommand{\bc}{\begin{center}}
\newcommand{\ec}{\end{center}}
\newcommand{\ba}{\begin{array}}
\newcommand{\ea}{\end{array}}
\newcommand{\beq}{\begin{equation}}
\newcommand{\eeq}{\end{equation}}
\newcommand{\beqn}{\begin{equation*}}
\newcommand{\eeqn}{\end{equation*}}
\newcommand{\bean}{\begin{eqnarray*}}
\newcommand{\eean}{\end{eqnarray*}}
\newcommand{\bea}{\begin{eqnarray}}
\newcommand{\eea}{\end{eqnarray}}
\def\hv{\boldsymbol{h}}
\def\wv{\boldsymbol{w}}
\newcommand{\Ac}{{\mathcal A}}
\newcommand{\Bc}{{\mathcal B}}
\newcommand{\Cc}{{\mathcal C}}
\newcommand{\Dc}{{\mathcal D}}
\newcommand{\Ec}{{\mathcal E}}
\newcommand{\Fc}{{\mathcal F}}
\newcommand{\Pc}{{\mathcal P}}
\newcommand{\Uc}{{\mathcal U}}
\newcommand{\T}{{\scriptscriptstyle\mathsf{T}}}
\algnewcommand{\IfThenElse}[3]{  \State \algorithmicif\ #1\ \algorithmicthen\ #2\ \algorithmicelse\ #3}
\newcommand{\non}{\nonumber}
\newcommand\bl[1]{{\color{blue}#1}}
\newcommand{\Pros}{\text{processors}}
\newcommand{\Pro}{\text{processor}}
\newcommand{\PRO}{\text{Processor}}
\newcommand{\Lh}{\ell}
\newcommand{\Me}{\wv}
\newcommand{\Ry}{\mathrm{s}}
\newcommand{\Vr}{\mathrm{v}} 
\newcommand{\gn}{\eta} 
\newcommand{\Meg}{\bar{\wv}}
\newcommand{\cb}{c} 
\newcommand{\Ss}{\mathbb{S}}
\newcommand{\Lin}{l}
\newcommand{\Lk}{\mathrm{u}}
\newcommand{\rtuple}{)} 
\newcommand{\ltuple}{(}
\newcommand{\AND}{\land}    
\newcommand{\OR}{\lor}
\newcommand{\MVBAInputMsg}{\wv}
\newcommand{\EncodedSymbol}{y}
\newcommand{\thisnodeindex}{i}                       
\newcommand{\READY}{\text{``}\mathrm{READY}\text{''}}                           
\newcommand{\send}{\textbf{send}}
\newcommand{\Output}{\textbf{output}}                      
\newcommand{\terminate}{\textbf{terminate}}
\newcommand{\wait}{\textbf{wait}}                             
\newcommand{\ECC}{\mathrm{ECC}}
\newcommand{\ECCEnc}{\mathrm{ECCEnc}}   
\newcommand{\ECCDec}{\mathrm{ECCDec}} 
\newcommand{\Alphabet}{\Bc}      
\newcommand{\MVBAOutputMsg}{\hat{\wv}}
\newcommand{\INITIAL}{\text{``}\mathrm{INITIAL}\text{''}}
\newcommand{\OECsymbolset}{\mathbb{Z}_{\mathrm{oec}}}   
\newcommand{\Lkset}{\mathbb{U}}   
\newcommand{\SYMBOL}{\text{``}\mathrm{SYMBOL}\text{''}}                        
\newcommand{\SIone}{\text{``}\mathrm{SI1}\text{''}}                         
\newcommand{\SItwo}{\text{``}\mathrm{SI2}\text{''}}
\newcommand{\defaultvalue}{\bot}      
\newcommand{\ECCEncindicator}{I_\mathrm{ecc}}    
\newcommand{\Phoneindicator}{I_\mathrm{1}}       
\newcommand{\Phtwoindicator}{I_\mathrm{2}}        
\newcommand{\Phthreeindicator}{I_\mathrm{3}}
\newcommand{\CORRECTSYMBOL}{\text{``}\mathrm{CORRECT}\text{''}}                         
\newcommand{\OECCorrectSymbolSet}{\mathbb{Y}_{\mathrm{oec}}}
\newcommand{\OEC}{\mathrm{OEC}}
\newcommand{\COOL}{\mathrm{COOL}}
\newcommand{\DM}{\mathrm{DM}}
\newcommand{\LEADER}{\text{``}\mathrm{LEAD}\text{''}}                            
\newcommand{\networksizen}{n}                                           
\newcommand{\kencode}{k}                                            
\newcommand{\networkfaultsizet}{t}
\newcommand{\successindicator}{\mathrm{s}}
\newcommand{\IDCOOL}{\mathrm{ID}}                                                                       
\newcommand{\OciorCOOL}{\text{OciorCOOL}}           
\newcommand{\BinaryBARound}{\Bc_{\mathrm{r}}}      
\newcommand{\SIPhOne}{\text{``}\mathrm{SI}\text{''}}                            
\newcommand{\SIPhTwo}{\text{``}\mathrm{NewSI}\text{''}}                            
\newcommand{\BA}{\mathrm{BA}}
\newcommand{\Input}{\textbf{input}}                      
\newcommand{\OciorBUA}{\text{OciorUA}}         
\newcommand{\BBA}{\mathrm{BBA}}                                            
\newcommand{\BBAoutput}{\Vr^{\star}}
\newcommand{\OciorRBC}{\text{OciorRBC}}
\newcommand{\Byzantineuniqueagreement}{\text{Unique  agreement}}            
\newcommand{\BUA}{\mathrm{UA}}
\newcommand{\HMDM}{\mathrm{HMDM}}
\newcommand{\RBC}{\mathrm{RBC}}                                                                          
\newcommand{\RBCleaderindex}{l}
\newcommand{\OECsymbol}{z}                                                                            
\newcommand{\OECsymbolsetInitial}{\mathbb{Z}_{\mathrm{oec}}}     
\newcommand{\OECSI}{I_\mathrm{oec}}                                                                        
\newcommand{\SIPhtwo}{I_\mathrm{SI2}}     
\newcommand{\OECSIFinal}{I_\mathrm{oecfinal}}                                                                              
\newcommand{\VrOutput}{\mathrm{v}_o}
\begin{document}
\sloppy
\title{$\OciorCOOL$: Faster Byzantine Agreement and Reliable Broadcast}
\author{Jinyuan Chen 
}

\maketitle
\pagestyle{headings}

\begin{abstract}
$\COOL$  (Chen'21) is an error-free and deterministic Byzantine agreement  protocol that achieves consensus on an $\Lh$-bit message with a communication complexity of $O(\max\{n\Lh, n t \log t \})$  bits in four phases, given $n\geq 3t + 1$, for a network of $n$ nodes, where up to $t$ nodes may be dishonest.
In this work we show that  $\COOL$ can be optimized by reducing one communication round. The new protocol is called   $\OciorCOOL$. 
Additionally, building on $\OciorCOOL$, we design an optimal reliable broadcast protocol that requires only six communication rounds. 
\end{abstract}


\section{Introduction}

Byzantine agreement (BA) is a fundamental distributed consensus problem introduced around forty years ago \cite{PSL:80}. In this problem, $n$ distributed nodes seek to reach consensus on  an $\Lh$-bit message, where up to $t$ of the nodes may be dishonest. Byzantine agreement, together with its variants such as Byzantine broadcast  (BB) and reliable broadcast ($\RBC$), is  believed to be an essential foundation of distributed systems and cryptography  \cite{PSL:80, LSP:82, ChenDISC:21, Chen:2020arxiv, LCabaISIT:21, ZLC:23, FH:06, LV:11, GP:20, LDK:20, NRSVX:20, Patra:11, CT:05}.

For the multi-valued error-free BA problem, significant efforts have been made to improve performance in terms of communication and round complexities, under the optimal resilience condition $n\geq 3t + 1$  \cite{LV:11, GP:20,LDK:20, NRSVX:20,ChenDISC:21} (see Table~\ref{tb:BA}). In this direction,  Chen designed the  $\COOL$ protocol that  achieves a communication complexity of $O(\max\{n\Lh, n t \log t \})$  bits with four phases, under the optimal resilience of  $n\geq 3t + 1$ \cite{ChenDISC:21}.
In this work, we demonstrate that $\COOL$ can be optimized by eliminating one phase, thereby reducing the number of communication rounds. With fewer communication rounds, the new protocol, called $\OciorCOOL$, is faster than the original $\COOL$ protocol.

$\COOL$ has been used as a building block in other consensus problems, such as BB \cite{ChenDISC:21}, asynchronous BA  \cite{ChenDISC:21},  gradecast \cite{ZLC:23},  validated Byzantine agreement \cite{CDG+:24},  and $\RBC$ \cite{ADD+:22}. 
In this work, building on $\OciorCOOL$, we design an error-free reliable broadcast protocol called  $\OciorRBC$, which requires only six communication rounds and improves upon the $\RBC$ protocol by Alhaddad et al.\cite{ADD+:22}, which requires eight communication rounds (see Table~\ref{tb:BRC}).  The proposed $\OciorCOOL$ can be applied to other consensus problems, such as BB, asynchronous BA,  gradecast, and validated Byzantine agreement to improve the round complexity.

The proposed $\OciorCOOL$ protocol is described in Algorithms~\ref{algm:OciorBUA} and \ref{algm:OciorCOOL}, and its analysis is provided in Section~\ref{sec:OciorCOOL}.
The proposed $\OciorRBC$ protocol is described in Algorithm~\ref{algm:OciorRBC}, and its analysis is provided in Section~\ref{sec:OciorRBC}.  Table~\ref{tb:BA} and Table~\ref{tb:BRC} provide the comparison between the proposed protocols and some other error-free protocols for the BA and $\RBC$ settings, respectively. 
Some definitions and primitives  used in our protocols are provided in the following subsection.

\subsection{Primitives}

 \vspace{.1 in}
    
    \noindent  {\bf Information-theoretic (IT) protocol.} A protocol that guarantees all of the required properties without using any cryptographic assumptions, such as  signatures and hashing, is said to be \emph{information-theoretic secure}.    The proposed protocols are information-theoretic secure.

\vspace{.1 in} 

\noindent  {\bf Error-free protocol.}    A protocol that  that guarantees all of the required properties in \emph{all} executions is said to be \emph{error-free}.    The proposed protocols are error-free.

{\renewcommand{\arraystretch}{1.3}
\begin{table}
\small
\begin{center}
\caption{Comparison between proposed $\OciorCOOL$  and some other  error-free Byzantine agreement protocols.  $\BinaryBARound(1)$ denotes the round complexity of a binary BA. By using the binary BA protocol in \cite{BGP:92,CW:92}, we have $\BinaryBARound(1)= O(t)$.} \label{tb:BA}
      \vspace{-.05 in}
\begin{tabular}{||c||c|c|c|c|c|}
\hline
Protocols & Resilience &   Communication   &  Rounds   &     Error Free & Signature Free     \\ 
\hline
Liang-Vaidya \cite{LV:11}   & $  n   \geq 3t + 1 $ & $O( n \ell + n^4  \sqrt{\ell}+n^6)$  &  $\Omega (\sqrt{\ell}+n^2)$  &  Yes  & Yes  \\
\hline
Ganesh-Patra \cite{GP:20}   & $  n   \geq 3t + 1 $ & $O( n \ell + n^4 )$  &  $O (t)$  &  Yes  & Yes  \\
\hline
Loveless et al.  \cite{LDK:20}   & $  n   \geq 3t + 1 $ & $O( n \ell + n^4 )$   &  $O(t)$  &  Yes  & Yes  \\
\hline
Nayak et al. \cite{NRSVX:20}   & $  n   \geq 3t + 1 $ & $O(n \ell + n^3)$  &  $O(t)$  &  Yes  & Yes  \\
\hline
  Chen \cite{ChenDISC:21}  &    $  n   \geq 3t + 1 $   &  $O(\max\{n\Lh, n t \log t \})$  &   5   + $\BinaryBARound(1)$     &     Yes  &   Yes \\   
\hline
 {\color{blue}  $\OciorCOOL$} &   {\color{blue}$  n   \geq 3t + 1 $}   & {\color{blue}$O(\max\{n\Lh, n t \log t \})$}  &   {\color{blue}4   + $\BinaryBARound(1)$}       &    {\color{blue}Yes}  &  {\color{blue}Yes} \\
\hline
\end{tabular}
\end{center}
\end{table}
}

{\renewcommand{\arraystretch}{1.3}
\begin{table}
\footnotesize  
\begin{center}
\caption{Comparison between proposed $\OciorRBC$  and some other error-free reliable broadcast protocols.   ``Balanced communication'' means that communication overhead is distributed evenly among distributed nodes.} \label{tb:BRC}   
\begin{tabular}{||c||c|c|c|c|c|}
\hline
Protocols & Resilience  &  Communication        &    Rounds   &    Error Free  &   Signature Free   \\ 
\hline
Bracha  \cite{Bracha:87}    &    $n\geq 3t + 1$ &    $O(n^2 |\MVBAInputMsg|)$       &  $O(1)$   &  Yes  & Yes  \\ 
\hline
Patra  \cite{Patra:11}    &    $n\geq 3t + 1$ &    $O(n |\MVBAInputMsg|+  n^4\log n)$     &  $O(1)$   &  Yes  & Yes  \\ 
\hline
Nayak et al.\cite{NRSVX:20}    &    $n\geq 3t + 1$ &    $O(n |\MVBAInputMsg|+  n^3\log n)$      &  $O(1)$   &  Yes  & Yes  \\ 
\hline
Alhaddad et al.\cite{ADD+:22}    &    $n\geq 3t + 1$ &    $O(n |\MVBAInputMsg|+  n^2\log n)$      &  8 (without balanced com.)   &  Yes  & Yes  \\ 
    &      &          &  9 (with balanced com.)   &   &    \\ 
\hline
  {\color{blue} $\OciorRBC$} &         $ \bl{n\geq 3t + 1}$ & {\color{blue} $O(n |\MVBAInputMsg|+  n^2\log n)$  }    &   {\color{blue} $6$ (without balanced com.)}     &    {\color{blue} Yes}     &    {\color{blue}Yes}      \\
  &          &      &   {\color{blue} $7$ (with balanced com.)}     &        &       \\
  \hline
\end{tabular}
\end{center}
\end{table}
}

\vspace{.1 in} 

 \noindent  {\bf Error correction code ($\ECC$).}   An $(n, k)$ error correction coding  scheme  consists of an encoding scheme $\ECCEnc: \Alphabet^{k} \to  \Alphabet^{n}$ and a decoding scheme $\ECCDec: \Alphabet^{n'} \to  \Alphabet^{k}$, where $\Alphabet$ denotes the alphabet of each symbol, for some $n'$. 
 While $[\EncodedSymbol_1,  \EncodedSymbol_2, \cdots, \EncodedSymbol_{n}] \gets \ECCEnc (n,  k, \MVBAInputMsg)$ outputs $n$ encoded symbols, 
$\EncodedSymbol_{j} \gets \ECCEnc_{j}(n,  k, \MVBAInputMsg)$ outputs the $j$th encoded symbol.  
  An $(n, k)$ error correction code can correct up to  $t$ Byzantine errors and simultaneously detect up to $d$ Byzantine errors in $n'$ symbol observations, given the conditions of  $2t+ d +k \leq   n'$  and  $n' \leq  n$.  
 Reed-Solomon (RS) code (cf.~\cite{RS:60}) is one popular error correction code.   The $(n, k)$ RS code is operated over Galois Field $GF(2^c)$ under the constraint of $n \leq 2^c  -1$ (cf.~\cite{RS:60}). Berlekamp-Welch algorithm and Euclid's algorithm are two efficient decoding algorithms for RS code \cite{roth:06, Berlekamp:68, RS:60}.

\vspace{.1 in}

 \noindent  {\bf Online error  correction  ($\OEC$).}     
Online error  correction is a variant of traditional error correction \cite{BCG:93}.   An $(n, k)$ error correction code can correct up to  $t'$ Byzantine errors in $n'$ symbol observations, provided the conditions of  $2t'+ k \leq   n'$  and  $n' \leq  n$. However, in an asynchronous setting, a node might not be able to decode the message with $n'$ symbol observations if  $2t'+ k >   n'$. 
In such a case, the node can wait for one more symbol observation before attempting to decode again. This process repeats until the node successfully decodes the message. By setting the threshold as $n' \geq k + t$, $\OEC$ may perform up to $t$ trials in the worst case before decoding the message.

\begin{definition} [{\bf Byzantine agreement}]
In the Byzantine agreement  protocol, the distributed nodes   seek to reach agreement on a common value. 
The  $\BA$ protocol guarantees  the following properties: 
\begin{itemize}
\item  {\bf Termination:} If all  honest nodes receive their inputs, then every honest node  eventually outputs a value and terminates. 
\item  {\bf Consistency:} If any honest node output a value $\wv$, then every honest node eventually outputs $\wv$.
\item  {\bf Validity:}     If all honest nodes input the same  value $\wv$, then every honest node eventually outputs $\wv$.  
\end{itemize} 
\end{definition}

\begin{definition} [{\bf Reliable broadcast} \cite{Bracha:87}]
 In a reliable broadcast protocol, a leader  inputs a value  and broadcasts it to distributed nodes,   satisfying the following conditions:
\begin{itemize}
\item  {\bf Consistency:} If any two honest nodes output $\wv'$ and $\wv''$, respectively, then  $\wv'=\wv''$.
\item   {\bf Validity:} If the leader is  honest and inputs a value $\wv$, then every honest node eventually outputs $\wv$. 
\item  {\bf Totality:}  If one  honest node outputs a value, then every honest node  eventually outputs a value.          
\end{itemize} 
\end{definition}

\begin{definition} [{\bf Distributed multicast}] \label{def:DM}
 In the problem of distributed multicast ($\DM$), there exits a subset of nodes acting as  senders multicasting the message over $n$ nodes, where up to $t$ nodes could be dishonest. Each node acting as an sender has an input   message.  A protocol is called as a $\DM$ protocol if  the following property is guaranteed:  
\begin{itemize}
\item   {\bf Validity:} If all  honest senders input the same message $\wv$,   every honest node eventually outputs $\wv$.       
\end{itemize} 
{\bf Honest-majority distributed multicast ($\HMDM$):} A $\DM$ problem is called as honest-majority $\DM$ if  at least   $t+1$ senders are honest.  $\HMDM$ was used previously as a building block for $\COOL$ protocol, i.e., Phase 4 of $\COOL$ \cite{Chen:2020arxiv, ChenDISC:21}. 
\end{definition}

\begin{definition} [{\bf  $\Byzantineuniqueagreement$}]  \label{def:BUA}     
$\Byzantineuniqueagreement$ ($\BUA$)   is a variant of  Byzantine agreement  problem operated over $n$ nodes, where up to $t$ nodes may be dishonest.  
In a $\BUA$ protocol, each node inputs an initial value and  seeks to make an  output taking the form as $(\MVBAInputMsg, \successindicator, \Vr)$, where $\successindicator \in \{0,1\}$ is a    success indicator and $\Vr \in \{0,1\}$ is a vote. 
The  $\BUA$ protocol guarantees  the following properties: 
\begin{itemize}
\item  {\bf Unique Agreement:} If any two honest nodes output $(\wv', 1, *)$ and $(\wv'', 1, *)$, respectively, then  $\wv'=\wv''$.
\item  {\bf Majority Unique Agreement:} If any honest node outputs $(\wv, 1, 1)$, then at least $t+1$ honest nodes eventually output  $(\wv, 1, *)$. 
\item   {\bf Validity:} If all honest nodes input the same  value $\wv$,  then  all honest nodes eventually  output $(\wv, 1, 1)$. 
\end{itemize} 
\end{definition}

\section{$\OciorCOOL$}    \label{sec:OciorCOOL}

This proposed $\OciorCOOL$ is a  deterministic and error-free Byzantine agreement  protocol for the synchronous setting.  
$\OciorCOOL$ doesn't rely on any cryptographic assumptions such as signatures or hashing. 
This proposed $\OciorCOOL$ protocol is an improvement on the previous $\COOL$ protocol, using three phases instead of four \cite{Chen:2020arxiv, ChenDISC:21}.

\subsection{Overview of $\OciorCOOL$}    \label{sec:OverviewOciorCOOL}

The proposed  $\OciorCOOL$ is described in Algorithm~\ref{algm:OciorCOOL} and Algorithm~\ref{algm:OciorBUA}. In the following, we provide an overview of the proposed protocol.

\subsubsection{Phases~1 and 2}  The first two phases uses the proposed $\OciorBUA$ algorithm (Algorithm~\ref{algm:OciorBUA}) as a building block (Line~\ref{line:OciorCOOLOciorBUA} of Algorithm~\ref{algm:OciorCOOL}).  $\OciorBUA$ is a $\BUA$ algorithm which ensures that:  1) if any two honest nodes output $(\wv', 1, *)$ and $(\wv'', 1, *)$, respectively, then  $\wv'=\wv''$ (Unique Agreement);  2) if any honest node outputs $(\wv, 1, 1)$, then at least $t+1$ honest nodes eventually output  $(\wv, 1, *)$ (Majority Unique Agreement); and  3) if all honest nodes input the same  value $\wv$,  then  all honest nodes eventually  output $(\wv, 1, 1)$ (Validity).

After delivering outputs from $\OciorBUA$, Node~$i$ makes a vote and runs a binary BA consensus on  the votes.  This ensures sure that all honest nodes make the same decision on whether to terminate at Phase~2 or go to the next phase.

\subsubsection{Phase~3} This phase uses distributed multicast as a building block.  This phase ensures that the encoded symbols from the honest nodes can be calibrated using majority rule such that the symbols are encoded from the same message. In this way, all honest nodes with success indicators of zero can output the same decoded message.

\begin{algorithm}
\caption{$\OciorBUA$  protocol with identifier $\IDCOOL$. Code is shown for $S_{\thisnodeindex}, \thisnodeindex \in [\networksizen]$}  \label{algm:OciorBUA}
\begin{algorithmic}[1]
\vspace{5pt}    
\footnotesize

\State  $\Input$   $\Me_{i}$ 
\State  Initially set   $\kencode \gets   \bigl \lfloor   \frac{ \networkfaultsizet  }{5 } \bigr\rfloor    +1 ;  \Me^{(i)}  \gets \Me_{i}$      \quad \quad\quad\quad\quad \quad\quad\quad\quad \quad\quad\quad\quad  \quad \emph{// set the initial value of $\Me^{(i)}$}
\State  $[y_1^{(i)}, y_2^{(i)}, \cdots, y_{n}^{(i)}] \gets \ECCEnc(\networksizen,  \kencode, \Me_{i})$  \quad \quad \quad \quad\quad\quad\quad\quad\quad\quad\quad\quad  \emph{// ECC encoding }

\vspace{5pt} 
  
\Statex {\bf \emph{Phase~1}}

\State   $\send$  $\ltuple   \SYMBOL, \IDCOOL,  (y_j^{(i)}, y_i^{(i)}) \rtuple$ to $S_j$,  $\forall j \in  [\networksizen]$  \  \quad\quad \quad \quad\quad\quad \quad \quad \emph{// exchange coded  symbols; $[\networksizen]:=\{1,2,\cdots, \networksizen\}$}

\For     {$j =1:n$}    \quad     \quad\quad \quad \quad \quad \quad \quad\quad \quad \quad  \quad \quad \quad \quad \quad \quad \quad \quad  \quad \quad \quad \quad \emph{//set link indicator}

    \IfThenElse {$ (y_i^{(j)}, y_j^{(j)}) = (y_i^{(i)}, y_j^{(i)})$}  {$\Lk_i (j) \gets 1$ } {$\Lk_i (j) \gets 0$ }   \label{line:OciorBUAlinkset111}

   \EndFor

     \IfThenElse {$ \sum_{j=1}^n \Lk_i (j)  \geq  n - t $}  {$\Ry_i \gets 1$ } {$\Ry_i  \gets 0$; $\Me^{(i)} \gets \defaultvalue$}   \label{line:OciorBUASIOne}   \quad \quad \quad\quad \  \emph{//set success indicator}

  \State   $\send$   $\ltuple   \SIPhOne, \IDCOOL,  \Ry_i \rtuple$     to all nodes    \  \quad\quad\quad \quad  \quad\quad\quad\quad\quad \quad  \quad\quad\quad\quad\quad \quad  \emph{//  exchange success indicators}
 
 \State  set $ \Ss_1  =   \{ j:  \Ry_{j}=1,   j \in [1:n ]\}$ and $\Ss_0  =   \{ j:  \Ry_{j}=0,   j \in [1:n ]\}$, based on received success indicators  $ \{\Ry_{j}\}_ {j=1}^{n}$.

   \vspace{5pt} 
\Statex  {\bf \emph{Phase~2}}

 \If {$\Ry_i  =1$}    \label{line:OciorBUAPh2Begin} 

    \State set   $\Lk_i (j) \gets 0,  \forall j \in \Ss_0$    \label{line:OciorBUAmskerror}     \quad\quad\quad  \quad\quad\quad\quad \quad\quad\quad\quad \quad\quad\quad \quad  \quad  \quad\quad \emph{//  mask identified errors}      
    
              \If {$\sum_{j=1}^n \Lk_i (j)  < n - t$}            
         
           	 \State set  $\Ry_i  \gets 0$; $\Me^{(i)} \gets \defaultvalue$           	    \label{line:OciorBUASI2}   \   \quad \quad   \quad \quad \quad \quad \quad \quad  \quad \quad \quad \quad \quad \quad  \quad \quad \quad \quad \emph{//update success indicator}
            	 \State  $\send$   $\ltuple   \SIPhTwo, \IDCOOL,  \Ry_i \rtuple$     to all nodes           \quad    \quad\quad \quad  \quad\quad \quad  \quad\quad \quad  \quad\quad \quad\emph{//  exchange updated success indicators}                     
              \EndIf                   \label{line:OciorBUAPh2End} 
\EndIf

\State update  $\Ry_j \gets \Ry$ if receiving  message  $\ltuple   \SIPhTwo, \IDCOOL,  \Ry \rtuple$ from $S_j, \forall j\in \Ss_1$     \quad   \emph{// update success indicators} 

 \State update   $\Ss_0$ and $\Ss_1$ based on the updated success indicators   $ \{\Ry_{j}\}_ {j}$     \quad\quad   \quad\quad     \emph{// update $\Ss_0$ and $\Ss_1$}

     \IfThenElse {$|\Ss_1|  \geq 2t+1$}  {$\Vr_i \gets 1$} {$\Vr_i \gets 0$}   \label{line:OciorCOOLVote}   \quad\quad\quad\quad\quad \quad \quad\quad\quad\quad\quad \quad\quad  \quad     \emph{// set the vote value}

 \State     $\Output$  $[\Me^{(i)}, \Ry_{i},   \Vr_i,  \Ss_0, \Ss_1,  [y_1^{(1)}, y_2^{(2)}, \cdots, y_{n}^{(n)}], [y_i^{(1)}, y_i^{(2)}, \cdots, y_{i}^{(n)}]]$  	   \quad\quad        \emph{// $\Me^{(i)}, \Ry_{i},   \Vr_i$ are three $\BUA$ outputs}

\end{algorithmic}
\end{algorithm}

\begin{algorithm}
\caption{$\OciorCOOL$     protocol for BA with identifier $\IDCOOL$. Code is shown for $S_{\thisnodeindex}, \thisnodeindex \in [\networksizen]$}  \label{algm:OciorCOOL}
\begin{algorithmic}[1]
\vspace{5pt}    
\footnotesize

\State  $\Input$    a non-empty value  $\Me_{i}$ 		
\State  Initially set   $\kencode \gets   \bigl \lfloor   \frac{ \networkfaultsizet  }{5 } \bigr\rfloor    +1$  
\Statex

\Statex {\bf \emph{Phase~1 and Phase~2}}

\State  $[\Me^{(i)}, \Ry_{i},   \Vr_i,  \Ss_0, \Ss_1,  [y_1^{(1)}, y_2^{(2)}, \cdots, y_{n}^{(n)}], [y_i^{(1)}, y_i^{(2)}, \cdots, y_{i}^{(n)}]] \gets \OciorBUA[ \IDCOOL  ](\Me_{i})$          \label{line:OciorCOOLOciorBUA}    \quad \quad    \emph{//      $\BUA$   with two phases; see Algorithm~\ref{algm:OciorBUA}   }

  	\State  $\BBAoutput\gets \BBA[\IDCOOL](\Vr_i )$               \quad   \quad     \emph{//  $\BBA$ is a binary BA  consensus on  $n$ votes $\{\Vr_1, \Vr_2, \cdots, \Vr_n\}$,  by using    protocol from \cite{BGP:92,CW:92}  }

	\If {   $\BBAoutput=0$}    \quad\quad\quad\quad\quad\quad\quad\quad\quad\quad\quad \quad\quad\quad\quad\quad  \emph{// if the output of binary BA is $0$,  set $\Me^{(i)}$ as a default value}  
		\State   $\Output$  $\defaultvalue$ and $\terminate$ 	   \label{line:OciorCOOLterminateA} 
	\Else
		\State  go to next phase	  	   \label{line:OciorCOOLGo2Ph3} 
	\EndIf

 \Statex

 \Statex {\bf \emph{Phase~3}}
    
 \If {$ \Ry_{i} = 0$}         \label{line:OciorCOOLPh3Begin}     \quad\quad  \quad  \quad\quad   \quad  \quad\quad  \quad  \quad\quad   \quad   \quad\quad   \quad  \quad\quad   \quad  \quad\quad   \quad    \quad\quad  \quad  \quad  \quad\quad   \quad   \quad \quad    \emph{//  $\HMDM$ algorithm with one phase    }
 
\State update  $y_i^{(i)}     \leftarrow  \text{Majority}( \{y_i^{(j)}:   j \in  \Ss_1 \})$     \label{line:OciorCOOLPh3A}    \quad\quad   \quad  \quad\quad\quad\quad   \quad \quad\quad   \quad   \quad\quad   \quad  \quad\quad   \quad    \quad  \emph{// update its coded symbol with majority rule}

\State   $\send$   $\ltuple \CORRECTSYMBOL, \IDCOOL, y_i^{(i)}  \rtuple$  to  the nodes in $\Ss_0$     \label{line:OciorCOOLPh3B}  \quad\quad   \quad   \quad\quad   \quad    \quad\quad   \quad  \quad\quad   \quad   \quad  \quad  \emph{//  broadcast updated symbol} 
	
\State update  $y_j^{(j)}\gets y$ if receiving  message  $\ltuple \CORRECTSYMBOL, \IDCOOL, y   \rtuple$ from $S_j, \forall j\in \Ss_0$       \label{line:OciorCOOLPh3C} \quad\quad   \emph{// update coded symbol}

\State  $\Me^{(i)} \gets \ECCDec(\networksizen,  \kencode, [y_1^{(1)}, y_2^{(2)}, \cdots, y_{n}^{(n)}]  )$  \label{line:OciorCOOLPh3End}    \quad \quad \quad \quad\quad \quad \quad \quad\quad\quad\quad\quad\quad\quad\quad\quad  \emph{// ECC decoding  with updated symbols}

 \EndIf

\State  $\Output$  $\Me^{(i)}$ and $\terminate$ 	       \label{line:OciorCOOLterminateB}
  
\end{algorithmic}
\end{algorithm}

\subsection{Analysis of $\OciorCOOL$}    \label{sec:AnalysisOciorCOOL}

This proposed $\OciorCOOL$ uses three phases, while the previous  $\COOL$ protocol uses four phases \cite{Chen:2020arxiv, ChenDISC:21}.  
We will prove that, even with less number of phases,  $\OciorCOOL$   still guarantees the termination, validity, and consistency properties. 
This proof will follow the original definitions in \cite{Chen:2020arxiv, ChenDISC:21} and will use some results in \cite{Chen:2020arxiv, ChenDISC:21}.

For the ease of notation, we use $\Ry_i^{[p]}$ to denote the value of $\Ry_i$ updated in Phase~$p$, and  use $\Lk_i^{[p]} (j)$   to denote the values of $\Lk_i (j)$ updated in Phase~$p$, for $p\in \{1,2\}$. 
 $\Fc$  is defined as  the set of  indices of  all  dishonest nodes.  \emph{In the analysis} we just focus on the case with  $| \Fc |   = t$. It is worth noting that,   the easier case with  $| \Fc |   = t'$ for $t' < t$   is indistinguishable from the case with $| \Fc |   = t$ in which $t- t'$ out of $t$ dishonest nodes act normally like honest nodes. Therefore, if a protocol  guarantees the  termination, validity, and consistency properties in the extreme case with  $| \Fc |   = t$, it   also guarantees those properties in the easier case with $| \Fc | < t$.

We define some groups of honest nodes as
  \begin{align}
 \Ac_{\Lin} \defeq &  \{  i:    \Me_i =  \Meg_{\Lin},  \  i \notin  \Fc  , \ i \in [1:n]\}, \quad   \Lin \in [1 : \gn]     \label{eq:Aell00}   \\
\Ac_{\Lin}^{[p]} \defeq&   \{  i:  \Ry_i^{[p]} =1, \Me_i =  \Meg_{\Lin},  \  i \notin  \Fc, \ i \in [1:n]\}, \quad   \Lin \in [1 : \gn^{[p]}], \quad  p\in \{1,2\}      \label{eq:Aell}   \\
\Bc^{[p]} \defeq  & \{  i:  \Ry_i^{[p]} =0, \  i \notin  \Fc, \ i \in [1:n] \}, \quad  p\in \{1,2\}    \label{eq:Bdef01} 
 \end{align} 
 for  some different non-empty $\ell$-bit  values $\Meg_{1}, \Meg_{2}, \cdots, \Meg_{\gn}$ and some non-negative integers $\gn, \gn^{[1]},\gn^{[2]}$  such that $ \gn^{[2]} \leq \gn^{[1]} \leq \gn$.  
Group $\Ac_{\Lin}$ (and Group $\Ac_{\Lin}^{[p]}$) can be divided into some possibly overlapping sub-groups  defined as 
    \begin{align}
 \Ac_{\Lin,j} \defeq   & \{  i:  \   i\in  \Ac_{\Lin},   \   \hv_i^\T  \Meg_{\Lin}  = \hv_i^\T  \Meg_j \} , \quad  j\neq \Lin ,   \  j, \Lin \in [1: \gn]   \label{eq:Alj}  \\     
 \Ac_{\Lin,\Lin} \defeq  &  \Ac_{\Lin}   \setminus  \{\cup_{j=1, j\neq \Lin}^{\gn}\Ac_{\Lin,j}\}   , \quad \quad \quad\quad \Lin \in [1: \gn]   \label{eq:All}       \\
 \Ac_{\Lin,j}^{[p]} \defeq   & \{  i:  \   i\in  \Ac_{\Lin}^{[p]},   \   \hv_i^\T  \Meg_{\Lin}  = \hv_i^\T  \Meg_j \} , \quad  j\neq \Lin ,   \  j, \Lin  \in [1: \gn^{[p]}], \quad  p\in \{1,2\}   \label{eq:Alj11}  \\     
 \Ac_{\Lin,\Lin}^{[p]} \defeq  &  \Ac_{\Lin}^{[p]}   \setminus  \{\cup_{j=1, j\neq \Lin}^{\gn^{[p]}}\Ac_{\Lin,j}^{[p]}\}   , \quad\quad\quad\quad  \Lin \in [1: \gn^{[p]}], \quad  p\in \{1,2\}   \label{eq:All11}       
 \end{align}  
 where $ \hv_i$ is the encoding vector of error correction code such that the $i$th encoded symbol is computed as $\EncodedSymbol_i =  \hv_i^\T \MVBAInputMsg$, given the input  vector $\MVBAInputMsg$, for $i\in [1:n]$.

The main results of  $\OciorCOOL$ are summarized in the following Theorems~\ref{thm:OCOOLterminate}-\ref{thm:OciorCOOLPerformance}.   
Theorems~\ref{thm:OCOOLterminate}-\ref{thm:OCOOLconsistency} reveals that, given $n\geq 3t+1$, the  termination, validity and consistency  conditions are all satisfied in all executions (\emph{error-free}).  
Theorems~\ref{thm:OCOOLterminate}-\ref{thm:OCOOLconsistency} hold true without using any cryptographic assumptions (\emph{signature-free}).  Furthermore, Theorems~\ref{thm:OCOOLterminate}-\ref{thm:OCOOLconsistency}  hold true even if the adversary  has unbounded computational power (\emph{information-theoretic secure}).

 \begin{theorem}  [Termination]  \label{thm:OCOOLterminate}
Given $n\geq 3t+1$,  if all  honest nodes receive their inputs, then  every  honest node  eventually  outputs a message and terminates   in $\OciorCOOL$. 
\end{theorem}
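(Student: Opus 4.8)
The plan is to exploit the synchronous, finite-round structure of $\OciorCOOL$: since the protocol is a finite sequential composition of sub-routines, each of which individually halts within a bounded number of rounds, termination follows by tracing a single honest node through the three stages of Algorithm~\ref{algm:OciorCOOL}. I would establish termination stage by stage, rather than by reasoning about any global invariant, and I would not need consistency of any intermediate value for this particular property.

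First I would argue that the combined Phase~1/Phase~2 invocation of $\OciorBUA$ (Line~\ref{line:OciorCOOLOciorBUA}, detailed in Algorithm~\ref{algm:OciorBUA}) halts for every honest node. That sub-routine consists only of a fixed number of send-and-receive steps, namely exchanging coded symbols and exchanging success indicators in Phase~1, and a single conditional exchange of updated indicators in Phase~2, each followed by purely local computation. In the synchronous model every such step completes within one round, so after a bounded number of rounds every honest node produces the output tuple and, in particular, a well-defined vote $\Vr_i$. This guarantees that every honest node supplies an input to the subsequent binary agreement.

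Next I would invoke the binary BA call $\BBA[\IDCOOL](\Vr_i)$. Because all honest nodes have just produced inputs $\Vr_i$, the termination property of the binary BA protocol of \cite{BGP:92,CW:92} applies: every honest node outputs a bit $\BBAoutput\in\{0,1\}$ within $\BinaryBARound(1)$ rounds and halts that sub-protocol. I would then split on this bit. If $\BBAoutput=0$, the node outputs $\defaultvalue$ and terminates immediately at Line~\ref{line:OciorCOOLterminateA}. If $\BBAoutput=1$, the node enters Phase~3, which comprises at most one round of communication (nodes with $\Ry_i=0$ broadcast their recalibrated symbol to $\Ss_0$ and collect the $\CORRECTSYMBOL$ messages they receive) followed by a single local $\ECCDec$ call, after which it outputs $\Me^{(i)}$ and terminates at Line~\ref{line:OciorCOOLterminateB}.

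The one point requiring care, and the place where I expect the only (minor) obstacle, is showing that Phase~3 cannot deadlock. A node with $\Ry_i=0$ waits for $\CORRECTSYMBOL$ messages from the nodes in $\Ss_0$, some of which may be dishonest and stay silent. I would resolve this by appealing to the synchronous timing assumption: a node does not wait indefinitely but proceeds once the round elapses, decoding with whatever symbols $\{y_j^{(j)}\}$ it currently holds; whether a correct symbol actually arrives from a given $j\in\Ss_0$ is a matter of validity and consistency, not of termination. Since the $\ECCDec$ step is unconditional and always returns some value, the node always reaches its \Output\ statement. Composing the three bounded stages then shows that every honest node outputs and terminates within a constant number of phases plus $\BinaryBARound(1)$ rounds, which proves the theorem.
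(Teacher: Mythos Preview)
Your proposal is correct and follows essentially the same approach as the paper: trace an honest node through the finite sequence of phases and observe it must reach one of the two \textbf{output}/\textbf{terminate} lines. The paper's own proof is a one-liner that simply states honest nodes terminate together at Line~\ref{line:OciorCOOLterminateA} or Line~\ref{line:OciorCOOLterminateB}; your version is a more careful, phase-by-phase unfolding of exactly this observation, including the (correct) remark that synchronous rounds prevent Phase~3 from blocking on silent dishonest nodes.
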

\begin{proof}
In $\OciorCOOL$, if all  honest nodes receive their inputs,   all  honest nodes  eventually output  messages and terminate together in Line~\ref{line:OciorCOOLterminateA} of Phase~2  or   terminate together in Line~\ref{line:OciorCOOLterminateB} of Phase~3 in Algorithm~\ref{algm:OciorCOOL}. 
 \end{proof}

 \begin{theorem}  [Validity]  \label{thm:OCOOLvalidity}
Given $n\geq 3t+1$,   if  all  honest nodes input the same  value $\wv$, then  in $\OciorCOOL$  every honest node eventually outputs $\wv$.  
\end{theorem}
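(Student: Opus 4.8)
The plan is to establish validity purely by composing the validity guarantees of the two building blocks --- $\OciorBUA$ and the binary Byzantine agreement $\BBA$ --- and then tracing the control flow of Algorithm~\ref{algm:OciorCOOL}. First I would invoke the Validity property of $\OciorBUA$ from Definition~\ref{def:BUA}: since every honest node inputs the common value $\wv$, every honest node obtains the $\BUA$ output $(\wv, 1, 1)$ at Line~\ref{line:OciorCOOLOciorBUA}, i.e., $\Me^{(i)} = \wv$, $\Ry_{i} = 1$, and $\Vr_i = 1$ for every honest index $i$.

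Next I would argue about the binary agreement. Because $n \geq 3t+1$ guarantees that at least $n - t \geq 2t + 1$ nodes are honest, and each such node feeds $\Vr_i = 1$ into $\BBA$, the validity property of the binary BA protocol of \cite{BGP:92,CW:92} forces $\BBAoutput = 1$ at every honest node. Consequently the branch at Line~\ref{line:OciorCOOLterminateA}, which would output the default value $\defaultvalue$, is never taken by an honest node; instead every honest node proceeds to Phase~3 via Line~\ref{line:OciorCOOLGo2Ph3}. Then I would examine Phase~3: since $\Ry_{i} = 1$ for every honest node, the guard $\Ry_{i} = 0$ at Line~\ref{line:OciorCOOLPh3Begin} fails, so no honest node performs the majority recalibration or the ECC re-decoding. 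Hence $\Me^{(i)}$ is left untouched at its $\OciorBUA$ value $\wv$, and each honest node outputs $\Me^{(i)} = \wv$ and terminates at Line~\ref{line:OciorCOOLterminateB}, as required.

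Since the argument is essentially a clean composition, the only real subtlety --- and the step I would double-check most carefully --- is the invocation of $\OciorBUA$ validity: I would want to confirm that this guarantee delivers the full triple $(\wv, 1, 1)$, and in particular the vote $\Vr_i = 1$, rather than merely $\Me^{(i)} = \wv$ and $\Ry_{i} = 1$. It is precisely this vote value that drives the binary BA toward output $1$ and thereby prevents premature termination at Phase~2; without it the chain would break at the $\BBA$ step. Everything else is a direct reading of the control flow, with the resilience condition $n \geq 3t+1$ entering only through the honest-majority count needed to apply binary BA validity.
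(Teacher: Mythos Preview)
Your proposal is correct and follows essentially the same route as the paper: both arguments show that under a common honest input $\wv$ every honest node ends Phase~2 with $\Ry_i=1$ and $\Vr_i=1$, so the binary BA outputs $1$, Phase~3 is entered, the recalibration block is skipped, and $\Me^{(i)}=\wv$ is output at Line~\ref{line:OciorCOOLterminateB}. The only cosmetic difference is that you invoke the abstract $\BUA$ Validity property of Definition~\ref{def:BUA}, whereas the paper unpacks that step by directly tracing Lines~\ref{line:OciorBUASIOne} and \ref{line:OciorBUAPh2Begin}--\ref{line:OciorBUAPh2End} of Algorithm~\ref{algm:OciorBUA}; since the paper never states a standalone theorem that $\OciorBUA$ is a $\BUA$ protocol, the inline verification you flagged as the ``real subtlety'' is indeed what you would have to supply, and it is exactly what the paper's proof does.
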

\begin{proof}
If all honest nodes  input the same  value $\wv$,  then in Phase~1 each honest node eventually sets its success indicator  as $1$ (see Line~\ref{line:OciorBUASIOne} of Algorithm~\ref{algm:OciorBUA}); and then in Phase~2 each honest node eventually keeps its success indicator  as $1$ (see Lines~\ref{line:OciorBUAPh2Begin}-\ref{line:OciorBUAPh2End} of Algorithm~\ref{algm:OciorBUA}). Thus, each honest node eventually goes to Phase 3 (see Line~\ref{line:OciorCOOLGo2Ph3} of Algorithm~\ref{algm:OciorCOOL})  and then directly  jumps  to Line~\ref{line:OciorCOOLterminateB} of Algorithm~\ref{algm:OciorCOOL}. Therefore, if  all  honest nodes input the same  value $\wv$, then every honest node eventually outputs $\wv$.    
\end{proof}

\begin{theorem}  [Consistency]   \label{thm:OCOOLconsistency}
Given $n\geq 3t+1$,  all  honest nodes  eventually reach  the same agreement in $\OciorCOOL$. 
\end{theorem}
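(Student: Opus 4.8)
The plan is to use the consistency of the binary Byzantine agreement $\BBA$ to split the argument into two cases according to its common output $\BBAoutput$, and then show that in each case all honest nodes produce an identical output. Since $\BBA$ is a consistent binary BA, every honest node obtains the same $\BBAoutput \in \{0,1\}$. If $\BBAoutput = 0$, then by Line~\ref{line:OciorCOOLterminateA} of Algorithm~\ref{algm:OciorCOOL} every honest node outputs $\defaultvalue$ and terminates, so consistency holds trivially. It therefore remains to treat the case $\BBAoutput = 1$, in which every honest node proceeds to Phase~3.

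First I would pin down a single value $\wv^{\star}$ that all honest nodes with success indicator $1$ must hold. Because $\BBAoutput = 1$, the validity of $\BBA$ forces at least one honest node to have voted $\Vr_i = 1$; by Line~\ref{line:OciorCOOLVote} of Algorithm~\ref{algm:OciorBUA} this node observed $|\Ss_1| \geq 2t+1$, and since at most $t$ indices of $\Ss_1$ are dishonest, at least $t+1$ honest nodes end Phase~2 with $\Ry = 1$. Applying the Unique Agreement property of $\OciorBUA$ (Definition~\ref{def:BUA}) to any two such honest nodes shows that they share a common message, which I call $\wv^{\star}$. Honest nodes with $\Ry_i = 1$ skip the $\Ry_i=0$ branch in Phase~3 and output $\Me^{(i)} = \wv^{\star}$ directly at Line~\ref{line:OciorCOOLterminateB}.

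Next I would show that every honest node with $\Ry_i = 0$ also outputs $\wv^{\star}$, treating Phase~3 as an instance of honest-majority distributed multicast (Definition~\ref{def:DM}). Since honest indicators are broadcast in Phases~1--2, all honest nodes agree on the honest part of $\Ss_0$ and $\Ss_1$; in particular each honest node's $\Ss_1$ contains the $\geq t+1$ honest nodes holding $\wv^{\star}$. For any position $j$, these honest members of $\Ss_1$ contribute the true coded symbol $y_j^{\star} = \hv_j^{\T}\wv^{\star}$, while the $\leq t$ dishonest members contribute arbitrary values; hence the majority rule at Line~\ref{line:OciorCOOLPh3A} resolves to $y_j^{\star}$. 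After the $\CORRECTSYMBOL$ exchange at Lines~\ref{line:OciorCOOLPh3B}--\ref{line:OciorCOOLPh3C}, the decoding vector $[y_1^{(1)},\dots,y_n^{(n)}]$ held by an honest node equals $y_j^{\star}$ at every honest position $j$ (the honest $\Ss_1$ entries were already correct from Phase~1, and the honest $\Ss_0$ entries are corrected to $y_j^{\star}$), so it differs from the true codeword of $\wv^{\star}$ in at most $t$ positions. Because $k = \lfloor t/5\rfloor + 1$ gives $2t + k \leq n$ when $n \geq 3t+1$, the $(n,k)$ decoder at Line~\ref{line:OciorCOOLPh3End} corrects these $\leq t$ errors and returns $\wv^{\star}$. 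Thus all honest nodes output $\wv^{\star}$, establishing consistency.

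The main obstacle is the bookkeeping in this last step: verifying that honest nodes have sufficiently consistent views of $\Ss_0$ and $\Ss_1$ for the $\CORRECTSYMBOL$ messages to reach every honest node with $\Ry_i=0$, that the majority at Line~\ref{line:OciorCOOLPh3A} genuinely resolves to the true symbol $y_j^{\star}$ (requiring the honest members of $\Ss_1$ to strictly outnumber any colluding dishonest set), and that the corrupted positions number at most $t$ so the decoding budget $2t+k\leq n$ is respected. These are precisely the calibration and error-counting arguments from the Phase~4 analysis of the original $\COOL$ protocol, whose lemmas I would invoke to discharge the details.
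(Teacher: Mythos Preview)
Your proposal is correct and follows essentially the same approach as the paper: split on the $\BBA$ output, argue that $\BBAoutput=1$ forces at least $t+1$ honest nodes to end Phase~2 with $\Ry=1$ (the paper's Lemma~\ref{lm:ABAoutput1}), that all such nodes hold a common message (the paper's Lemma~\ref{lm:uniquegroup}, which is exactly the Unique-Agreement property you invoke via Definition~\ref{def:BUA}), and that Phase~3 is an $\HMDM$ instance in which majority calibration plus $(n,k)$ decoding recovers that message for the $\Ry_i=0$ nodes (the paper's Lemma~\ref{lm:OciorCOOLHMDM}). The only cosmetic difference is that the paper cites Lemma~\ref{lm:uniquegroup} (i.e., $\gn^{[2]}\le 1$) rather than appealing to the $\BUA$ definition directly; the substance is identical.
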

\begin{proof}    
 In  $\OciorCOOL$, if the output of  binary BA ($\BBA$) is $0$, then every honest node outputs  the default value $\defaultvalue$  (see Line~\ref{line:OciorCOOLterminateA} of Algorithm~\ref{algm:OciorCOOL}), satisfying the consistency condition.  
 In the following, we will focus on the case where the output of $\BBA$ is $1$.  
 
 From Lemma~\ref{lm:ABAoutput1},  if the output of  $\BBA$    is $1$, then at least $t+1$ honest nodes have sent out success indicators as $1$ in Phase~2.  
 Furthermore, from  Lemma~\ref{lm:uniquegroup}, it holds true that $\gn^{[2]} \leq 1$, i.e.,  all honest nodes that have sent out success indicators as $1$ in Phase~2 should belong to the same group $\Ac_1^{[2]}$, if any,  where $\Ac_1^{[2]} \defeq   \{  i:  \Ry_i^{[2]} =1, \Me_i =  \Meg_{1},  \  i \notin  \Fc, \ i \in [1:n]\}$   for some $\Meg_{1}$  (see \eqref{eq:Aell}). 
 By combining the results of Lemma~\ref{lm:ABAoutput1} and Lemma~\ref{lm:uniquegroup},   if the output of  $\BBA$    is $1$, then the following conclusions are true 
    \begin{align}
   \gn^{[2]} &= 1      \label{eq:A1g2tConA}     \\
   |\Ac_1^{[2]}|  &\geq    t+1    \label{eq:A1g2tConB}   \\    
       \Me_i &=\Meg_{1} , \quad  \forall i \in  \Ac_1^{[2]} . \label{eq:A1g2tConC} 
 \end{align}
If the output of  $\BBA$    is $1$, and given the conclusions in \eqref{eq:A1g2tConA}-\eqref{eq:A1g2tConC}, then from Lemma~\ref{lm:OciorCOOLHMDM} it is guaranteed that  every honest node eventually outputs the same value $\Meg_{1}$ in Phase~$3$.  
 \end{proof}

\begin{theorem}  [Communication, Round, and Resilience]   \label{thm:OciorCOOLPerformance}
The proposed $\OciorCOOL$  is an error-free signature-free information-theoretic-secure  BA  protocol that achieves the consensus on an $\Lh$-bit message with optimal resilience of $n\geq  3t+1$,  asymptotically optimal round complexity of $O(t)$ rounds, and asymptotically  optimal communication complexity of  $O(\max\{n\Lh, n t \log t \})$  bits, simultaneously.  
\end{theorem}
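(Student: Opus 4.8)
The plan is to handle the three qualitative attributes (error-free, signature-free, information-theoretic-secure) and the optimal resilience in one stroke, since they are already delivered by the preceding analysis, and then to establish the round and communication bounds by a direct phase-by-phase count against Algorithms~\ref{algm:OciorBUA} and \ref{algm:OciorCOOL}.

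\textbf{Correctness, security, and resilience.} First I would observe that the three defining properties of a BA protocol are exactly Theorems~\ref{thm:OCOOLterminate}--\ref{thm:OCOOLconsistency}, each proved under $n\ge 3t+1$. Because those proofs use neither signatures nor hashing, place no bound on the adversary's computation, and hold in \emph{every} execution rather than merely with high probability, they simultaneously certify that $\OciorCOOL$ is error-free, signature-free, and information-theoretic-secure. The resilience $n\ge 3t+1$ is optimal by the classical impossibility of BA whenever $n\le 3t$ \cite{PSL:80,LSP:82}.

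\textbf{Round complexity.} Next I would read the round count straight off the pseudocode. Phase~1 of $\OciorBUA$ spends one round exchanging coded symbols and one round exchanging success indicators; Phase~2 spends one round exchanging the updated indicators; and Phase~3 of $\OciorCOOL$ spends one round multicasting the corrected symbols to the nodes in $\Ss_0$. The only other communication is the single-bit consensus $\BBA$, costing $\BinaryBARound(1)$ rounds. Hence the total is $4+\BinaryBARound(1)$, one fewer than the $5+\BinaryBARound(1)$ of $\COOL$ \cite{ChenDISC:21}; instantiating $\BBA$ with \cite{BGP:92,CW:92} gives $\BinaryBARound(1)=O(t)$ and therefore $O(t)$ rounds overall, matching the classical $\Omega(t)$ lower bound for deterministic agreement.

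\textbf{Communication complexity.} For the bit count I would fix the code rate through $k=\lfloor t/5\rfloor+1=\Theta(t)$ and record the symbol size: an $(n,k)$ RS symbol lives in $GF(2^c)$ with $c=\max\{\lceil \Lh/k\rceil,\lceil\log_2(n+1)\rceil\}=\Theta(\max\{\Lh/t,\log n\})$. In Phase~1 each node sends one symbol pair to each of the $n$ recipients, i.e.\ $O(n^2)$ symbols, contributing $O(n^2\max\{\Lh/t,\log n\})$ bits; under the optimal-resilience scaling $t=\Theta(n)$ (so $\log n=\Theta(\log t)$) this is exactly $O(\max\{n\Lh,nt\log t\})$. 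The success-indicator exchanges of Phases~1 and 2 add only $O(n^2)$ one-bit messages, the Phase~3 multicast among $\Ss_0$ sends at most $O(n^2)$ further symbols, and the single-bit $\BBA$ contributes a cost independent of $\Lh$; all three are absorbed into the same bound. Summing the phases yields $O(\max\{n\Lh,nt\log t\})$, matching the lower bounds $\Omega(n\Lh)$ and $\Omega(nt)$ \cite{ChenDISC:21}. Equivalently, since $\OciorCOOL$ merely deletes a phase of $\COOL$ and performs a subset of its transmissions, the $\COOL$ bound carries over.

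\textbf{Main obstacle.} I expect the delicate step to be the communication bookkeeping rather than the round count: one must verify that the single choice $k=\Theta(t)$ simultaneously keeps every all-to-all symbol exchange within the $n\Lh$ budget in the large-message regime and within the $nt\log t$ budget in the small-message regime, and, crucially, that the $\Lh$-independent bit complexity of the single-bit $\BBA$ is subsumed by the $O(nt\log t)$ term under the optimal-resilience scaling. A secondary point I would check is that performing the corrected-symbol multicast one phase earlier than the analogous step of $\COOL$ does not inflate its per-phase cost; its \emph{correctness} content is already supplied by Theorem~\ref{thm:OCOOLconsistency}, leaving only the bit count to confirm.
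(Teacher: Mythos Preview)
Your proposal is correct and follows the same line as the paper's own proof: invoke Theorems~\ref{thm:OCOOLterminate}--\ref{thm:OCOOLconsistency} for the qualitative properties and resilience, observe that the round count is dominated by the binary BA subroutine, and argue the communication bound by comparison with $\COOL$. The paper's proof is in fact far terser than yours---it dispatches the communication claim in a single sentence (``similar to that of the $\COOL$ protocol'') without the per-phase symbol accounting you supply---so your explicit bookkeeping, including the remark that the stated $O(\max\{n\Lh,nt\log t\})$ bound is calibrated to the optimal-resilience regime $t=\Theta(n)$, is additional detail rather than a different approach.
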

\begin{proof}
Theorems~\ref{thm:OCOOLterminate}-\ref{thm:OCOOLconsistency} reveals that, given $n\geq 3t+1$, the  termination, validity and consistency  conditions are all satisfied in all executions  (\emph{error-free}) in $\OciorCOOL$.  
The round complexity of $\OciorCOOL$ is dominated by that of the binary BA algorithm, which is $O(t)$ rounds.
The communication complexity of  $\OciorCOOL$ is $O(\max\{n\Lh, n t \log t \})$, similar to that of the $\COOL$ protocol  \cite{Chen:2020arxiv, ChenDISC:21}.
\end{proof}

\subsection{Some lemmas} \label{sec:OciorCOOLsomelemma}

Below we provide some lemmas used in our proofs. Note that some lemmas are directly from \cite{Chen:2020arxiv}.

\begin{lemma}    \label{lm:ABAoutput1}
In $\OciorCOOL$,   if the output of the  binary BA  is $1$, then at least $t+1$ honest nodes have sent out success indicators as $1$ in Phase~2. 
\end{lemma}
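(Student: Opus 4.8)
The plan is to reduce the claim to the validity property of the binary BA and then run a counting argument on the set $\Ss_1$ of a single honest node whose vote equals $1$.

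First I would invoke the validity of the binary BA protocol of \cite{BGP:92,CW:92}. Since the $\BBA$ output is $1$, the honest inputs cannot all be $0$ (otherwise validity would force the output to be $0$), so at least one honest node, say $S_i$, must have fed input $\Vr_i = 1$ into the $\BBA$. By the vote-setting step (Line~\ref{line:OciorCOOLVote} of Algorithm~\ref{algm:OciorBUA}), $\Vr_i = 1$ holds only when $|\Ss_1| \geq 2t+1$ in $S_i$'s local view at the end of Phase~2. Thus the whole problem is localized to a single honest node whose post-Phase-2 set $\Ss_1$ is large.

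Next I would bound the honest contribution to this $\Ss_1$. Since we work in the case $|\Fc| = t$, at most $t$ of the indices recorded in $\Ss_1$ can belong to dishonest nodes, so $\Ss_1$ contains at least $(2t+1) - t = t+1$ honest indices. It remains to certify that each such honest index genuinely corresponds to a Phase-2 success indicator of $1$.

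The main step, and the one requiring care, is to argue that $S_i$'s bookkeeping of $\Ss_1$ faithfully reflects the Phase-2 success indicators of honest nodes, i.e. that every honest $j \in \Ss_1$ has $\Ry_j^{[2]} = 1$. I would do this by tracing the $\Ss_1$ update rule: $S_i$ places $j$ into $\Ss_1$ upon receiving the Phase-1 message $\langle \SIPhOne, \IDCOOL, \Ry_j \rangle$ with $\Ry_j = 1$, and removes $j$ only upon a Phase-2 downgrade message $\langle \SIPhTwo, \IDCOOL, 0 \rangle$. An honest node reports its indicator truthfully and emits a Phase-2 message precisely when it lowers its indicator from $1$ to $0$ (Lines~\ref{line:OciorBUAPh2Begin}--\ref{line:OciorBUAPh2End} of Algorithm~\ref{algm:OciorBUA}); hence for honest $j$ we have $j \in \Ss_1$ in $S_i$'s view if and only if $\Ry_j^{[1]} = 1$ and $j$ did not downgrade, which is exactly the condition $\Ry_j^{[2]} = 1$. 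Consequently the $\geq t+1$ honest indices in $S_i$'s $\Ss_1$ are exactly honest nodes that declared their success indicator as $1$ and retained it through Phase~2, which is the desired conclusion. The only subtlety to watch is that dishonest nodes may send inconsistent Phase-1/Phase-2 indicators to different recipients, but this never inflates the honest count in $S_i$'s $\Ss_1$, so the bound $t+1$ on honest members is unaffected.
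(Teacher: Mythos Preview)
Your proof is correct and follows essentially the same route as the paper: invoke validity of the binary BA to locate an honest node with $\Vr_i=1$, read off $|\Ss_1|\geq 2t+1$ from Line~\ref{line:OciorCOOLVote}, and subtract the at most $t$ dishonest indices. Your additional paragraph tracing the $\Ss_1$ update rule to argue that honest $j\in\Ss_1$ implies $\Ry_j^{[2]}=1$ is a bit of extra rigor the paper leaves implicit, but it is not a different method.
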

 \begin{proof}
 If the output of $\BBA$ is $1$, then at least one   honest node has voted  $1$ in Line~\ref{line:OciorCOOLVote}. Otherwise, $\BBA$ would deliver an output of $0$. 
 When one honest node has voted as $\Vr_i= 1$, it means that this node has seen $|\Ss_1|  \geq 2t+1$  (see Line~\ref{line:OciorCOOLGo2Ph3}), which reveals that at least $t+1$ honest nodes have sent out success indicators as ones in Phase~2, where $\Ss_1$ denotes the indexes of nodes who sent   their success indicators as $1$. 
  \end{proof}

\begin{lemma}    \label{lm:OciorCOOLHMDM}
In $\OciorCOOL$,  if the output of  $\BBA$    is $1$, and given the conclusions in \eqref{eq:A1g2tConA}-\eqref{eq:A1g2tConC}, then every honest node eventually outputs the same value $\Meg_{1}$ in Phase~$3$. 
\end{lemma}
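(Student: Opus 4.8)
The plan is to follow the two honest sub-populations identified by \eqref{eq:A1g2tConA}--\eqref{eq:A1g2tConC} and to track the diagonal coded vector $[y_1^{(1)},\ldots,y_{n}^{(n)}]$ that is fed to $\ECCDec$ in Line~\ref{line:OciorCOOLPh3End}. By \eqref{eq:A1g2tConA} the honest nodes with $\Ry_i=1$ are exactly the members of $\Ac_1^{[2]}$, and by \eqref{eq:A1g2tConC} each such node still holds $\Me^{(i)}=\Me_i=\Meg_1$; since the Phase-3 guard $\Ry_i=0$ fails for them, they skip the update block and fall through to Line~\ref{line:OciorCOOLterminateB}, outputting $\Meg_1$ directly. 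It remains to show that every honest node in $\Bc^{[2]}$ (those with $\Ry_i=0$) also decodes $\Meg_1$.

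First I would verify the majority step of Line~\ref{line:OciorCOOLPh3A}. For honest $i\in\Bc^{[2]}$, every honest $j\in\Ss_1$ contributes $y_i^{(j)}=\hv_i^\T\Me_j$ to the multiset $\{y_i^{(j)}:j\in\Ss_1\}$. Because $\gn^{[2]}=1$, every honest node in $\Ss_1$ lies in $\Ac_1^{[2]}$ and therefore carries $\Me_j=\Meg_1$, contributing the correct symbol $\hv_i^\T\Meg_1$; by \eqref{eq:A1g2tConB} there are at least $t+1$ such contributions, whereas the at most $t$ dishonest members of $\Ss_1$ supply at most $t$ spurious values. Hence $\hv_i^\T\Meg_1$ is the unique value of maximal multiplicity, so $\mathrm{Majority}$ sets $y_i^{(i)}=\hv_i^\T\Meg_1$.

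Next I would show that after the one-round exchange in Lines~\ref{line:OciorCOOLPh3B}--\ref{line:OciorCOOLPh3C} every honest node in $\Bc^{[2]}$ holds a diagonal vector that is correct in all honest coordinates. Any two honest nodes in $\Bc^{[2]}$ each broadcast their success indicator as $0$, so each appears in the other's $\Ss_0$ and, in the synchronous setting, delivery happens within the single round; by the previous paragraph each broadcasts the correct symbol, so every honest coordinate $j\in\Ss_0$ is set to $\hv_j^\T\Meg_1$. Honest coordinates $j\in\Ss_1$ are never touched in Phase~3 and already equal $\hv_j^\T\Meg_1$ from the Phase-1 exchange. Thus in the vector passed to $\ECCDec$ the only possibly corrupted coordinates are the at most $t$ dishonest ones. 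Finally, with $n'=n$ observations, at most $t$ errors, and $k=\lfloor t/5\rfloor+1$, the decoding condition $2t+k\le 3t+1\le n$ holds, so the $(n,k)$ code corrects the $\le t$ errors and returns the encoding of $\Meg_1$; therefore each honest node in $\Bc^{[2]}$ outputs $\Meg_1$, completing the argument.

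The step I expect to be the main obstacle is the error bookkeeping in the third paragraph: making it airtight requires simultaneously using $\gn^{[2]}=1$ to rule out honest ``wrong-message'' symbols, confirming that the differing local views of $\Ss_0$ and $\Ss_1$ held by distinct honest nodes can corrupt only dishonest coordinates, and checking that honest $\Bc^{[2]}$ nodes genuinely reach one another in the single multicast round. Only once these are pinned down does the clean picture ``at most $t$ errors, all in dishonest positions'' hold uniformly for every honest decoder, which is exactly what the error-correction guarantee needs.
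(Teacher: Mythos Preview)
Your proposal is correct and follows essentially the same route as the paper's proof: both argue that members of $\Ac_1^{[2]}$ output $\Meg_1$ directly, that the majority in Line~\ref{line:OciorCOOLPh3A} yields $y_i^{(i)}=\hv_i^\T\Meg_1$ for each honest $i\in\Bc^{[2]}$ because $\Ac_1^{[2]}\subseteq\Ss_1$ with $|\Ac_1^{[2]}|>|\Fc|$, and that after the Phase-3 exchange every honest coordinate in the diagonal vector equals $\hv_j^\T\Meg_1$, leaving at most $t$ corrupted (dishonest) positions for the $(n,k)$ decoder to correct. The bookkeeping you flag as the ``main obstacle'' is handled in the paper exactly as you sketch---in the synchronous model honest indices appear consistently in every honest node's $\Ss_0$ or $\Ss_1$, so only dishonest coordinates can disagree---and your explicit check $2t+k\le 3t+1\le n$ is a slightly sharper statement of what the paper asserts informally.
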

 \begin{proof}
If the output of  $\BBA$    is $1$, then all honest nodes go to Phase~$3$ (see Line~\ref{line:OciorCOOLGo2Ph3} of Algorithm~\ref{algm:OciorCOOL}).  In this case,  all nodes within  $\Ac_1^{[2]}$ directly   jumps  to Line~\ref{line:OciorCOOLterminateB} of Algorithm~\ref{algm:OciorCOOL} and output $\Meg_{1}$. 
With the conclusions in \eqref{eq:A1g2tConA}-\eqref{eq:A1g2tConC}, it can be shown that all honest nodes outside   $\Ac_1^{[2]}$ will output $\Meg_{1}$ as well,  thanks to the honest-majority distributed multicast ($\HMDM$)  protocol of Phase~$3$ (see Lines~\ref{line:OciorCOOLPh3Begin}-\ref{line:OciorCOOLPh3End} of Algorithm~\ref{algm:OciorCOOL}). 
Phase~$3$ is simply an honest-majority distributed multicast protocol. In an honest-majority distributed multicast protocol defined in  Definition~\ref{def:DM},  if  all  honest senders input the same message $\Meg_{1}$ and   at least   $t+1$ senders are honest, then  every honest node eventually outputs $\Meg_{1}$.

Specifically, in Line~\ref{line:OciorCOOLPh3A} of Algorithm~\ref{algm:OciorCOOL},   each honest node with  success indicator being $0$, e.g.,  Node~$i$ with $ \Ry_{i}^{[2]} = 0$,  updates the value of $y_i^{(i)}$  as  $y_i^{(i)}   \leftarrow  \text{Majority}( \{y_i^{(j)}:   j \in \Ss_1\})   = \hv_i^\T   \Meg_{1}$ based on the majority rule, due to the conclusions $\Ac_1^{[2]} \subseteq \Ss_1$  and $ |\Ac_1^{[2]}|  >   |\Fc|$ (see conclusions in \eqref{eq:A1g2tConA}-\eqref{eq:A1g2tConC}). 
After this step,  for any honest Node~$i$, the value of $y_i^{(i)}$  becomes $y_i^{(i)} = \hv_i^\T   \Meg_{1}$ that  is encoded with $\Meg_{1}$. 
Then,  in Line~\ref{line:OciorCOOLPh3B},   each Node~$i$ with $ \Ry_{i}^{[2]} = 0$  sends the  $\ltuple \CORRECTSYMBOL, \IDCOOL, y_i^{(i)}  \rtuple$  to the nodes within $\Ss_0$, where $\Ss_0  =   \{ j:  \Ry_{j}=0,   j \in [1:n ]\}$.  
After the step in Line~\ref{line:OciorCOOLPh3C},    each honest node within  $\Ss_0$  updates  $y_j^{(j)}$ if receiving  message  $\ltuple \CORRECTSYMBOL, \IDCOOL, y_j^{(j)} \rtuple$ from $S_j, \forall j\in \Ss_0$.     
After this step, for each honest node within  $\Ss_0$,   it is guaranteed that each  symbol   $y_j^{(j)}$ sent from an honest node is encoded with $\Meg_{1}$, i.e., $y_j^{(j)} = \hv_j^\T   \Meg_{1}$, for any $j\notin \Fc$.  
Then, in Line~\ref{line:OciorCOOLPh3End},    each honest node within  $\Ss_0$  decodes a message based on the updated symbols $\{y_1^{(1)}, y_2^{(2)}, \cdots, y_n^{(n)}\}$, where  at least $n-t$ symbols are encoded with  $\Meg_{1}$.  Since error correction code can correct up to $t$ errors, each  each honest Node~$i$ within  $\Ss_0$  should decode the message as $\Meg_{1}$, given that  the number of  symbols that are not encoded with the message $\Meg_{1}$ is no more than $t$.  Thus, in this case every honest node eventually outputs the same message $\Meg_{1}$. 
 \end{proof}

  \begin{lemma}    \label{lm:uniquegroup}
For the proposed  $\OciorCOOL$ with $n\geq 3t+1$, it holds true that $\gn^{[2]} \leq 1$.
\end{lemma}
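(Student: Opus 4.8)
The plan is to argue by contradiction. Suppose $\gn^{[2]}\ge 2$, so that two distinct honest ``surviving'' groups coexist, say $\Ac_1^{[2]}$ and $\Ac_2^{[2]}$, carrying distinct values $\Meg_1\ne\Meg_2$ (using \eqref{eq:Aell}). The target is a counting contradiction: I will show that any value admitting an honest node with $\Ry^{[2]}=1$ must be held by at least $n-2t$ honest nodes, i.e. $|\Ac_\Lin|\ge n-2t$ for $\Lin\in\{1,2\}$ (with $\Ac_\Lin$ as in \eqref{eq:Aell00}). Since nodes of distinct values lie in disjoint honest groups and there are only $n-t$ honest nodes in total, two surviving values would force $2(n-2t)\le |\Ac_1|+|\Ac_2|\le n-t$, i.e. $n\le 3t$, contradicting $n\ge 3t+1$. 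Hence at most one value survives and $\gn^{[2]}\le 1$.

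First I would record the coding facts. Write $C_\Lin\defeq\ECCEnc(n,k,\Meg_\Lin)$ for the codeword of $\Meg_\Lin$, whose $m$-th symbol is $\hv_m^\T\Meg_\Lin$. Since the $(n,k)$ code has minimum distance $n-k+1$, any two distinct codewords $C_\Lin,C_j$ agree in at most $k-1$ coordinates; consequently each boundary set $\Ac_{\Lin,j}$ of \eqref{eq:Alj} satisfies $|\Ac_{\Lin,j}|\le k-1$. The key structural fact is that a \emph{core} honest node $a\in\Ac_{\Lin,\Lin}$ (one whose own coordinate $a$ is not an agreement coordinate of $C_\Lin$ with any competing codeword) can form a valid Phase~1 link $\Lk_a(j)=1$ only with honest nodes holding the same value $\Meg_\Lin$: for honest $a$ (value $\Meg_\Lin$) and honest $j$ (value $\Meg_{\ell}$), the test $(y_a^{(j)},y_j^{(j)})=(y_a^{(a)},y_j^{(a)})$ forces $\hv_a^\T\Meg_{\ell}=\hv_a^\T\Meg_\Lin$, which for core $a$ is impossible unless $\ell=\Lin$. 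Next I would exploit the Phase~2 masking. If $\Ry_a^{[2]}=1$, then after Line~\ref{line:OciorBUAmskerror} the node still has $\sum_{j}\Lk_a(j)\ge n-t$ links, all pointing into $\Ss_1$; discarding the at most $t$ dishonest indices leaves at least $n-2t$ links to honest nodes, and for core $a$ every such honest neighbour carries value $\Meg_\Lin$. Hence $|\Ac_\Lin|\ge n-2t$, exactly the support bound needed above.

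The main obstacle is to guarantee that a surviving value actually contains a usable core witness, i.e. that its successful honest nodes are not all confined to the thin agreement boundaries $\cup_{j\ne\Lin}\Ac_{\Lin,j}$. This is precisely where the choice $k=\lfloor t/5\rfloor+1$ enters: each boundary set contributes at most $k-1=\lfloor t/5\rfloor$ nodes, and these boundary contributions must be shown to be dominated so that the \emph{clean} bound $|\Ac_\Lin|\ge n-2t$ holds (a weaker bound with a loss proportional to $k$ would not close the count at $n=3t+1$, since there the slack $n-3t$ is only $1$). I would discharge this by invoking the sub-group size estimates of \cite{Chen:2020arxiv} to bound the total boundary mass of a surviving group, thereby extracting a core node of value $\Meg_\Lin$ with $\Ry^{[1]}=1$, whose $\ge n-2t$ honest same-value Phase~1 neighbours furnish the support bound directly. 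Care is also needed with the up-to-$t$ dishonest nodes, which may present internally consistent symbols and thus appear in $\Ss_1$ and in the link counts; these are absorbed by the uniform ``$-t$'' in passing from $n-t$ links to $n-2t$ honest links, and they never inflate a competing honest group since they carry no honest value.
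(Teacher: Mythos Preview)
Your counting strategy---two surviving values each supported on $\ge n-2t$ honest nodes would force $2(n-2t)\le n-t$---is clean, but the bound $|\Ac_\Lin|\ge n-2t$ is precisely what cannot be obtained by the core-node argument you sketch. You correctly isolate the obstacle (a surviving group might consist entirely of boundary nodes) but your resolution is circular: to extract a core witness in $\Ac_\Lin^{[2]}$ you need $|\Ac_\Lin^{[2]}|$ to exceed the total boundary mass $\sum_{j\neq\Lin}|\Ac_{\Lin,j}^{[2]}|$, yet the sub-group estimates of \cite{Chen:2020arxiv} only give $|\Ac_{\Lin,j}|<k$ for each fixed $j$, and without an a~priori bound on the number of competing values $\gn^{[1]}$ (or $\gn$) the union can swamp a group whose size you have not yet controlled. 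You yourself note that any loss proportional to $k$ kills the count at $n=3t+1$; but the actual support bound obtainable from \cite{Chen:2020arxiv} is only $|\Ac_\Lin|\ge n-9t/4$ (via the graph lemma, \cite[Lemma~9]{Chen:2020arxiv}), which yields merely $\gn^{[2]}\le 2$, not $\le 1$.

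The paper's proof accepts this weaker support bound and proceeds in three stages that you bypass: (i) use $|\Ac_\Lin|\ge n-9t/4$ to get $\gn^{[2]}\le 2$ (Lemma~\ref{lm:eta2bound2}); (ii) show that $\gn^{[2]}=2$ forces $\gn^{[1]}=2$ (Lemma~\ref{lm:eta22212}); (iii) with only two Phase-1 groups present, bound $|\Uc_i^{[2]}|$ \emph{from above} by $|\Ac_{1,2}^{[1]}|+|\Ac_{2,1}^{[1]}|+|\Ac_{2,2}^{[1]}|+t\le 3(k-1)+t<n-t$ for every $i$ in the smaller group (Lemmas~\ref{lm:eta1231} and~\ref{lm:boundA22}), forcing $\Ry_i^{[2]}=0$ and hence $\gn^{[2]}\le 1$. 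The crucial new ingredient is Lemma~\ref{lm:boundA22}, which upper-bounds the core $|\Ac_{2,2}^{[1]}|\le 2(k-1)$---the opposite direction from what your argument needs. So the gap is real: the clean $n-2t$ support bound is not available, and the paper's route through $\gn^{[1]}=2$ plus an upper bound on link counts is not a stylistic choice but a necessity.
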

\begin{proof}
At first, from Lemma~\ref{lm:eta2bound2} it is concluded that   $\gn^{[2]} \leq 2$.
Next, we   argue that the case of $\gn^{[2]} = 2$ does not exist.   Let us assume that $\gn^{[2]} = 2$. 
Under the assumption of  $\gn^{[2]} = 2$,  it holds true that $\gn^{[1]} =   2$ (see  Lemma~\ref{lm:eta22212}). However, if $\gn^{[1]} = 2$  then it implies that $\gn^{[2]} \leq   1$ (see  Lemma~\ref{lm:eta1231}), which contradicts the assumption of  $\gn^{[2]} = 2$.
Therefore, the case of $\gn^{[2]} = 2$ does not exist, which, together with the result $\gn^{[2]} \leq 2$, concludes that $\gn^{[2]} \leq 1$.
\end{proof}

 \begin{lemma} \cite[Lemma 7]{Chen:2020arxiv}   \label{lm:sizeboundMatrix} 
For $\gn \geq \gn^{[1]} \geq 2$, the following inequalities  hold true 
    \begin{align}
  |\Ac_{\Lin,j}| + |\Ac_{j,\Lin}|  <  &  k,    \quad \forall   j \neq \Lin,  \  j, \Lin \in [1:\gn]    \label{eq: Aljb00}  \\
    |\Ac_{\Lin,j}^{[1]}| + |\Ac_{j,\Lin}^{[1]}|  <  &  k,  \quad \forall   j \neq \Lin,  \  j, \Lin \in [1:\gn^{[1]}]      \label{eq: Aljb01}  
 \end{align} 
 where $k$ is a parameter of  $(n,k)$ error correction code, which is set here as $k=       \lfloor   \frac{ t  }{5 }  \rfloor    +1$.
  \end{lemma}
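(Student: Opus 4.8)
The plan is to reduce both inequalities to a single linear-algebraic fact about the encoding vectors $\{\hv_i\}$ of the $(n,k)$ error correction code. First I would unpack the defining condition of the subgroups as a \emph{collision} condition: by \eqref{eq:Alj}, a node $i \in \Ac_{\Lin}$ lies in $\Ac_{\Lin,j}$ precisely when $\hv_i^\T \Meg_{\Lin} = \hv_i^\T \Meg_j$, i.e. when $\hv_i^\T(\Meg_{\Lin} - \Meg_j) = 0$. Writing $\bfd \defeq \Meg_{\Lin} - \Meg_j$, which is a nonzero vector in $\field^k$ because $\Meg_{\Lin}$ and $\Meg_j$ are distinct messages, I observe that both $\Ac_{\Lin,j}$ and $\Ac_{j,\Lin}$ are contained in the set $\{ i \notin \Fc : \hv_i^\T \bfd = 0 \}$ (the condition defining $\Ac_{j,\Lin}$ is the same orthogonality relation $\hv_i^\T\bfd=0$). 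Moreover these two sets are disjoint, since $\Ac_{\Lin}$ and $\Ac_j$ are disjoint (each honest node has a single input message). Hence $|\Ac_{\Lin,j}| + |\Ac_{j,\Lin}| = |\Ac_{\Lin,j} \cup \Ac_{j,\Lin}|$ is bounded by the number of indices $i$ whose encoding vector $\hv_i$ is orthogonal to the fixed nonzero vector $\bfd$.

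The key step is then to bound the number of encoding vectors orthogonal to a fixed nonzero $\bfd \in \field^k$ by $k-1$. For the Reed--Solomon (MDS) code employed here, the $\hv_i$ are Vandermonde vectors built from distinct evaluation points, so any $k$ of them are linearly independent. If $k$ of the $\hv_i$ satisfied $\hv_i^\T \bfd = 0$, those $k$ linearly independent vectors would span all of $\field^k$ while simultaneously lying inside the hyperplane $\bfd^\perp$ of dimension $k-1$, a contradiction. Therefore at most $k-1$ of the encoding vectors can be orthogonal to $\bfd$, which yields $|\Ac_{\Lin,j}| + |\Ac_{j,\Lin}| \leq k-1 < k$ and establishes \eqref{eq: Aljb00}.

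For the second inequality \eqref{eq: Aljb01} I would simply repeat the identical argument: by \eqref{eq:Alj11} the sets $\Ac_{\Lin,j}^{[1]}$ and $\Ac_{j,\Lin}^{[1]}$ are governed by the same orthogonality condition $\hv_i^\T \bfd = 0$, and since $\Ac_{\Lin}^{[1]} \subseteq \Ac_{\Lin}$ and $\Ac_{j}^{[1]} \subseteq \Ac_{j}$ remain disjoint subsets of honest nodes, the same counting bound applies verbatim. Note that the concrete value $k = \lfloor t/5 \rfloor + 1$ is irrelevant to this lemma; all that matters is that $k$ is the dimension of the message space and that the code is MDS.

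I expect the main obstacle to be making the ``any $k$ encoding vectors are linearly independent'' property precise and confirming that it is exactly the guarantee provided by the chosen error correction code. This is the only point where the code structure enters, and it is what converts the combinatorial collision count into the clean bound $k-1$; everything else is bookkeeping about the disjointness of the honest groups $\Ac_{\Lin}$ and $\Ac_j$.
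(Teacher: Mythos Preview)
Your argument is correct: the sets $\Ac_{\Lin,j}$ and $\Ac_{j,\Lin}$ are disjoint and both consist of indices $i$ at which the two codewords $\ECCEnc(n,k,\Meg_{\Lin})$ and $\ECCEnc(n,k,\Meg_j)$ coincide, and for an MDS code two distinct codewords can agree in at most $k-1$ coordinates (equivalently, any $k$ encoding vectors $\hv_i$ are linearly independent, so at most $k-1$ can annihilate the nonzero difference $\Meg_{\Lin}-\Meg_j$). The paper does not supply its own proof of this lemma; it simply imports it by citation from \cite[Lemma~7]{Chen:2020arxiv}, and the proof there is exactly the minimum-distance/MDS argument you give.
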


\begin{lemma}  \cite[Lemma 11]{Chen:2020arxiv}  \label{lm:eta2bound2}
For the proposed  $\OciorCOOL$ with $n\geq 3t+1$, it holds true that $\gn^{[2]} \leq 2$.
\end{lemma}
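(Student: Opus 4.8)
The plan is to argue by contradiction: assume $\gn^{[2]}\geq 3$ and derive a violation of $n\geq 3t+1$. The key is to track, for each surviving value $\Meg_\Lin$, not the group size $|\Ac_\Lin^{[2]}|$ but the set of honest nodes whose \emph{own} coded symbol is consistent with $\Meg_\Lin$,
\[
\Sc_\Lin \defeq \{\, j :\ j\notin\Fc,\ \hv_j^\T\Me_j=\hv_j^\T\Meg_\Lin \,\},\qquad \Lin\in[1:\gn^{[2]}].
\]
First I would lower bound each $|\Sc_\Lin|$, then upper bound the pairwise intersections $|\Sc_\Lin\cap\Sc_{\Lin'}|$, and finally apply inclusion--exclusion over three surviving values to overflow the pool of $n-t$ honest nodes.

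For the lower bound, fix any $i\in\Ac_\Lin^{[2]}$. Since Phase~2 only ever downgrades a success indicator, $i\in\Ac_\Lin^{[2]}\subseteq\Ac_\Lin^{[1]}$, so $\Ry_i^{[1]}=1$ and Line~\ref{line:OciorBUASIOne} of Algorithm~\ref{algm:OciorBUA} gives $\sum_{j}\Lk_i^{[1]}(j)\geq n-t$; that is, $i$ has at least $n-t$ matching links. At most $t$ of them point to dishonest nodes, so at least $n-2t$ point to honest nodes. The crucial reading is that, by the link rule in Line~\ref{line:OciorBUAlinkset111}, $\Lk_i(j)=1$ forces in particular its second coordinate $y_j^{(j)}=y_j^{(i)}$, i.e.\ $\hv_j^\T\Me_j=\hv_j^\T\Me_i=\hv_j^\T\Meg_\Lin$; hence every honest node $j$ matched by $i$ lies in $\Sc_\Lin$. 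Therefore $|\Sc_\Lin|\geq n-2t$ for every surviving $\Lin$.

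For the upper bound I would invoke the distance of the code. For $\Lin\neq\Lin'$ we have $\Meg_\Lin\neq\Meg_{\Lin'}$, so $\Meg_\Lin-\Meg_{\Lin'}$ is a nonzero message and, by linearity of the encoder $\EncodedSymbol_j=\hv_j^\T(\cdot)$, the vector $[\hv_1^\T(\Meg_\Lin-\Meg_{\Lin'}),\dots,\hv_n^\T(\Meg_\Lin-\Meg_{\Lin'})]$ is a nonzero codeword of the $(n,k)$ Reed--Solomon (MDS) code; its weight is at least $n-k+1$, so it vanishes in at most $k-1$ coordinates. Since any $j\in\Sc_\Lin\cap\Sc_{\Lin'}$ satisfies $\hv_j^\T(\Meg_\Lin-\Meg_{\Lin'})=0$, we get $|\Sc_\Lin\cap\Sc_{\Lin'}|\leq k-1$. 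This is the same estimate that underlies Lemma~\ref{lm:sizeboundMatrix}, now applied over all honest positions rather than only within the two groups.

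Assuming $\gn^{[2]}\geq 3$, picking three surviving values and using that there are exactly $n-t$ honest nodes yields
\[
n-t \ \geq\ \Bigl|\bigcup_{\Lin=1}^{3}\Sc_\Lin\Bigr| \ \geq\ \sum_{\Lin=1}^{3}|\Sc_\Lin|-\!\!\sum_{1\le \Lin<\Lin'\le 3}\!\!|\Sc_\Lin\cap\Sc_{\Lin'}| \ \geq\ 3(n-2t)-3(k-1).
\]
Rearranged this reads $3k\geq 2n-5t+3$; substituting $n\geq 3t+1$ and $k=\lfloor t/5\rfloor+1\leq t/5+1$ gives $3t/5+3\geq t+5$, i.e.\ $t\leq -5$, which is impossible. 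Hence $\gn^{[2]}\leq 2$. I expect the only subtle step to be the lower bound: one must convert the purely combinatorial survival condition ``at least $n-t$ matching links'' into the per-coordinate algebraic statement $\hv_j^\T\Me_j=\hv_j^\T\Meg_\Lin$ by isolating the correct ($j$-indexed) component of the two-symbol link test. Once $\Sc_\Lin$ is set up, the remaining counting is routine.
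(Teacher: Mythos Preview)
Your argument is correct and takes a genuinely different route from the paper's. The paper (via \cite[Lemma~11]{Chen:2020arxiv}, and see the proof sketch of Lemma~\ref{lm:BRCeta2bound2} here) first passes through the graph-theoretic Lemma~\ref{lm:graph} to establish $|\Ac_{\Lin}|\geq n-9t/4$ for every $\Lin\in[1:\gn^{[2]}]$ (this is \cite[Lemma~9]{Chen:2020arxiv}, cf.\ Lemma~\ref{lm:ociorRBCsizem}), and then uses that the sets $\Ac_{\Lin}$ are pairwise \emph{disjoint} to conclude that three of them cannot fit into the $n-t$ honest nodes. You instead work with the overlapping sets $\Sc_{\Lin}$, trading disjointness for a direct MDS-distance bound on pairwise intersections and finishing with inclusion--exclusion. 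This is more elementary: it bypasses the graph lemma entirely and relies only on the Phase~1 threshold together with the minimum distance of the code. As a side benefit, since your lower bound $|\Sc_{\Lin}|\geq n-2t$ uses only $\Ry_i^{[1]}=1$ (not the Phase~2 update), your argument in fact proves the stronger statement $\gn^{[1]}\leq 2$ directly; combined with $\gn^{[2]}\leq\gn^{[1]}$ this already gives the lemma, and it would also allow one to drop Lemma~\ref{lm:eta22212} from the proof of Lemma~\ref{lm:uniquegroup}.
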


\begin{lemma}  \cite[Lemma 13]{Chen:2020arxiv}  \label{lm:eta22212}
For the proposed  $\OciorCOOL$ with $n\geq 3t+1$,  if $\gn^{[2]} = 2$,  then it holds true that $\gn^{[1]} =   2$.   
\end{lemma}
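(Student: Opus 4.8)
The plan is to argue by contradiction, reducing the claim to a counting inequality. Since the groups are indexed so that $\gn^{[2]}\le\gn^{[1]}\le\gn$, the hypothesis $\gn^{[2]}=2$ already gives $\gn^{[1]}\ge 2$, so it suffices to rule out $\gn^{[1]}\ge 3$. Assuming $\gn^{[1]}\ge 3$, the two Phase-2 survivors are $\Ac_1^{[2]},\Ac_2^{[2]}$ (so $\Ac_1^{[1]},\Ac_2^{[1]}$ are non-empty), and there is a third Phase-1 survivor with $\Ac_3^{[1]}\neq\emptyset$. I would translate the survival of a representative node in each of these three groups into a lower bound on its size, then sum the three bounds against the total honest population $n-t=\sum_{\Lin=1}^{\gn}|\Ac_\Lin|$.

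The first technical step is to convert survival into a count, separating honest matches from dishonest ones: a dishonest node may equivocate and supply a spurious match to a node in every group at once, but there are at most $|\Fc|=t$ of them. Hence for $a\in\Ac_1^{[2]}$, which after the Phase-2 masking still sees at least $n-t$ matching links, all confined to $\Ss_1$, at least $n-2t$ of those matches are honest. Its honest matches in $\Ss_1$ consist of the genuine ones from its own group, numbering $|\Ac_1^{[1]}|$, plus confusable matches from the other Phase-1-alive groups, bounded by $\sum_{j\neq1}|\Ac_{j,1}^{[1]}|$; thus $|\Ac_1^{[1]}|\ge n-2t-|\Ac_{2,1}^{[1]}|-|\Ac_{3,1}^{[1]}|$, and symmetrically $|\Ac_2^{[1]}|\ge n-2t-|\Ac_{1,2}^{[1]}|-|\Ac_{3,2}^{[1]}|$. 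For group $3$ I would instead use its Phase-1 survivor $c\in\Ac_3^{[1]}$, whose $n-t$ Phase-1 links give at least $n-2t$ honest matches, namely $|\Ac_3|$ genuine ones plus confusable matches drawn from all other groups, yielding $|\Ac_3|\ge n-2t-\sum_{j\neq3}|\Ac_{j,3}|$.

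Next I would invoke Lemma~\ref{lm:sizeboundMatrix}, applicable since $\gn^{[1]}\ge2$, to collapse the confusable cross-terms pairwise: $|\Ac_{2,1}^{[1]}|+|\Ac_{1,2}^{[1]}|<k$, while for the mixed pairs $|\Ac_{3,1}^{[1]}|+|\Ac_{1,3}|\le|\Ac_{3,1}|+|\Ac_{1,3}|<k$ and likewise for $(2,3)$. Summing the three size bounds, and adding $\sum_{j\ge4}|\Ac_j|$ to both sides so that the confusable contributions to $|\Ac_3|$ coming from groups that died in Phase 1 cancel (as $|\Ac_{j,3}|\le|\Ac_j|$), I obtain
\begin{align}
|\Ac_1^{[1]}|+|\Ac_2^{[1]}|+|\Ac_3|+\sum_{j\ge4}|\Ac_j| \;>\; 3(n-2t-k).
\end{align}
The left-hand side is at most $\sum_{\Lin=1}^{\gn}|\Ac_\Lin|=n-t$, since the groups are disjoint sets of honest nodes and $\Ac_\Lin^{[1]}\subseteq\Ac_\Lin$. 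With $n\ge3t+1$ and $k=\lfloor t/5\rfloor+1$ one verifies $2n\ge6t+2\ge5t+3k$, i.e. $3(n-2t-k)\ge n-t$, so the two inequalities are incompatible. This contradiction forces $\gn^{[1]}\le2$, and with $\gn^{[1]}\ge\gn^{[2]}=2$ gives $\gn^{[1]}=2$.

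The step I expect to be the main obstacle is the honest/dishonest match accounting combined with the confusable terms for the third group. Because dishonest nodes equivocate, one cannot reason with a single global ``received word'' and must charge at most $t$ phantom matches per node to $\Fc$; and because $\gn$ may strictly exceed $\gn^{[1]}$, the Phase-1 survivor of group $3$ can borrow confusable matches from Phase-1-dead groups, which is precisely why the $\sum_{j\ge4}|\Ac_j|$ cancellation is needed to close the count. The distance bounds of Lemma~\ref{lm:sizeboundMatrix} do the decisive work of keeping every pairwise confusion below $k$; the small-$t$ regime, where $k=1$ forces all confusable subgroups to be empty, is a trivial special case of the same inequality.
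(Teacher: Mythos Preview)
Your counting strategy is sound and the numerics close when $\gn^{[1]}=3$, but the argument as written does not cover $\gn^{[1]}\ge 4$. You correctly state that the honest Phase-2 matches of $a\in\Ac_1^{[2]}$ are bounded by $\sum_{j\neq 1}|\Ac_{j,1}^{[1]}|$, yet you then write only the two terms $|\Ac_{2,1}^{[1]}|+|\Ac_{3,1}^{[1]}|$; the same truncation occurs for group~2. If $\gn^{[1]}\ge 4$ there are additional Phase-1-alive groups $\Ac_4^{[1]},\dots,\Ac_{\gn^{[1]}}^{[1]}$ whose confusable contributions $|\Ac_{j,1}^{[1]}|$ and $|\Ac_{j,2}^{[1]}|$ must also be subtracted. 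Your device of adding $\sum_{j\ge 4}|\Ac_j|$ to both sides cancels only the leftover cross-terms in the bound for $|\Ac_3|$, not those in the bounds for $|\Ac_1^{[1]}|$ and $|\Ac_2^{[1]}|$. Since a node in $\Ac_j^{[1]}$ may lie in both $\Ac_{j,1}^{[1]}$ and $\Ac_{j,2}^{[1]}$, these extra terms can cost up to $2\sum_{j\ge 4}|\Ac_j|$, and adding the single $\sum_{j\ge 4}|\Ac_j|$ no longer closes the count against $n-t$. Nothing preceding this lemma bounds $\gn^{[1]}$, so you cannot assume $\gn^{[1]}=3$.

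The paper's route (indicated for the analogous $\OciorRBC$ lemma) supplies exactly the missing ingredient: it combines Lemma~\ref{lm:sizeboundMatrix} with the size bound $|\Ac_\Lin|\ge n-9t/4$ for $\Lin\in[1:\gn^{[2]}]$ (the $\OciorCOOL$ analog of Lemma~\ref{lm:ociorRBCsizem}). From $\gn^{[2]}=2$ one gets $|\Ac_1|+|\Ac_2|\ge 2n-9t/2$, hence $\sum_{j\ge 3}|\Ac_j|\le 7t/2-n\le t/2-1$; this tiny residual budget, together with $|\Ac_{1,3}|,|\Ac_{2,3}|<k$, makes the $n-2t$ honest matches required by any $c\in\Ac_3^{[1]}$ impossible, regardless of how many further Phase-1 groups exist. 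Your approach is more elementary in avoiding the graph lemma behind the $n-9t/4$ bound, but without some control on either $\gn^{[1]}$ or $\sum_{j\ge 3}|\Ac_j|$ the three-group count does not suffice.
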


\begin{lemma}    \label{lm:eta1231}
For the proposed  $\OciorCOOL$ with $n\geq 3t + 1$, if $\gn^{[1]} = 2$  then it holds true that $\gn^{[2]} \leq   1$.
\end{lemma}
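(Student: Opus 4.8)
The plan is to argue by contradiction. Since $\gn^{[2]} \le \gn^{[1]} = 2$, it suffices to rule out $\gn^{[2]} = 2$, so I assume that after Phase~2 both groups survive, i.e. $\Ac_1^{[2]} \neq \emptyset$ and $\Ac_2^{[2]} \neq \emptyset$, carrying the distinct values $\Meg_1,\Meg_2$ (writing $\bar{\Lin}$ for the other of the two indices). For a survivor $i \in \Ac_\Lin^{[2]}$ I would write its survival condition explicitly: after the masking of Line~\ref{line:OciorBUAmskerror} the links retained by $i$ all point into $\Ss_1$, and by the threshold test preceding Line~\ref{line:OciorBUASI2} there must be at least $n-t$ of them. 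I would then split these retained links into three buckets --- matches with same-value honest nodes (there are exactly $|\Ac_\Lin^{[1]}| = a_\Lin$ of these, since identical messages always match), matches with opposite-value honest nodes, and matches with the at most $t$ dishonest nodes in $\Ss_1$. The opposite-value bucket is nonzero only when $i$ itself lies in the collision set $\Ac_{\Lin,\bar{\Lin}}^{[1]}$, and then is at most $|\Ac_{\bar{\Lin},\Lin}^{[1]}|$, which is controlled by Lemma~\ref{lm:sizeboundMatrix}.

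The first key step is a dichotomy according to whether a survivor is \emph{pure} (in $\Ac_{\Lin,\Lin}^{[1]}$) or \emph{colliding} (in $\Ac_{\Lin,\bar{\Lin}}^{[1]}$). A pure survivor receives no help from the opposite group, so its survival forces $a_\Lin + t \ge n-t$, i.e. $a_\Lin \ge n - 2t$. Since the honest nodes are partitioned into the two groups and the failed set $\Bc^{[1]}$, we have $a_1 + a_2 \le n - t < 2(n-2t)$ whenever $n > 3t$; hence the two groups cannot both contain a pure survivor. Without loss of generality I would then assume group~2 survives only through colliding nodes, so $\Ac_2^{[2]} \subseteq \Ac_{2,1}^{[1]}$, a set of size $< k$ by Lemma~\ref{lm:sizeboundMatrix}.

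The decisive step is to bound the same-value count $a_2$ feeding that colliding survivor, via Phase~1 admissibility. Any honest $\Meg_2$-node that set its indicator to $1$ in Phase~1 needed $\ge n-t$ Phase~1 links, of which at most $t$ are dishonest and the honest contribution is $|\Ac_2|$ plus collisions with nodes carrying other values. When group~1 is the large group ($a_1 \ge n-2t$, hence $|\Ac_1| \ge a_1$ and $|\Ac_2| \le t$) a pure $\Meg_2$-node could gather at most $|\Ac_2| + (\text{other-value collisions}) + t \le 2t < n-t$ links, so it could never have passed Phase~1; thus $\Ac_2^{[1]} = \Ac_{2,1}^{[1]}$ and $a_2 = |\Ac_{2,1}^{[1]}|$. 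Combining with Lemma~\ref{lm:sizeboundMatrix} gives $a_2 + |\Ac_{1,2}^{[1]}| = |\Ac_{2,1}^{[1]}| + |\Ac_{1,2}^{[1]}| < k$, so the colliding survivor would need more than $n - t - k$ of its retained links from the $t$ dishonest nodes. This yields $n - t - k < t$, i.e. $n < 2t + k \le 2t + \lfloor t/5\rfloor + 1 < 3t + 1$, contradicting $n \ge 3t+1$.

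The step I expect to be the main obstacle is making the Phase~1 admissibility bound rigorous in the case where \emph{neither} group has a pure survivor (both survive only through their fewer-than-$k$ colliding nodes). There one cannot immediately label one group ``large,'' and the honest nodes of $\Bc^{[1]}$ carrying values other than $\Meg_1,\Meg_2$ must be accounted for: their collisions with $\Meg_1$ and with $\Meg_2$ inflate the Phase~1 link counts and are exactly what the Reed--Solomon distance bounds of Lemma~\ref{lm:sizeboundMatrix}, applied across all value pairs, must tame. I expect the clean resolution to come from summing these pairwise collision bounds --- each strictly below $k = \lfloor t/5\rfloor + 1$ --- against the resilience $n \ge 3t+1$, with the factor $5$ supplying precisely the slack that drives every configuration back to the contradiction $n < 2t + k$.
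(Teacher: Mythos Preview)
Your pure-versus-colliding dichotomy is a nice organizing idea, and the case where one group retains a \emph{pure} Phase-2 survivor is handled correctly: the large-group inequality $a_1\ge n-2t$ really does force $|\Ac_2|+\sum_{\Lin\ge3}|\Ac_\Lin|\le t$, after which your Phase-1 admissibility argument shows $\Ac_{2,2}^{[1]}=\emptyset$ and the contradiction $n<2t+k$ follows. But the case you flag as the ``main obstacle'' --- neither group has a pure survivor --- is a genuine gap, not merely a loose end. In that case each group has a colliding survivor, so you get $a_\Lin \ge n-2t-|\Ac_{\bar\Lin,\Lin}^{[1]}|$ for $\Lin=1,2$; summing and invoking Lemma~\ref{lm:sizeboundMatrix} only yields $n-t\ge a_1+a_2 > 2(n-2t)-k$, i.e.\ $k>n-3t$, which is \emph{not} a contradiction once $t\ge5$. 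Your proposed remedy of ``summing pairwise collision bounds'' across all value pairs does not, by itself, recover the slack: the nodes in $\Bc^{[1]}$ carrying values $\Meg_3,\dots,\Meg_\gn$ can contribute many Phase-1 links to nodes in $\Ac_{2,2}^{[1]}$ without violating any single pairwise bound, so $|\Ac_{2,2}^{[1]}|$ need not be small on those grounds alone.

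The paper takes a different route that sidesteps the dichotomy entirely. It picks the group with the smaller $|\Ac_\Lin|$ (say $\Lin=2$) and proves directly that \emph{every} $i\in\Ac_2^{[1]}$ --- pure or colliding --- fails the Phase-2 threshold, via the uniform bound $|\Uc_i^{[2]}|\le |\Ac_{1,2}^{[1]}|+|\Ac_{2,1}^{[1]}|+|\Ac_{2,2}^{[1]}|+t$. The first two terms are handled by Lemma~\ref{lm:sizeboundMatrix}; the crucial missing ingredient is a separate bound $|\Ac_{2,2}^{[1]}|\le 2(k-1)$ (Lemma~\ref{lm:boundA22}), obtained by a double-counting argument: each node in $\Ac_{2,2}^{[1]}$ must draw at least $n-2t-|\Ac_2|$ of its Phase-1 links from $\cup_{\Lin\ge3}\Ac_\Lin$, while each node in $\cup_{\Lin\ge3}\Ac_\Lin$ can be linked to at most $k-1$ nodes in $\Ac_2^{[1]}$ (again Lemma~\ref{lm:sizeboundMatrix}). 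That double count, together with the smallness of $|\Ac_2|$, is what produces the factor-of-$2$ bound and makes $3(k-1)+t<n-t$ go through. Your sketch does not contain this argument, and without it the proof is incomplete.
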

 \begin{proof}
We will consider the assumption of $\gn^{[1]}= 2$. Given this assumption  the definition in \eqref{eq:Aell00}-\eqref{eq:Bdef01} implies that 
 \begin{align}
 \Ac_{1}^{[1]} =&   \{  i:  \Ry_i^{[1]} =1, \Me_i =  \Meg_{1},  \  i \notin  \Fc, \ i \in [1:n]\}       \label{eq:Aellbig10455}  \\ 
  \Ac_{2}^{[1]} =&   \{  i:  \Ry_i^{[1]} =1, \Me_i =  \Meg_{2},  \  i \notin  \Fc, \ i \in [1:n]\}      \label{eq:Aellbig1045522}    \\ 
\Ac_{1,2}^{[1]} =   & \{  i:  \   i\in  \Ac_{1}^{[1]},   \   \hv_i^\T  \Meg_{1}  = \hv_i^\T  \Meg_2 \}    \label{eq:Alj2995}  \\
\Ac_{1,1}^{[1]} =   & \Ac_{1}^{[1]} \setminus \Ac_{1,2}^{[1]} = \{  i:  \   i\in  \Ac_{1}^{[1]},   \   \hv_i^\T  \Meg_{1}  \neq  \hv_i^\T  \Meg_2 \}  \label{eq:All2955251} \\
\Ac_{2,1}^{[1]} =   & \{  i:  \   i\in  \Ac_{2}^{[1]},   \   \hv_i^\T  \Meg_{2}  = \hv_i^\T  \Meg_1 \}   \label{eq:Alj299535}  \\
\Ac_{2,2}^{[1]} =   & \Ac_{2}^{[1]} \setminus \Ac_{2,1}^{[1]} = \{  i:  \   i\in  \Ac_{2}^{[1]},   \   \hv_i^\T  \Meg_{2}  \neq  \hv_i^\T  \Meg_1 \}  \label{eq:All29559386}  \\
\Bc^{[1]} =  & \{  i:  \Ry_i^{[1]} =0, \  i \notin  \Fc, \ i \in [1:n] \}=  \{  i:    i \in  [1:n], \ i\notin  \Fc \cup  \Ac_{1}^{[1]} \cup  \Ac_{2}^{[1]}  \}   .  \label{eq:BdefB1}         
 \end{align} 
Since $|\Ac_{1}|+|\Ac_{2}|= n-|\Fc|-\sum_{\Lin=3}^{\gn} |\Ac_{\Lin}|$,  it is true that at least one of the following cases is satisfied: 
\begin{align}
\text{Case~1:} \quad  |\Ac_{2}| \leq  &  \frac{n-|\Fc|-\sum_{\Lin=3}^{\gn} |\Ac_{\Lin}|}{2}   \label{eq:case1b} 
 \end{align}
 
\begin{align}
\text{Case~2:} \quad |\Ac_{1}| \leq &\frac{n-|\Fc|-\sum_{\Lin=3}^{\gn} |\Ac_{\Lin}|}{2}  .   \label{eq:case2a}  
 \end{align}

\subsubsection{Analysis for Case~1}   
We will first consider Case~1 and prove that $|\Ac_{2}^{[2]}| =0$ under this case. 
Let us define $\Uc_{i}^{[p]}$ as a set of links that are matched with Node~$i$ at Phase~$p$, that is, 
\begin{align}
\Uc_{i}^{[p]} :=  & \{  j:  \Lk_{i}^{[p]} (j)  =1, j \in [1:n]\},  \quad  \text{for}  \quad i\in [1:n],  p\in \{1,2\}.       
\end{align}
 Then, for any $i\in \Ac_{2}^{[1]}$, the size of $\Uc_{i}^{[2]}$ can be bounded as 
 \begin{align}  
 |\Uc_{i}^{[2]}| = & \sum_{ j \in [1:n]}   \Lk_{i}^{[2]} (j)        \label{eq:link124} \\
 = & \sum_{ j \in [1:n]\setminus \Bc^{[1]}}   \Lk_{i}^{[2]} (j)        \label{eq:link4252} \\
  = & \sum_{ j \in [1:n]\setminus \{\Bc^{[1]}\cup \Ac_{1,1}^{[1]}\}}   \Lk_{i}^{[2]} (j)        \label{eq:link82525} \\
    = & \sum_{ j \in  \{\Ac_{1,2}^{[1]}\cup \Ac_{2,1}^{[1]}\cup \Ac_{2,2}^{[1]} \cup\Fc \}}   \Lk_{i}^{[2]} (j)        \label{eq:link7256} \\
  \leq & |\Ac_{1,2}^{[1]}|  + |\Ac_{2,1}^{[1]}|  + | \Ac_{2,2}^{[1]}|+|\Fc|              \label{eq:link6254} 
\end{align}
where \eqref{eq:link4252} stems from the fact that $\Ry_j^{[1]} =0$ for any $j\in \Bc^{[1]}$ (see \eqref{eq:BdefB1}), which suggests that $\Lk_{i}^{[2]} (j)  =0$ for any  $i\in \Ac_{2}^{[1]}$ (see Line~\ref{line:OciorBUAmskerror} of Algorithm~\ref{algm:OciorBUA}); 
\eqref{eq:link82525} results from the identity that  $\hv_j^\T  \Meg_{1}  \neq  \hv_j^\T  \Meg_2 $ for any  $j\in \Ac_{1,1}^{[1]}$  (see \eqref{eq:All2955251}), which implies that   $\Lk_{i}^{[1]} (j) =\Lk_{i}^{[2]} (j)  =0$      for any $j\in \Ac_{1,1}^{[1]}$ and $i\in \Ac_{2}^{[1]}$  (see Line~\ref{line:OciorBUAlinkset111}   of Algorithm~\ref{algm:OciorBUA});
\eqref{eq:link7256} is from the fact that $[1:n]= \Ac_{1}^{[1]} \cup \Ac_{2}^{[1]}\cup \Bc^{[1]} \cup \Fc = \Ac_{1,1}^{[1]} \cup \Ac_{1,2}^{[1]} \cup\Ac_{2,1}^{[1]}\cup\Ac_{2,2}^{[1]}\cup \Bc^{[1]} \cup \Fc$.

From Lemma~\ref{lm:sizeboundMatrix}, the summation of $|\Ac_{1,2}^{[1]}|  + |\Ac_{2,1}^{[1]}|$ in \eqref{eq:link6254} can be bounded as 
 \begin{align}  
   |\Ac_{1,2}^{[1]}|  + |\Ac_{2,1}^{[1]}|   \leq k-1  .         \label{eq:link3235252} 
\end{align}
From Lemma~\ref{lm:boundA22}, the term $|\Ac_{2,2}^{[1]}|$ in \eqref{eq:link6254} can be upper bounded by 
 \begin{align}
|\Ac_{2,2}^{[1]}| \leq  2(k-1).   \label{eq:A22bound23435} 
\end{align}
At this point, for any $i\in \Ac_{2}^{[1]}$, the size of $\Uc_{i}^{[2]}$ can be bounded as 
 \begin{align}  
 |\Uc_{i}^{[2]}|   \leq & \  |\Ac_{1,2}^{[1]}|  + |\Ac_{2,1}^{[1]}|  + | \Ac_{2,2}^{[1]}|+|\Fc|    \label{eq:A22bound1133} \\
  \leq &\  k-1  + 2(k-1)       +|\Fc|   \label{eq:A22bound8255}  \\ 
  \leq & \ 3(k- 1)  + t   \non  \\ 
    = & \ 3( \lfloor    t /5   \rfloor    +1 -1)   + t  \label{eq:A22bound65464}  \\ 
    \leq &\  2 t   \non  \\ 
     <  & \  n-t  \label{eq:A22bound5366}  
\end{align} 
where  \eqref{eq:A22bound1133} is from    \eqref{eq:link6254};   \eqref{eq:A22bound8255} is from Lemma~\ref{lm:sizeboundMatrix} and Lemma~\ref{lm:boundA22}; \eqref{eq:A22bound65464} uses the value of the parameter $k$, i.e., $k= \lfloor    t /5   \rfloor    +1$; and the last inequality uses the condition of $n\geq 3t+1>3t$.  
With the result in \eqref{eq:A22bound5366}, it is concluded that  Node~$i$, for any $i\in \Ac_{2}^{[1]}$,  sets $\Ry_i^{[2]} =0$ at Phase~2 due to the derived result $|\Uc_{i}^{[2]}| <n-t$ (see Line~\ref{line:OciorBUASI2} of Algorithm~\ref{algm:OciorBUA}). Therefore, it can be concluded that $\Ac_{2}^{[1]}$ is in the list of $\Ss_0$, that is, 
  \begin{align}
\Ac_{2}^{[1]}  \subseteq \Ss_0  \label{eq:case1A2S0} 
\end{align}
  at the end of Phase~2.  In other words,   there exists \emph{at most}  1  group of honest nodes who  input   the same   message  and set their success indicators as ones  at the end of Phase~2, while  the  honest nodes outside this group  set their success indicators as zeros  at the end of Phase~2, that is, $\gn^{[2]} \leq 1$, for Case~1.

\subsubsection{Analysis for Case~2}  
By interchange the roles of $\Ac_{1}$ and  $\Ac_{2}$, one can easily follow the proof for Case~1 and  show that 
   \begin{align}
\Ac_{1}^{[1]}  \subseteq \Ss_0  \label{eq:case1A1S0} 
\end{align}
for Case~2. 
Then it completes the proof of this lemma.  
\end{proof}

\begin{lemma}    \label{lm:boundA22}
Given the condition of $|\Ac_{i}| \leq    \frac{n-|\Fc|-\sum_{\Lin=3}^{\gn} |\Ac_{\Lin}|}{2}$, and for $\gn\geq 2$, the following conclusion is true
\begin{align}
|\Ac_{i,i}^{[1]}| \leq  2(k-1)   \label{eq:A22bound91588} 
\end{align}
for $i\in \{1,2\}$.    
\end{lemma}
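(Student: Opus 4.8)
The plan is to prove the bound for $i=2$; the case $i=1$ follows by symmetry after interchanging the roles of $\Meg_1$ and $\Meg_2$. Write $s\defeq\sum_{\Lin=3}^{\gn}|\Ac_\Lin|$, so that $|\Ac_1|+|\Ac_2|=n-|\Fc|-s$ and the hypothesis $|\Ac_2|\le\frac{n-|\Fc|-s}{2}$ is exactly $|\Ac_2|\le|\Ac_1|$. I would first record the structural fact that drives everything: any node $m\in\Ac_{2,2}^{[1]}$ satisfies $\hv_m^\T\Meg_2\neq\hv_m^\T\Meg_1$ by the definition of $\Ac_{2,2}^{[1]}$, so the phase-1 link test of Line~\ref{line:OciorBUAlinkset111} fails already at the first coordinate for every honest node whose input is $\Meg_1$; hence $m$ shares no matching link with any node in $\Ac_1$.

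Next I would convert the success indicator into a counting inequality. Since $m\in\Ac_2^{[1]}$, it set $\Ry_m^{[1]}=1$, so it has at least $n-\networkfaultsizet$ matching links. At most $|\Fc|=\networkfaultsizet$ of these come from dishonest nodes, leaving at least $n-2\networkfaultsizet$ honest matching links; the $|\Ac_2|$ honest nodes sharing $m$'s input match automatically while $\Ac_1$ contributes none, so $m$ must collect at least $n-2\networkfaultsizet-|\Ac_2|$ matching links from the honest groups $\Ac_3,\dots,\Ac_{\gn}$. A node $j\in\Ac_{\Lin}$ with $\Lin\ge3$ matches $m$ only if both the coordinate-$j$ and coordinate-$m$ tests pass, i.e. $j\in\Ac_{\Lin,2}$ and $m\in\Ac_{2,\Lin}$; consequently the number of such matches of $m$ equals $\sum_{\Lin:\,m\in\Ac_{2,\Lin}}|\Ac_{\Lin,2}|$.

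The crux is then a double count over $\Ac_{2,2}^{[1]}$. Summing the per-node lower bound and noting that each $m$ feeding the $\Lin$-term lies in $\Ac_{2,\Lin}$ gives $|\Ac_{2,2}^{[1]}|\,(n-2\networkfaultsizet-|\Ac_2|)\le\sum_{\Lin=3}^{\gn}|\Ac_{\Lin,2}|\cdot|\Ac_{2,\Lin}|$. Lemma~\ref{lm:sizeboundMatrix} bounds $|\Ac_{\Lin,2}|+|\Ac_{2,\Lin}|<k$, hence $|\Ac_{2,\Lin}|\le k-1$ and the right-hand side is at most $(k-1)\sum_{\Lin=3}^{\gn}|\Ac_{\Lin,2}|\le(k-1)s$. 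The hypothesis $|\Ac_2|\le\frac{n-|\Fc|-s}{2}$ then yields $n-2\networkfaultsizet-|\Ac_2|\ge\frac{n-3\networkfaultsizet+s}{2}>0$, so $|\Ac_{2,2}^{[1]}|\le\frac{2(k-1)s}{\,n-3\networkfaultsizet+s\,}<2(k-1)$, the last step using $n\ge3\networkfaultsizet+1$. The degenerate cases $s=0$ and $\Ac_{2,2}^{[1]}=\varnothing$ make the claim trivial.

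I expect the main obstacle to be obtaining a bound that is genuinely $2(k-1)$ rather than the crude $O(\networkfaultsizet)$ that a direct match count produces: the naive argument (no matches with $\Ac_1$, honest matches $\ge n-2\networkfaultsizet$) only gives $|\Ac_1|\le\networkfaultsizet$ and hence $|\Ac_{2,2}^{[1]}|\le|\Ac_2|\le\networkfaultsizet$, which is far too weak. Extracting the factor $k$ requires the double count weighted by the code-distance inequality of Lemma~\ref{lm:sizeboundMatrix}, and the final contraction relies on $n-3\networkfaultsizet>0$ to force the ratio $s/(n-3\networkfaultsizet+s)$ strictly below one. A secondary point to handle carefully is confirming that the denominator $n-2\networkfaultsizet-|\Ac_2|$ is positive, so that the division is legitimate; this again follows from the Case hypothesis together with $n\ge3\networkfaultsizet+1$.
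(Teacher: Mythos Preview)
Your proposal is correct and follows essentially the same strategy as the paper: a double count of the Phase-1 matching links between $\Ac_{2,2}^{[1]}$ and $\cup_{\Lin\ge 3}\Ac_\Lin$, with the lower bound supplied by the success-indicator threshold $n-t$ (after discarding $\Fc$, $\Ac_2$, and the automatically non-matching $\Ac_1$) and the upper bound supplied by the code-distance inequality of Lemma~\ref{lm:sizeboundMatrix}. The only cosmetic difference is on the upper side: the paper sums column-wise, bounding $\sum_{i\in\Ac_2^{[1]}}\Lk_i^{[1]}(j)\le|\Ac_{2,\Lin^{\star}}^{[1]}|\le k-1$ for each $j\in\Ac_{\Lin^{\star}}$ and then summing over $j$, whereas you first record the exact match condition ($m\in\Ac_{2,\Lin}$ and $j\in\Ac_{\Lin,2}$), obtain the product form $\sum_{\Lin\ge 3}|\Ac_{\Lin,2}|\,|\Ac_{2,\Lin}|$, and then bound $|\Ac_{2,\Lin}|\le k-1$ and $\sum_{\Lin}|\Ac_{\Lin,2}|\le s$; both routes land on $(k-1)s$ and the identical final ratio $2(k-1)s/(n-3t+s)$.
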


\begin{proof}
Without loss of generality, we will  focus on the proof of  $|\Ac_{i,i}^{[1]}| \leq  2(k-1) $   for the case with $i=2$, under the condition of $|\Ac_{i}| \leq    \frac{n-|\Fc|-\sum_{\Lin=3}^{\gn} |\Ac_{\Lin}|}{2}$. The proof for the  case with $i=1$ is similar and thus omitted here.

At first let us consider the case with $\Lin \geq 3$, or equivalently,  $\sum_{\Lin=3}^{\gn} |\Ac_{\Lin}| >0$.  
We will at first argue that $|\Ac_{2,2}^{[1]}|$ is upper bounded by 
 \begin{align}
|\Ac_{2,2}^{[1]}| \leq   \frac{(k-1) \cdot \sum_{\Lin=3}^{\gn} |\Ac_{\Lin}|}{n-t-|\Fc|-|\Ac_{2}| }.    \label{eq:A22bound2145} 
\end{align} 
From the definitions in  \eqref{eq:Aellbig1045522} and \eqref{eq:All29559386},  it is true that  $\Ry_i^{[1]} =1,  \forall i\in \Ac_{2,2}^{[1]}$.   We recall that,    Node~$i$ needs to see 
 \begin{align}
\sum_{ j \in [1:n]}   \Lk_{i}^{[1]} (j)  \geq n-t,\quad     \forall i\in \Ac_{2,2}^{[1]}    \label{eq:A22condition66556} 
\end{align}
 in order to set $\Ry_i^{[1]} =1$ at Phase~1 (see    Line~\ref{line:OciorBUASIOne}  of Algorithm~\ref{algm:OciorBUA}).  
The condition in \eqref{eq:A22condition66556} implies that 
\begin{align}
\sum_{ j \in [1:n]\setminus \{\Fc\cup \Ac_{2}\}}   \Lk_{i}^{[1]} (j)  \geq n-t -|\Fc|-|\Ac_{2}|,\quad     \forall i\in \Ac_{2,2}^{[1]}   .  \label{eq:A22condition327255} 
\end{align}
Furthermore, it is true  that $\Lk_{i}^{[1]} (j) =0$ for any  $i\in \Ac_{2,2}^{[1]}, j \in \Ac_{1}$, due to  identity of  $\hv_i^\T  \Meg_{1}  \neq  \hv_i^\T  \Meg_2 $ for any  $i\in \Ac_{2,2}^{[1]}$  (see \eqref{eq:All29559386}). 
Then,   the condition in \eqref{eq:A22condition327255} can be modified as 
 \begin{align}
\sum_{ j \in [1:n]\setminus \{\Fc\cup \Ac_{2}\cup \Ac_{1}\}}   \Lk_{i}^{[1]} (j)  \geq n-t -|\Fc|-|\Ac_{2}|,\quad     \forall i\in \Ac_{2,2}^{[1]}    . \label{eq:A22condition32525222} 
\end{align}
Since $[1:n]\setminus \{\Fc\cup \Ac_{2}\cup \Ac_{1}\} =\cup_{\Lin=3}^{\gn}\Ac_{\Lin}$,  the condition in \eqref{eq:A22condition32525222} can be expressed as  
 \begin{align}
\sum_{ j \in\cup_{\Lin=3}^{\gn}\Ac_{\Lin}}   \Lk_{i}^{[1]} (j)  \geq n-t -|\Fc|-|\Ac_{2}|,\quad     \forall i\in \Ac_{2,2}^{[1]} ,   \label{eq:A22condition28366} 
\end{align}
which also gives the following bound 
 \begin{align}
\sum_{ i \in \Ac_{2,2}^{[1]}}  \sum_{ j \in\cup_{\Lin=3}^{\gn}\Ac_{\Lin}}   \Lk_{i}^{[1]} (j)    \geq (n-t -|\Fc|-|\Ac_{2}|) \cdot |\Ac_{2,2}^{[1]} | .  \label{eq:A22condition91875} 
\end{align}

On the other hand, for  any $j\in   \Ac_{ \Lin^{\star}}$, $ \Lin^{\star}\neq   \Lin$ and $ \Lin^{\star},   \Lin \in [1:\gn]$, the term $\sum_{ i \in \Ac_{\Lin}^{[1]}}   \Lk_{i}^{[1]} (j) $ can be upper bounded by  
\begin{align}
\sum_{ i \in \Ac_{\Lin}^{[1]}}   \Lk_{i}^{[1]} (j)  &=    \sum_{ i \in \Ac_{\Lin,  \Lin^{\star}}^{[1]}}   \Lk_{i}^{[1]} (j)   + \sum_{ i \in \Ac_{\Lin}^{[1]}  \setminus \Ac_{\Lin,  \Lin^{\star}}^{[1]}}   \Lk_{i}^{[1]} (j)     \non\\
&=    \sum_{ i \in \Ac_{\Lin,  \Lin^{\star}}^{[1]}}   \Lk_{i}^{[1]} (j)    \label{eq:A22condition91858}  \\
&\leq     |\Ac_{\Lin,  \Lin^{\star}}^{[1]}|    \non\\
&\leq     k-1     \label{eq:A22condition81857}
\end{align}
where \eqref{eq:A22condition91858} results from the fact that $\hv_i^\T  \Meg_{\Lin^{\star}}\neq   \hv_i^\T  \Meg_{\Lin}$ for $i \in \Ac_{\Lin}^{[1]}  \setminus \Ac_{\Lin,  \Lin^{\star}}^{[1]}$ (see  \eqref{eq:Alj11}), which implies that $ \Lk_{i}^{[1]} (j)=0$ for $i \in \Ac_{\Lin}^{[1]}  \setminus \Ac_{\Lin,  \Lin^{\star}}^{[1]}$, $j\in   \Ac_{ \Lin^{\star}}$ and $ \Lin^{\star}\neq   \Lin$; and the last inequality in \eqref{eq:A22condition81857} follows from Lemma~\ref{lm:sizeboundMatrix}. 
With the result in \eqref{eq:A22condition81857}, we can bound that 
 \begin{align}
\sum_{ i \in \Ac_{2,2}^{[1]}}   \Lk_{i}^{[1]} (j)  \leq \sum_{ i \in \Ac_{2}^{[1]}}   \Lk_{i}^{[1]} (j)  \leq  k-1,\quad     \forall j  \in\cup_{\Lin=3}^{\gn}\Ac_{\Lin}    \label{eq:A22condition4747743} 
\end{align}
which also gives the following bound 
 \begin{align}
\sum_{ j \in\cup_{\Lin=3}^{\gn}\Ac_{\Lin}} \sum_{ i \in \Ac_{2,2}^{[1]}}   \Lk_{i}^{[1]} (j)  \leq    (k-1) \cdot \sum_{\Lin=3}^{\gn} |\Ac_{\Lin}|  \label{eq:A22condition43536} 
\end{align}

By combining the results of  \eqref{eq:A22condition91875} and  \eqref{eq:A22condition43536}, the following bound is obvious 
  \begin{align}
 (n-t -|\Fc|-|\Ac_{2}|) \cdot |\Ac_{2,2}^{[1]} |   \leq    (k-1) \cdot \sum_{\Lin=3}^{\gn} |\Ac_{\Lin}|  \label{eq:A22condition92858} 
\end{align}
which also implies that 
 \begin{align}
|\Ac_{2,2}^{[1]}| \leq    \frac{(k-1) \cdot \sum_{\Lin=3}^{\gn} |\Ac_{\Lin}|}{n-t- |\Fc| -|\Ac_{2}| }  ,    \label{eq:A22condition32535}  
\end{align}
where $n-t- |\Fc| -|\Ac_{2}|>0$ holds true under   the conditions of $|\Ac_{2}| \leq    \frac{n-|\Fc|-\sum_{\Lin=3}^{\gn} |\Ac_{\Lin}|}{2}$  and $n\geq 3t+1$.
 
From  the result in \eqref{eq:A22condition32535} we have the following bound  
\begin{align}
|\Ac_{2,2}^{[1]}| &\leq    \frac{(k-1) \cdot \sum_{\Lin=3}^{\gn} |\Ac_{\Lin}|}{n-t- |\Fc| -|\Ac_{2}| }    \non\\
&\leq    \frac{(k-1) \cdot \sum_{\Lin=3}^{\gn} |\Ac_{\Lin}|}{n-t- |\Fc| - \frac{n-|\Fc|-\sum_{\Lin=3}^{\gn} |\Ac_{\Lin}|}{2}}     \label{eq:A22bound2446}  \\
&=    \frac{2(k-1) \cdot \sum_{\Lin=3}^{\gn} |\Ac_{\Lin}|}{n-2t- |\Fc| + \sum_{\Lin=3}^{\gn} |\Ac_{\Lin}| }    \non \\
&=    \frac{2(k-1)  }{\frac{n-2t- |\Fc|}{\sum_{\Lin=3}^{\gn} |\Ac_{\Lin}|} + 1}    \non \\
&\leq     \frac{2(k-1)  }{0 + 1}      \label{eq:A22bound6255}  \\
&=    2(k-1)       \label{eq:A22bound82755}  
\end{align}
where   \eqref{eq:A22bound2446} uses the condition $|\Ac_{2}| \leq    \frac{n-|\Fc|-\sum_{\Lin=3}^{\gn} |\Ac_{\Lin}|}{2}$;  \eqref{eq:A22bound6255} is derived from the identity  that   $\frac{n-2t- |\Fc|}{\sum_{\Lin=3}^{\gn} |\Ac_{\Lin}|} > 0$ in this   case with $\sum_{\Lin=3}^{\gn} |\Ac_{\Lin}| >0$, and given the condition of $n\geq 3t+1$.  

Let us now consider the case with $\Lin = 2$.   In this case we will prove that $|\Ac_{2,2}^{[1]}|=0$ under the condition of $|\Ac_{2}| \leq    \frac{n-|\Fc|-\sum_{\Lin=3}^{\gn} |\Ac_{\Lin}|}{2}$.
Let us assume that $|\Ac_{2,2}^{[1]}| >0$.  Then, by following the steps in \eqref{eq:A22condition66556}-\eqref{eq:A22condition28366}, and given $\Lin = 2$ in this case, we have 
 \begin{align}
0=\sum_{ j \in\cup_{\Lin=3}^{\gn}\Ac_{\Lin}}   \Lk_{i}^{[1]} (j)  \geq n-t -|\Fc|-|\Ac_{2}|,\quad     \forall i\in \Ac_{2,2}^{[1]} .   \label{eq:A22condition28366BB} 
\end{align}
The bound in  \eqref{eq:A22condition28366BB} apparently contradicts the condition of $|\Ac_{2}| \leq    \frac{n-|\Fc|-\sum_{\Lin=3}^{\gn} |\Ac_{\Lin}|}{2} < n-t -|\Fc|$. In other words, the assumption of $|\Ac_{2,2}^{[1]}| >0$ leads to a contradiction. Therefore, it is true that $|\Ac_{2,2}^{[1]}|=0$ under the condition of $|\Ac_{2}| \leq    \frac{n-|\Fc|-\sum_{\Lin=3}^{\gn} |\Ac_{\Lin}|}{2}$, for this case with  $\Lin = 2$.
At this point we complete the proof. 
\end{proof}

\begin{algorithm}
\caption{$\OciorRBC$ protocol with identifier $\ltuple \IDCOOL, \RBCleaderindex \rtuple $.  Code is shown for    $S_{\thisnodeindex}$. }  \label{algm:OciorRBC}
\begin{algorithmic}[1]
\vspace{5pt}    
 
\footnotesize
 \Statex   \emph{//   **   $\OciorRBC$ can be slightly modified to a reliable broadcast protocol without  balancing the communication between the leader the other nodes. In this case,  in the initial phase the leader just broadcasts the whole message to each node. **}

\Statex {\bf \emph{Initial phase}}

\State  Initially set   $k\gets \bigl \lfloor   \frac{ t  }{5 } \bigr\rfloor    +1$; $\OECSI\gets 0; \OECSIFinal\gets 0; \OECsymbolset \gets  \{\}; \OECCorrectSymbolSet\gets \{\}; \Lkset_0\gets \{\}; \Lkset_1\gets \{\}; \Ss_0^{[1]}\gets \{\};\Ss_1^{[1]}\gets \{\};\Ss_0^{[2]}\gets \{\};\Ss_1^{[2]}\gets \{\};   \Me_{i}\gets \defaultvalue;  \Me^{(i)}\gets \defaultvalue;   \ECCEncindicator \gets 0; \SIPhtwo\gets 0;  \Phoneindicator \gets 0; \Phtwoindicator\gets 0; \Phthreeindicator\gets 0$

\State {\bf upon} receiving  a non-empty  message input $\wv$, and if this node is the leader, i.e.,  $\thisnodeindex= \RBCleaderindex$ {\bf do}:   \quad   \emph{//    only for  the leader node  } 
\Indent  
		\State  $[\OECsymbol_1, \OECsymbol_2, \cdots, \OECsymbol_{n}]\gets \ECCEnc(n, k, \wv)$
		\State   $\send$ $\ltuple   \LEADER, \IDCOOL,  \OECsymbol_{j} \rtuple $ to  $S_j$,    $\forall j \in  [n]$       \label{line:ph0Lead} 
 
\EndIndent

\State {\bf upon} receiving   $\ltuple   \LEADER, \IDCOOL,  \OECsymbol_{i} \rtuple $  from  the leader for the first time {\bf do}:  
\Indent  
		\State   $\send$ $\ltuple   \INITIAL, \IDCOOL,   \OECsymbol_{i} \rtuple $ to  all nodes        \label{line:ph0Echo}       \quad\quad \quad \quad\quad\quad \quad   \quad\quad \quad \quad\quad\quad \quad   \quad\quad \quad \quad\quad\quad \quad \quad   \quad\quad \quad   \emph{// echo coded  symbol}   		
\EndIndent

\State {\bf upon} receiving message  $\ltuple   \INITIAL, \IDCOOL,  \OECsymbol_{j} \rtuple $ from $S_j$  for the first time, and $\OECSI=0$   {\bf do}:   \label{line:ph0OECbegin} 	
\Indent  

		\State $\OECsymbolsetInitial \gets \OECsymbolsetInitial\cup  \{j: \OECsymbol_{j}\}$	
 		\If  { $|\OECsymbolsetInitial|\geq  k + t  $}    \label{line:ph0OECCond}   \quad    \quad\quad \quad \quad\quad\quad \quad    \quad\quad \quad \quad\quad\quad \quad   \quad\quad \quad \quad\quad\quad \quad   \quad\quad \quad \quad\quad\quad \quad    \emph{//   online error correcting  (OEC)  }  
			\State   $\tilde{\wv}  \gets \ECCDec(n,  k, \OECsymbolsetInitial)$	
			\State  $ [\OECsymbol_{1}', \OECsymbol_{2}', \cdots, \OECsymbol_{n}'] \gets \ECCEnc (n,  k, \tilde{\wv})$ 
    			\If {at least $k + t$ symbols in $ [\OECsymbol_{1}', \OECsymbol_{2}', \cdots, \OECsymbol_{n}'] $ match with  those in $\OECsymbolsetInitial$, and $\tilde{\wv}$ is non-empty}
 				\State  $\Me_{i} \gets \tilde{\wv}, \Me^{(i)} \gets \tilde{\wv}; \OECSI\gets 1; \Phoneindicator \gets 1$   	  \label{line:RBCph0B} 	
    			\EndIf    
		\EndIf   
	    		
\EndIndent

\Statex {\bf \emph{Phase~1}}       

 \State {\bf upon}  $\Phoneindicator = 1$ {\bf do}:  
\Indent  

 \State    $[y_1^{(i)}, y_2^{(i)}, \cdots, y_{n}^{(i)}]\gets \ECCEnc(n, k, \Me_{i})$     \label{line:RBCECCEnc}
 \State   $\send$  $\ltuple   \SYMBOL, \IDCOOL,  (y_j^{(i)}, y_i^{(i)}) \rtuple$ to $S_j$,  $\forall j \in  [\networksizen]$, and then set $\ECCEncindicator\gets 1$   \  \quad\quad \quad \quad\quad\quad \quad \quad \emph{// exchange coded  symbols}   		 \label{line:RBCECCEncindicator}
  
\EndIndent

\State {\bf upon} receiving   $\ltuple\SYMBOL, \IDCOOL, (y_i^{(j)}, y_j^{(j)}) \rtuple$  from  $S_j$ for the first time  {\bf do}:  
\Indent  
 
	\State  $\wait$ until  $\ECCEncindicator =1$ 
	\If {$ (y_i^{(j)}, y_j^{(j)}) = (y_i^{(i)}, y_j^{(i)})$ }         \label{line:RBCph1MatchCond}
		\State  $\Lkset_1\gets \Lkset_1\cup \{j\}$              \label{line:RBCph1Match}   \quad  \quad \quad \quad\quad \quad \quad \quad \quad \quad\quad \quad \quad  \quad \quad \quad \quad \quad \quad \quad \quad  \quad \quad \quad \quad \emph{//update the set of link indicators}
	\Else 
		\State  $\Lkset_0\gets \Lkset_0\cup \{j\}$            \label{line:RBCph1NotMatch} 
	\EndIf
 
\EndIndent

\State {\bf upon}  $|\Lkset_1|\geq  n- t $,  and  $\ltuple\SIone, \IDCOOL, * \rtuple$  not yet sent {\bf do}:      \label{line:RBCph2OneCond}
\Indent  
           \State	set  $\Ry_i^{[1]} \gets 1$,  $\send$ $\ltuple\SIone, \IDCOOL, \Ry_i^{[1]}\rtuple$ to all nodes, and then set $\Phtwoindicator\gets 1$    \label{line:RBCph2SI1}    \quad   \quad\quad\quad \emph{//set success indicator}
 
\EndIndent

\State {\bf upon}  $|\Lkset_0|\geq   t +1$, and  $\ltuple\SIone, \IDCOOL, * \rtuple$  not yet sent {\bf do}:       \label{line:RBCph2ZeroCond}
\Indent  
           \State	set $\Ry_i^{[1]} \gets 0$,   $\send$ $\ltuple\SIone, \IDCOOL, \Ry_i^{[1]}\rtuple$ to all nodes, 	and  then set $\Phtwoindicator\gets 1$    \label{line:RBCph2B}
\EndIndent

\State {\bf upon} receiving   $\ltuple\SIone, \IDCOOL, \Ry_j^{[1]}\rtuple$  from  $S_j$ for the first time   {\bf do}:    \label{line:RBCph1SS01Cond}  
\Indent  
	\If {$ \Ry_j^{[1]} =1$ }     
		\State  $\wait$ until  $(j \in  \Lkset_1\cup \Lkset_0)\OR(|\Ss_1^{[1]}|\geq  n- t)\OR(|\Ss_0^{[1]}|\geq   t +1)$ 
		\If {$ j \in  \Lkset_1$ }         \label{line:RBCph1SS1Cond} 
			\State  $\Ss_1^{[1]}\gets \Ss_1^{[1]}\cup \{j\}$        \label{line:RBCph1SS1}         \quad  \quad \quad \quad\quad \quad \quad \quad \quad \quad\quad \quad \quad  \quad \quad \quad \quad \quad \quad \quad \quad  \quad \quad \quad \quad \emph{//update the set of success indicator as ones}
		\ElsIf{$ j \in  \Lkset_0$ }
			\State  $\Ss_0^{[1]} \gets \Ss_0^{[1]}\cup \{j\}$       \label{line:RBCph1SS0}     \quad  \quad \quad \quad\quad \quad \quad \quad \quad \quad\quad \quad \quad  \quad \quad \quad \quad \quad \quad \quad \quad  \quad \quad \quad \quad \emph{//mask identified errors  (mismatched links)}
		\EndIf
	\Else 
		\State  $\Ss_0^{[1]} \gets \Ss_0^{[1]}\cup \{j\}$        \label{line:RBCph1SS0NotM}    \quad  \quad \quad \quad\quad \quad \quad \quad \quad \quad\quad \quad \quad  \quad \quad \quad \quad \quad \quad \quad \quad  \quad \quad \quad \quad \emph{//mask identified errors  (mismatched links)}
	\EndIf
 
\EndIndent

\Statex  {\bf \emph{Phase~2}}    

 \State {\bf upon}  $(\Phtwoindicator= 1)\AND(\Ry_i^{[1]} = 0)$, and  $\ltuple\SItwo, \IDCOOL, \Ry_i^{[2]}\rtuple$ not yet sent {\bf do}:  
\Indent  

	\State	set $\Ry_i^{[2]} \gets 0$,   $\send$ $\ltuple\SItwo, \IDCOOL, \Ry_i^{[2]}\rtuple$ to all nodes  \label{line:RBCph2CSI0}    \quad\quad\quad\quad \quad \quad\quad\quad\quad \quad \quad\quad\quad\quad \quad   \emph{//update success indicator }
	
\EndIndent

 \State {\bf upon}  $(\Phtwoindicator= 1)\AND(\Ry_i^{[1]} = 1) \AND(|\Ss_1^{[1]}|\geq  n- t)$, and  $\ltuple\SItwo, \IDCOOL, \Ry_i^{[2]}\rtuple$ not yet sent   {\bf do}:   \label{line:RBCph2ACond} 
\Indent   

           \State	set  $\Ry_i^{[2]} \gets 1$, $\SIPhtwo\gets 1$, and   $\send$ $\ltuple \SItwo, \IDCOOL,\Ry_i^{[2]}\rtuple$ to all nodes   \label{line:RBCph2ASI1} 
	
\EndIndent

 \State {\bf upon}  $|\Ss_0^{[1]}|\geq   t +1$, and  $\ltuple\SItwo, \IDCOOL, \Ry_i^{[2]}\rtuple$ not yet sent   {\bf do}:  
\Indent  

           \State	set $\Ry_i^{[2]} \gets 0$,   $\send$ $\ltuple \SItwo, \IDCOOL, \Ry_i^{[2]}\rtuple$ to all nodes    \label{line:RBCph2BSI0}     
	
\EndIndent

 \State {\bf upon} receiving   $\ltuple\SItwo, \IDCOOL, \Ry_j^{[2]}\rtuple$  from  $S_j$ for the first time  {\bf do}:     \label{line:RBCph2SS01Cond}     
\Indent  
	\If {$ \Ry_j^{[2]} =1$ }     
		\State  $\wait$ until $(j \in  \Lkset_1\cup \Lkset_0)\OR(|\Ss_1^{[2]}|\geq  n- t)\OR(|\Ss_0^{[2]}|\geq   t +1)$ 
		\If {$ j \in  \Lkset_1$ }     
			\State  $\Ss_1^{[2]}\gets \Ss_1^{[2]}\cup \{j\}$          
		\ElsIf{$ j \in  \Lkset_0$ }
			\State  $\Ss_0^{[2]} \gets \Ss_0^{[2]}\cup \{j\}$            
		\EndIf
	\Else 
		\State  $\Ss_0^{[2]} \gets \Ss_0^{[2]}\cup \{j\}$            \label{line:RBCph2SS0NM}     
	\EndIf
 
\EndIndent

\algstore{COOLDRBC}

\end{algorithmic}
\end{algorithm}

\begin{algorithm}
\begin{algorithmic}[1]
\algrestore{COOLDRBC}
\vspace{5pt}    
 \footnotesize

\State {\bf upon}   $|\Ss_{\Vr}^{[2]}|\geq  n- t$,     for a  $\Vr \in \{1,0\}$,  and $\ltuple\READY, \IDCOOL, * \rtuple$  not yet sent {\bf do}:      \label{line:BRCReadyCondition}   
\Indent  
		\State $\send$ $\ltuple\READY, \IDCOOL, \Vr \rtuple$ to  all nodes  	  \label{line:BRCReadySendA}
\EndIndent

\State {\bf upon} receiving   $t+1$  $\ltuple \READY, \IDCOOL, \Vr  \rtuple$ messages  from different   nodes for the same $\Vr$ and $\ltuple\READY, \IDCOOL, * \rtuple$  not yet sent {\bf do}:        \label{line:BRCRelialbeBegin}   
\Indent  
		\State $\send$ $\ltuple\READY, \IDCOOL, \Vr \rtuple$ to  all nodes	   \label{line:BRCRelialbeBeginBB}   
\EndIndent

\State {\bf upon} receiving   $2t+1$  $\ltuple \READY, \IDCOOL, \Vr  \rtuple$ messages  from different nodes   for the same $\Vr$ {\bf do}:    \label{line:BRCRelialbeEnd}  
\Indent  
	\If {$\ltuple\READY, \IDCOOL, * \rtuple$  not yet sent }     
		\State $\send$ $\ltuple\READY, \IDCOOL, \Vr \rtuple$ to  all nodes    \label{line:BRCReadySendC}
	\EndIf
	\State  set $\VrOutput\gets \Vr$             \label{line:BRCVrOutput}     
	\If {$\VrOutput =0$ }     
		\State set $\Me^{(i)}\gets \defaultvalue$, then $\Output$   $\Me^{(i)}$ and $\terminate$  	 \label{line:BRCoutputdefault}     	 \quad\quad\quad\quad   \quad\quad\quad\quad \quad\quad     \emph{//  $\defaultvalue$ is a default value  }    
	\Else
		\State  set $\Phthreeindicator\gets 1$        \label{line:BRCph3}

	\EndIf
\EndIndent

\Statex

\Statex  {\bf \emph{Phase~3}}

\State {\bf upon} $\Phthreeindicator= 1$ {\bf do}:     \quad \quad\quad\quad\quad\quad  \quad \quad\quad \quad \quad\quad\quad\quad\quad  \quad \quad\quad\quad\quad\quad\quad \quad      \emph{//     only after executing Line~\ref{line:BRCph3}}
\Indent  
\If { $\SIPhtwo = 1$}          \label{line:BRCph3SIPhtwo}
	\State $\Output$   $\Me^{(i)}$ and $\terminate$     \label{line:BRCph3SIPhtwoOutput}
		 
\Else
 
			\State  $\wait$ until receiving $t +1$ $\ltuple\SYMBOL, \IDCOOL, (y_i^{(j)},  *) \rtuple$  messages, $\forall j \in   \Ss_1^{[2]}$, for the same    $y_i^{(j)} =y^{\star}$,  for some   $y^{\star}$   \label{line:BRCph3MajorityRuleCond}
			\State  $y_i^{(i)} \gets y^{\star}$   \label{line:BRCph3MajorityRule}   \quad\quad\quad\quad \quad\quad\quad\quad\quad\quad\quad\quad\quad\quad\quad\quad   \quad\quad\quad\quad\quad\quad\quad\quad \quad\quad \emph{// update coded symbol based on  majority rule}  
	\State   $\send$   $\ltuple \CORRECTSYMBOL, \IDCOOL, y_i^{(i)}  \rtuple$  to  all nodes      \label{line:BRCph3SendCorrectSymbols} 
			\State  $\wait$ until  $\OECSIFinal=1$ 	
			\State $\Output$   $\Me^{(i)}$ and $\terminate$     \label{line:BRCph3Output2}	
	
\EndIf

\EndIndent

\State {\bf upon} receiving   $\ltuple\CORRECTSYMBOL, \IDCOOL, y_j^{(j)} \rtuple$  from  $S_j$ for the first time,   $j\notin \OECCorrectSymbolSet$, and $\OECSIFinal=0$   {\bf do}:  \label{line:Ph3OECCond}
\Indent  
	\State $\OECCorrectSymbolSet[j] \gets y_j^{(j)}$    
 	\If  { $|\OECCorrectSymbolSet|\geq  k + t  $}     \label{line:OECbegin}    \quad    \quad\quad \quad \quad\quad\quad \quad    \quad\quad \quad \quad\quad\quad \quad   \quad\quad \quad \quad\quad\quad \quad   \quad\quad \quad \quad\quad\quad \quad    \emph{//   online error correcting  (OEC)  }  
			\State   $\MVBAOutputMsg  \gets \ECCDec(n,  k , \OECCorrectSymbolSet)$	     
			\State  $[y_{1}, y_{2}, \cdots, y_{n}] \gets \ECCEnc (n,  k, \MVBAOutputMsg)$ 
    			\If {at least $k + t$ symbols in $[y_{1}, y_{2}, \cdots, y_{n}]$ match with  those in $\OECCorrectSymbolSet$}
 				\State  $ \Me^{(i)} \gets \MVBAOutputMsg; \OECSIFinal\gets 1$     \label{line:OECend}	 
    			\EndIf    

	\EndIf   
		     
\EndIndent

\State {\bf upon} having received   both $\ltuple\SYMBOL, \IDCOOL, (y_i^{(j)}, y_j^{(j)})\rtuple$ and   $\ltuple \SItwo, \IDCOOL, 1\rtuple$ messages from  $S_j$, and $j\notin \OECCorrectSymbolSet$, and  $\OECSIFinal=0$   {\bf do}:    \label{line:RBCPh3OECSecCond}
\Indent  
	\State $\OECCorrectSymbolSet[j] \gets y_j^{(j)}$     
	\State run the OEC steps as in Lines~\ref{line:OECbegin}-\ref{line:OECend}      \label{line:Ph3OECAllEnd}
\EndIndent

\end{algorithmic}
\end{algorithm}

\newpage

\section{$\OciorRBC$}    \label{sec:OciorRBC}

This proposed $\OciorRBC$ is an asynchronous error-free Byzantine reliable broadcast  protocol.  
$\OciorRBC$ doesn't rely on any cryptographic assumptions such as signatures or hashing. 
This proposed $\OciorRBC$ is an extension of $\OciorCOOL$.

\subsection{Overview of $\OciorRBC$}    \label{sec:OverviewOciorRBC}

The proposed  $\OciorRBC$ is described in Algorithm~\ref{algm:OciorRBC}. In the following, we provide an overview of the proposed protocol.

\subsubsection{Initial phase} $\OciorRBC$ is a \emph{balanced} reliable broadcast protocol where communication overhead is distributed evenly between the leader and the other nodes. In this initial phase, the goal is to multicast the leader's message to the distributed nodes with balanced communication. This initial phase guarantees that, if the leader is honest  then every honest node eventually outputs the message sent from the leader. 

Without considering balanced communication, the leader could simply send the entire initial message to each node during the initial phase. However, this simple multicasting would result in a heavy communication load for the leader. To reduce this load, the leader sends different coded symbols to different nodes in the initial phase, with these symbols being encoded from the initial message (Line~\ref{line:ph0Lead}). Each node then  echoes the received symbol to all other nodes  (Line~\ref{line:ph0Echo}). Upon receiving the coded symbols, each node conducts online error correction to decode the message sent by the leader (Lines~\ref{line:ph0OECbegin}-\ref{line:RBCph0B}).

\subsubsection{Phase~1}  The goal of Phase~1 is  to exchange coded information symbols and mask  inconsistent messages.    
This phase, together with Phase~2, guarantee that all honest nodes who set their success indicators to one in Phase~2 should have the same input message at the beginning of Phase~$1$.

In this phase, Node~$i$ encodes the message $\Me_{i}$ delivered from the initial phase into coded symbols  $[y_1^{(i)}, y_2^{(i)}, \cdots, y_{n}^{(i)}]$ by using error correction code, for $i\in [\networksizen]$ (Line~\ref{line:RBCECCEnc}).  Then, Node~$i$ sends  $\ltuple   \SYMBOL, \IDCOOL,  (y_j^{(i)}, y_i^{(i)}) \rtuple$ to Node~$j$,  $\forall j \in  [\networksizen]$  (Line~\ref{line:RBCECCEncindicator}).  Upon receiving   $\ltuple\SYMBOL, \IDCOOL, (y_i^{(j)}, y_j^{(j)}) \rtuple$  from  Node~$j$ for the first time, Node~$i$ checks if  the received observation $(y_i^{(j)}, y_j^{(j)})$ matches   its available  local observation $(y_i^{(i)}, y_j^{(i)})$  (Line~\ref{line:RBCph1MatchCond}).  Node~$i$  includes the index~$j$ into the set $\Lkset_1$ if $(y_i^{(j)}, y_j^{(j)}) = (y_i^{(i)}, y_j^{(i)})$, else puts the index~$j$ into the set $\Lkset_0$ (Lines~\ref{line:RBCph1Match} and \ref{line:RBCph1NotMatch}). 
  If  $(y_i^{(j)}, y_j^{(j)}) = (y_i^{(i)}, y_j^{(i)})$, it can be considered that the link indicator between Node~$i$ and Node~$j$, denoted by  $\Lk_{i} (j)$, is  $\Lk_{i} (j)=1$. On the other hand,    $(y_i^{(j)}, y_j^{(j)}) \neq  (y_i^{(i)}, y_j^{(i)})$ implies that $\Lk_{i} (j)=0$. 
  
 When  $|\Lkset_1|\geq  n- t $, Node~$i$ sets the success indicator at Phase~1 as $\Ry_i^{[1]} = 1$  and sends  $\ltuple\SIone, \IDCOOL, \Ry_i^{[1]}\rtuple$ to all nodes  (Line~\ref{line:RBCph2SI1}).
On the other hand, when $|\Lkset_0|\geq   t +1$, Node~$i$   sets $\Ry_i^{[1]} = 0$,   and sends $\ltuple\SIone, \IDCOOL, \Ry_i^{[1]}\rtuple$ to all nodes (Line~\ref{line:RBCph2B}). 
It is worth noting  that the two conditions of $|\Lkset_1|\geq  n- t $ and $|\Lkset_0|\geq   t +1$ cannot be satisfied at the same time. 

Upon receiving   $\ltuple\SIone, \IDCOOL, 1 \rtuple$  from  Node~$j$,  Node~$i$ puts  the index~$j$ into the set $\Ss_1^{[1]}$ once Node~$i$ has received matched observation $(y_i^{(j)}, y_j^{(j)})$ from  Node~$j$, i.e.,  $j\in \Lkset_1$ (Line~\ref{line:RBCph1SS1}).  If Node~$i$ has received unmatched observation from  Node~$j$, i.e., $j\in \Lkset_0$, then   Node~$i$ puts  the index~$j$ into the set $\Ss_0^{[1]}$ (Line~\ref{line:RBCph1SS0}).  
Upon receiving   $\ltuple\SIone, \IDCOOL, 0 \rtuple$  from  Node~$j$,  Node~$i$ directly puts  the index~$j$ into the set $\Ss_0^{[1]}$  (Line~\ref{line:RBCph1SS0NotM}). 

\subsubsection{Phase~2}  One goal of Phase~2 is to mask the remaining inconsistent messages so that all honest nodes who set their success indicators to one in this phase should have the same input message at the beginning of Phase~$1$. Another goal of Phase~2 is to reach a consensus on whether to proceed to the next phase or terminate at this phase, together.

If the success indicator was set as $\Ry_i^{[1]} = 0$ at Phase~1, or  if $|\Ss_0^{[1]}|\geq   t +1$, then  Node~$i$ sets the success indicator at Phase~2 as $\Ry_i^{[2]} = 0$ (Lines~\ref{line:RBCph2CSI0} and \ref{line:RBCph2BSI0}). If the success indicator was set as $\Ry_i^{[1]} = 1$ at Phase~1 and given $|\Ss_1^{[1]}|\geq  n- t$, then  Node~$i$ sets the success indicator at Phase~2 as $\Ry_i^{[2]} = 1$ and sets a ready indicator as $\SIPhtwo= 1$ (Line~\ref{line:RBCph2ASI1}).  After setting the value of $\Ry_i^{[2]}$,  Node~$i$ sends  $\ltuple \SItwo, \IDCOOL,\Ry_i^{[2]}\rtuple$ to all nodes. 

Upon receiving   $\ltuple\SItwo, \IDCOOL, \Ry_j^{[2]}\rtuple$  from  $S_j$,  Node~$i$ conducts a process to decide whether to include the index $j$ in $\Ss_1^{[2]}$ or $\Ss_0^{[2]}$ (Lines~\ref{line:RBCph2SS01Cond}-\ref{line:RBCph2SS0NM}), similarly to the process   in Phase~1 upon receiving $\ltuple\SIone, \IDCOOL, \Ry_j^{[1]} \rtuple$ (Lines~\ref{line:RBCph1SS01Cond}-\ref{line:RBCph1SS0NotM}) 

In this phase, the distributed honest nodes also conduct a process to reach a consensus on whether to proceed to the next phase or terminate at this phase, together  (Lines~\ref{line:BRCReadyCondition}-\ref{line:BRCph3}).

\subsubsection{Phase~3}  

 Phase~3 is initiated only after the distributed nodes have decided to proceed to this phase, together (Line~\ref{line:BRCph3}).  The goal of Phase~3 is to calibrate the coded symbols based on the majority rule to ensure consistent consensus outputs from honest nodes. 
 In this phase, if Node~$i$ has set the success indicator at Phase~2 as $\Ry_i^{[2]} = 1$  (or has set $\SIPhtwo= 1$), then Node~$i$ outputs  the message  updated in the initial phase (Line~\ref{line:RBCph0B}) and then terminates (Line~\ref{line:BRCph3SIPhtwoOutput}).  

In this phase, if Node~$i$ hasn't set the success indicator at Phase~2 as $\Ry_i^{[2]} = 1$ yet, then Node~$i$  waits until receiving $t +1$ $\ltuple\SYMBOL, \IDCOOL, (y_i^{(j)},  *) \rtuple$  messages, $\forall j \in   \Ss_1^{[2]}$, for the same    $y_i^{(j)} =y^{\star}$,  for some   $y^{\star}$  (Line~\ref{line:BRCph3MajorityRuleCond}) and then updated its coded symbol $y_i^{(i)}$ as  $y_i^{(i)} = y^{\star}$  based on the majority rule (Line~\ref{line:BRCph3MajorityRule}). Then  Node~$i$ sends the message  $\ltuple \CORRECTSYMBOL, \IDCOOL, y_i^{(i)}  \rtuple$ with   updated symbol  to  all nodes.     
 
In this phase,   Node~$i$   conducts the online error correction to decode the message  (Lines~\ref{line:Ph3OECCond}-\ref{line:Ph3OECAllEnd}), based on the received updated symbols  (Lines~\ref{line:Ph3OECCond}-\ref{line:OECend}) and symbols from nodes that have set their  success indicators to ones at Phase~$2$ (Lines~\ref{line:RBCPh3OECSecCond}-\ref{line:Ph3OECAllEnd}). After the completion of online error correction,  Node~$i$ outputs the decoded message and terminates (Line~\ref{line:BRCph3Output2}).

\subsection{Analysis of $\OciorRBC$}    \label{sec:AnalysisOciorRBC}

The analysis of $\OciorRBC$  follows closely that of $\OciorCOOL$ shown in Section~\ref{sec:AnalysisOciorCOOL}. 
In the analysis here we will use similar notations previously used for $\OciorCOOL$. 
Similarly to the analysis for $\OciorCOOL$,  without loss of generality we just focus on the case with  $| \Fc |   = t$, where $\Fc$  is defined as  the set of       dishonest nodes.  
Here we use $\Me_{i}^{[0]}$ to denote the value of $\Me_{i}$ updated at Phase~$0$. 
If Node~$i$ never updates  the value of $\Me_{i}$ before termination, then  $\Me_{i}^{[0]}$ is considered to be a default value $\Me_{i}^{[0]}=  \defaultvalue$. 
We define some groups of honest nodes as
  \begin{align}
 \Ac_{\Lin} \defeq &  \{  i:    \Me_i^{[0]} =  \Meg_{\Lin},  \  i \notin  \Fc  , \ i \in [1:n]\}, \quad   \Lin \in [1 : \gn]     \label{eq:RBCAell00}   \\
\Ac_{\Lin}^{[p]} \defeq&   \{  i:  \Ry_i^{[p]} =1, \Me_i^{[0]} =  \Meg_{\Lin},  \  i \notin  \Fc, \ i \in [1:n]\}, \quad   \Lin \in [1 : \gn^{[p]}], \quad  p\in \{1,2\}      \label{eq:RBCAell}   \\
\Bc^{[p]} \defeq  & \{  i:  \Ry_i^{[p]} =0  ~\text{or $\Ry_i^{[p]}$ has never been set},  \  i \notin  \Fc, \ i \in [1:n] \}, \quad  p\in \{1,2\}    \label{eq:RBCBdef01} 
 \end{align} 
 for  some different  $\ell$-bit  values $\Meg_{1}, \Meg_{2}, \cdots, \Meg_{\gn-1}$, where only one of them could be a default value $\defaultvalue$; and for some non-negative integers $\gn, \gn^{[1]},\gn^{[2]}$  such that $ \gn^{[2]} \leq \gn^{[1]} \leq \gn$.    We use the same notions of $ \Ac_{\Lin,j},  \Ac_{\Lin,\Lin},  \Ac_{\Lin,j}^{[p]},  \Ac_{\Lin,\Lin}^{[p]}$  defined in \eqref{eq:Alj}-\eqref{eq:All11}. 
 We use   $\Lk_{i} (j), \Lk_{i}^{[1]} (j), \Lk_{i}^{[2]} (j)\in \{0,1\}$ to denote the   link indicator between Node $i$ and Node~$j$,   its value  at   Phase~1,  and its value  at Phase~2, respectively,  from the view of Node~$i$, defined by  
  \begin{numcases}  
{\Lk_{i}^{[1]} (j)= \Lk_i (j)=} 
     1   &    if   \    $(y_i^{(j)}, y_j^{(j)}) = (y_i^{(i)}, y_j^{(i)})$           			 \label{eq:lkindicator}  \\
  0  &  else            		\non  
\end{numcases}
and
  \begin{numcases}  
{\Lk_{i}^{[2]} (j)=} 
     \Lk_{i}^{[1]} (j)    &    if   \       $\Ry_i^{[1]}=\Ry_j^{[1]}=1$             			 \label{eq:lkindicator2}  \\
  0  &  else         .   		\non  
\end{numcases}
 It is worth mentioning that if  $(y_{i}^{(i)}, y_j^{(i)})$ are never sent by Node~$i$, or  $(y_{i}^{(j)}, y_j^{(j)})$  are never received at Node~$i$, then $\Lk_{i}^{[1]} (j) =\Lk_{i} (j) = 0$.  
 Similarly, if  $\Ry_i^{[1]}$ is never sent by Node~$i$, or  $\Ry_j^{[1]}$  is never received at Node~$i$, then $\Lk_{i}^{[2]} (j) = 0$.  
 In our setting, for any honest Node $i$ and Node $j$,   eventually they will have the same view on $ (y_i^{(j)}, y_j^{(j)}),  (y_i^{(i)}, y_j^{(i)}), \Ry_i^{[1]}, \Ry_j^{[1]}$. Therefore, it holds true that eventually $\Lk_i (j)=\Lk_j(i)$, $\Lk_{i}^{[1]} (j)= \Lk_{j}^{[1]} (i)$, and $\Lk_{i}^{[2]} (j)= \Lk_{j}^{[2]} (i)$, for any $i,j \notin \Fc$.  In the analysis here we focus on the final values of the link indicators.

The main results of  $\OciorRBC$ are summarized in Theorems~\ref{thm:BRCtotality}-\ref{thm:BRCPerformance}. Theorems~\ref{thm:BRCtotality}-\ref{thm:BRCvalidity} reveal that, given $n\geq 3t+1$, the  totality, validity and consistency  conditions are all satisfied in all executions (\emph{error-free}).  
Theorem~\ref{thm:BRCPerformance} shows that $\OciorRBC$ is optimal in terms of  communication complexity, round complexity and resilience.

 \begin{theorem}  [Totality and Consistency]  \label{thm:BRCtotality}
In $\OciorRBC$,  given $n\geq 3t+1$, if one  honest node outputs a value $\Me^{\star}$, then every honest node  eventually outputs a value $\Me^{\star}$, for some $\Me^{\star}$. 
\end{theorem}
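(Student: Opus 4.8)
The plan is to reduce the statement to two layers: an underlying binary agreement on the value $\Vr\in\{0,1\}$ carried by the $\READY$ messages (Lines~\ref{line:BRCReadyCondition}--\ref{line:BRCVrOutput}), which follows the Bracha amplification pattern, and then a message-level layer that determines the actual output once $\Vr$ is fixed. Every honest output is produced either in Line~\ref{line:BRCoutputdefault} (when $\VrOutput=0$, emitting $\defaultvalue$) or in Phase~3 (Lines~\ref{line:BRCph3SIPhtwoOutput} or \ref{line:BRCph3Output2}, emitting $\Me^{(i)}$ after $\Phthreeindicator$ is set in Line~\ref{line:BRCph3}). So it suffices to prove (a) all honest nodes that assign $\VrOutput$ assign the same value, (b) if one honest node assigns $\VrOutput$ then all do, and (c) when $\Vr=1$ every honest node terminates Phase~3 with the same message.

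First I would establish the totality of $\Vr$ by the standard $\READY$-quorum argument: receiving $2t+1$ copies of $\langle\READY,\IDCOOL,\Vr\rangle$ (Line~\ref{line:BRCRelialbeEnd}) implies at least $t+1$ honest nodes sent that value, which triggers the $t+1$-amplification step (Line~\ref{line:BRCRelialbeBegin}) at every honest node, so eventually all $n-t\ge 2t+1$ honest nodes send $\langle\READY,\IDCOOL,\Vr\rangle$ and every honest node reaches Line~\ref{line:BRCVrOutput}. For consistency of $\Vr$ I would trace the \emph{first} honest node to emit each value: since the amplification rule needs $t+1$ prior $\READY$ messages and there are only $t$ dishonest nodes, that first honest emission must come from the seed condition $|\Ss_\Vr^{[2]}|\ge n-t$ in Line~\ref{line:BRCReadyCondition}. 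Consistency of $\Vr$ therefore reduces to showing that no honest node can collect $|\Ss_1^{[2]}|\ge n-t$ while another honest node collects $|\Ss_0^{[2]}|\ge n-t$.

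This last step is where the main obstacle lies. The two quorums intersect in at least $n-2t\ge t+1$ indices, hence in an honest index $j$; because $j$ is honest it reports a single value $\Ry_j^{[2]}$, so the naive Bracha contradiction fails precisely when $j$ enters one node's $\Ss_0^{[2]}$ through the mismatch branch of Line~\ref{line:RBCph2SS01Cond} (i.e. $j\in\Lkset_0$) rather than by reporting $\Ry_j^{[2]}=0$. To close this gap I would invoke the $\OciorCOOL$ structural machinery of Section~\ref{sec:AnalysisOciorCOOL}: the analogue of Lemma~\ref{lm:uniquegroup} gives $\gn^{[2]}\le 1$, so every honest node with $\Ry^{[2]}=1$ shares a single message $\Meg_1$, and $|\Ss_1^{[2]}|\ge n-t$ forces at least $n-2t\ge t+1$ honest nodes into $\Ac_1^{[2]}$ with symbols $\hv_\cdot^\T\Meg_1$ matching the seeding node. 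Using the distance bound of Lemma~\ref{lm:sizeboundMatrix} (two distinct messages agree on fewer than $k$ coded positions) together with link symmetry $\Lk_i(j)=\Lk_j(i)$, I would argue that an honest node whose message is $\Meg_1$ matches all of $\Ac_1^{[2]}$ and hence cannot push $n-t$ indices into $\Ss_0^{[2]}$, while ruling out a surviving second message group via the size bounds behind Lemmas~\ref{lm:eta1231} and \ref{lm:boundA22}; this is the delicate part, and the place where the asynchronous $\Ss^{[2]}$-construction must be reconciled with the synchronous COOL counting.

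Finally, given agreement on $\Vr$, I would finish the message layer. If $\Vr=0$ every honest node outputs $\defaultvalue$ in Line~\ref{line:BRCoutputdefault}, so totality and consistency are immediate with $\Me^\star=\defaultvalue$. If $\Vr=1$, then by the quorum that produced the seed there are at least $t+1$ honest nodes in $\Ac_1^{[2]}$ holding $\Meg_1$; the nodes with $\SIPhtwo=1$ output $\Me^{(i)}=\Meg_1$ directly (Line~\ref{line:BRCph3SIPhtwoOutput}), while the remaining honest nodes recalibrate $y_i^{(i)}$ by the majority rule of Line~\ref{line:BRCph3MajorityRule} --- which yields $\hv_i^\T\Meg_1$ because the majority over $\Ss_1^{[2]}$ is honest --- and then decode $\Meg_1$ by online error correction (Lines~\ref{line:OECbegin}--\ref{line:OECend}), exactly as in Lemma~\ref{lm:OciorCOOLHMDM}. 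Termination of the online error correction follows because at least $n-t$ symbols consistent with $\Meg_1$ eventually arrive, so each such node outputs $\Me^\star=\Meg_1$ in Line~\ref{line:BRCph3Output2}. Combining the two cases, every honest node eventually outputs the common value $\Me^\star$, which proves both totality and consistency.
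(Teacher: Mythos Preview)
Your high-level decomposition---first pin down the binary $\VrOutput$ via the Bracha-style $\READY$ mechanism, then treat $\VrOutput=0$ (default output) and $\VrOutput=1$ (Phase~3) separately---is exactly the structure of the paper's proof, which packages the first layer as Lemma~\ref{lm:RBCvoutput} and the uniqueness of the Phase-3 message as Lemma~\ref{lm:BRCuniquegroup}. Your Phase-3 argument (nodes with $\SIPhtwo=1$ output $\Meg_1$ directly; the rest recalibrate $y_i^{(i)}$ by majority over $\Ss_1^{[2]}$ and recover $\Meg_1$ via OEC) is also the paper's argument almost verbatim.

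The one substantive difference is how consistency of $\Vr$ is obtained. The paper does it by a bare quorum count on $\langle\SItwo\rangle$ messages (Lemma~\ref{lm:RBCsmaeready}): a seed with $|\Ss_{\Vr^\star}^{[2]}|\ge n-t$ is read as ``$n-t$ nodes sent $\langle\SItwo,\Vr^\star\rangle$'', whence at most $2t<n-t$ nodes remain to supply the opposite value. You correctly notice that this reading is not literally justified for $\Vr^\star=0$, since an honest $j$ with $\Ry_j^{[2]}=1$ can enter another node's $\Ss_0^{[2]}$ through the mismatch branch $j\in\Lkset_0$, and you propose to close the gap by pulling the $\gn^{[2]}\le 1$ machinery forward into the binary layer. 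That is a reasonable direction, and it catches a genuine imprecision in the paper's Lemma~\ref{lm:RBCsmaeready}. However, your resolution is itself only a sketch: you dispose of the case where the $\Vr=0$-seeding node holds $\Meg_1$, but the case where that node holds some $\Meg'\neq\Meg_1$ (so that \emph{all} of $\Ac_1^{[2]}$ may land in its $\Lkset_0$) is only gestured at via ``ruling out a surviving second message group''. Lemmas~\ref{lm:eta1231} and~\ref{lm:boundA22} bound the sizes of $\Ac_{\ell}^{[p]}$ groups, not the $\Ss_0^{[2]}$ set at an arbitrary honest node, so more is needed there; as you yourself flag, this is the delicate step that is not yet closed.
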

\begin{proof}
Lemma~\ref{lm:RBCvoutput} reveals that,    if one  honest node sets the value of $\VrOutput$  in  Line~\ref{line:BRCVrOutput} as  $\VrOutput= \Vr^\star$  for  a binary value $\Vr^\star\in \{1,0\}$, then  every honest node  eventually sets  $\VrOutput= \Vr^\star$. 
Based on this result, if  one  honest node sets  $\VrOutput=0$ then  every honest node  eventually sets  $\VrOutput=0$. In this case,  every honest node  eventually  outputs a default value   $\defaultvalue$ (see Line~\ref{line:BRCoutputdefault}).

Based on the result of Lemma~\ref{lm:RBCvoutput}, if  one  honest node sets  $\VrOutput=1$   (see Line~\ref{line:BRCph3}), then   every honest node  eventually sets  $\VrOutput=1$.  What remains to be proved is that  in this case  all honest   nodes will eventually output the same value  at Phase~3. 

If an honest node sets  $\VrOutput=1$  and  $\SIPhtwo = 1$, then this node outputs the value of   $\Me^{(i)}$ (see Line~\ref{line:BRCph3SIPhtwoOutput}). 
 Lemma~\ref{lm:BRCuniquegroup} reveals that all of the  honest nodes who set $\SIPhtwo = 1$ at Phase~$2$ should have  the same input message $\Me^{(i)}=\Me^{\star}$ at the beginning of Phase~$1$, for some $\Me^{\star}$. 
 Thus, all of the honest node who set  $\VrOutput=1$ and  $\SIPhtwo = 1$ eventually output the same value     $\Me^{\star}$. 

If an honest node sets  $\VrOutput=1$ and  $\SIPhtwo = 0$, it can be shown that this node  will eventually  output the same value   $\Me^{\star}$ in Line~\ref{line:BRCph3Output2}. 
If an honest node sets  $\VrOutput=1$, it means that this node has  received at least   $2t+1$  $\ltuple \READY, \IDCOOL,  1  \rtuple$ messages  from different nodes, which also implies that at least one honest node has received  $n-t$  $\ltuple \SItwo, \IDCOOL, 1  \rtuple$  messages from different nodes at Phase~$2$ (see Line~\ref{line:BRCReadyCondition}). 
In other words,  if an honest node sets  $\VrOutput=1$, then at least $n-2t$ honest nodes have sent out the message  $\ltuple \SItwo, \IDCOOL, 1  \rtuple$  at Phase~$2$.  It is worth noting that if an  honest Node~$i$  sends out a message  $\ltuple \SItwo, \IDCOOL, 1  \rtuple$  at Phase~$2$, this node should have sent $\ltuple   \SYMBOL, \IDCOOL,  (y_j^{(i)}, y_i^{(i)}) \rtuple$ to $S_j$,  $\forall j \in  [\networksizen]$ at Phase~$1$. 
On the other hand, Lemma~\ref{lm:BRCuniquegroup} reveals that the honest nodes who send out  $\ltuple\SItwo, \IDCOOL, 1\rtuple$ in Phase~$2$ should have  the same input message $\Me^{\star}$ at the beginning of Phase~$1$, for some $\Me^{\star}$.
Thus, if an honest Node~$i$ sets  $\VrOutput=1$, and   $\SIPhtwo = 0$, then it will eventually receives at least $n-2t\geq t +1$ matching $\ltuple\SYMBOL, \IDCOOL, (y_i^{(j)},  *) \rtuple$  messages from the honest nodes within $\Ss_1^{[2]}$, for one and only one value $y_i^{(j)} = \ECCEnc_i (n, k, \Me^{\star})$, where $\ECCEnc_i (n, k, \Me^{\star})$ denotes the $i$th  symbol encoded from message $\Me^{\star}$.
In this case, Node~$i$ will set $y_i^{(i)} \gets \ECCEnc_i (n, k, \Me^{\star})$   in Line~\ref{line:BRCph3MajorityRule}, and send  $\ltuple \CORRECTSYMBOL, \IDCOOL, y_i^{(i)}  \rtuple$  to  all nodes   in Line~\ref{line:BRCph3SendCorrectSymbols}.  
At this point,  every symbol $y_j^{(j)}$ collected in $\OECCorrectSymbolSet$ for $j\notin \Fc$ should be the  symbol  encoded from the same message  $\Me^{\star}$, where $\Fc$ denotes the set of dishonest nodes. Therefore,   every honest node who sets  $\VrOutput=1$ and   $\SIPhtwo = 0$ will eventually decode the message $\Me^{\star}$ with OEC decoding and output $\Me^{\star}$ in Line~\ref{line:BRCph3Output2}. 
 \end{proof}

 \begin{theorem}  [Validity]  \label{thm:BRCvalidity}
Given $n\geq 3t+1$,    if the leader is  honest and inputs a value $\wv$, then every honest node eventually outputs $\wv$ in $\OciorRBC$.     
\end{theorem}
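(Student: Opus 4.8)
The plan is to trace a single honest execution of $\OciorRBC$ phase by phase and show that the honest leader's value $\wv$ is never discarded by any filtering step, so that each honest node reaches a terminal output line holding $\Me^{(i)}=\wv$. I would begin with the initial phase. Since the leader is honest it sends $\ECCEnc_{j}(n,k,\wv)$ to each $S_{j}$ (Line~\ref{line:ph0Lead}), and every honest node echoes the symbol it receives (Line~\ref{line:ph0Echo}); hence each honest node eventually collects the $\ge n-t$ symbols originating from honest nodes, all of which are genuine symbols of $\ECCEnc(n,k,\wv)$. The first key claim is that the online-error-correction loop (Lines~\ref{line:ph0OECbegin}--\ref{line:RBCph0B}) terminates with $\tilde{\wv}=\wv$ and $\OECSI=1$. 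This rests on two facts: \emph{(i)} once the observation set contains all honest symbols, the number of erroneous symbols is at most $t$, and since $k=\lfloor t/5\rfloor+1$ and $n\ge 3t+1$ give $n-t-k\ge t$, the decoder returns $\wv$; and \emph{(ii)} the verification test ``at least $k+t$ re-encoded symbols match'' can be passed only by $\wv$, because any distinct $\wv'$ re-encodes to a Reed--Solomon codeword agreeing with $\ECCEnc(n,k,\wv)$ in at most $k-1$ positions, so a spurious $\wv'$ matches at most $(k-1)+t<k+t$ observed symbols. Consequently every honest node sets $\Me_{i}^{[0]}=\Me^{(i)}=\wv$ and $\Phoneindicator=1$.

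Next I would run Phases~1 and~2. Because all honest nodes now hold the same $\Me_{i}=\wv$, they produce identical encodings, so for any two honest nodes $i,j$ the received pair $(y_{i}^{(j)},y_{j}^{(j)})$ equals the local pair $(y_{i}^{(i)},y_{j}^{(i)})$ and $j$ enters $\Lkset_{1}$ (Line~\ref{line:RBCph1Match}). Thus at every honest node $|\Lkset_{1}|\ge n-t$ while $|\Lkset_{0}|\le t$, which triggers $\Ry_{i}^{[1]}=1$ (Line~\ref{line:RBCph2SI1}) and leaves the branch $|\Lkset_{0}|\ge t+1$ dead. The same matching places every honest index into $\Ss_{1}^{[1]}$, so $|\Ss_{1}^{[1]}|\ge n-t$, which activates the condition at Line~\ref{line:RBCph2ACond} and sets $\Ry_{i}^{[2]}=1$ and $\SIPhtwo=1$ at every honest node (Line~\ref{line:RBCph2ASI1}), while $|\Ss_{0}^{[1]}|\le t<t+1$ blocks the zero branch. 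Propagating once more, every honest index enters $\Ss_{1}^{[2]}$, giving $|\Ss_{1}^{[2]}|\ge n-t$ at each honest node.

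Then I would treat the $\READY$ agreement layer and Phase~3. From $|\Ss_{1}^{[2]}|\ge n-t$ each honest node sends a $\READY$ message carrying $\Vr=1$ (Line~\ref{line:BRCReadyCondition}); since no honest node ever attains $|\Ss_{0}^{[2]}|\ge n-t$, and the amplification rule needs $t+1$ matching $\READY$ messages while at most $t$ nodes are dishonest, no honest node ever echoes or outputs on the $\Vr=0$ path. With $n-t\ge 2t+1$ honest senders of $\READY$ with $\Vr=1$, every honest node eventually collects $2t+1$ such messages, sets $\VrOutput=1$ (Line~\ref{line:BRCVrOutput}) and $\Phthreeindicator=1$ (Line~\ref{line:BRCph3}); the consistency of $\VrOutput$ across honest nodes may also be invoked from Lemma~\ref{lm:RBCvoutput}. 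Finally, in Phase~3 each honest node has $\SIPhtwo=1$ (Line~\ref{line:BRCph3SIPhtwo}), so it outputs $\Me^{(i)}=\wv$ and terminates (Line~\ref{line:BRCph3SIPhtwoOutput}), completing the argument.

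I expect the main obstacle to be the initial-phase OEC step: I must establish both \emph{liveness} (the loop eventually decodes rather than waiting forever for symbols that may never arrive in the asynchronous model) and \emph{soundness} (it decodes $\wv$ and never a spurious message), together with the parameter inequalities $n-t\ge k+t$, $n-t-k\ge t$, and $2t+k\le n$ that make the Reed--Solomon distance and decoding-radius bounds go through under $n\ge 3t+1$ and $k=\lfloor t/5\rfloor+1$. By contrast, once $\Me^{(i)}=\wv$ is secured for all honest nodes, the remainder is a routine propagation of the single value $\wv$ through the link-indicator and success-indicator bookkeeping, where every honest-versus-honest comparison matches and only the $\le t$ dishonest nodes can populate the ``$0$'' sets.
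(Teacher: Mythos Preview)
Your proposal is correct and follows essentially the same approach as the paper: trace the protocol phase by phase under an honest leader, showing that all honest nodes set $\Me^{(i)}=\wv$ in the initial phase via OEC, that all pairwise symbol comparisons between honest nodes match so that $\Ry_i^{[1]}=\Ry_i^{[2]}=1$ and $\SIPhtwo=1$ everywhere, and that the $\READY$ layer then converges on $\VrOutput=1$. The only presentational difference is that the paper, after establishing that some honest node outputs $\wv$, invokes the already-proved Totality/Consistency theorem (Theorem~\ref{thm:BRCtotality}) to conclude for all honest nodes, whereas you argue the final step directly; your version is more detailed on the OEC soundness/liveness parameters, which the paper simply asserts.
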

\begin{proof}
If the leader is  honest and inputs a value $\wv$, then each symbol  $\OECsymbol_{j}$  in $\ltuple   \LEADER, \IDCOOL,  \OECsymbol_{j} \rtuple $ sent from the leader (see  Line~\ref{line:ph0Lead}) or in $\ltuple   \INITIAL, \IDCOOL,   \OECsymbol_{j} \rtuple $ echoed by the honest node (see Line~\ref{line:ph0Echo}) should be encoded from $\wv$. Thus, at Phase~$0$ every honest node will eventually decode the same message $\wv$ with OEC decoding (see Lines~\ref{line:ph0OECCond}-\ref{line:RBCph0B}), if this node hasn't output a value yet.

Based on the above conclusion,  if an  honest node starts Phase~$1$,  it should have already set the value  of   $\Me_{i}$ and $\Me^{(i)}$ as $\Me_{i} = \Me^{(i)}=\wv$ (see Line~\ref{line:RBCph0B}). Therefore, all symbols $(y_i^{(j)}, y_j^{(j)})$ sent in the messages $\ltuple\SYMBOL, \IDCOOL, (y_i^{(j)}, y_j^{(j)}) \rtuple$  by any honest nodes should be encoded from the message $\wv$, which implies that the condition  of  $(y_i^{(j)}, y_j^{(j)}) = (y_i^{(i)}, y_j^{(i)})$ should be satisfied for any $i, j \notin \Fc$ (see Line~\ref{line:RBCph1MatchCond}). 
This suggests that the condition of  $|\Lkset_0|\geq   t +1$ should not be satisfied at any honest node (see Line~\ref{line:RBCph2ZeroCond}) and that no honest node will set $\Ry_i^{[1]} = 0$ or send out  $\ltuple\SIone, \IDCOOL, 0 \rtuple$  at Phase~$1$ (see Line~\ref{line:RBCph2B}).  
Similarly,  no honest node will set $\Ry_i^{[2]} = 0$ or send out  $\ltuple \SItwo, \IDCOOL, 0 \rtuple$  at Phase~$2$.
In this case,  at least one honest node eventually  receives at least $2t+1$  $\ltuple \READY, \IDCOOL, 1  \rtuple$ messages and sets $\VrOutput\gets 1$, and then outputs $\Me^{(i)}=\wv$  at Line~\ref{line:BRCph3SIPhtwoOutput}. 
From Theorem~\ref{thm:BRCtotality}, if one  honest node outputs a value $\wv$, then every honest node  eventually outputs a value $\wv$.  
\end{proof}

\begin{lemma}    \label{lm:RBCvoutput}
In  $\OciorRBC$,  if one  honest node sets the value of $\VrOutput$  in  Line~\ref{line:BRCVrOutput} as  $\VrOutput= \Vr^\star$  for  a binary value $\Vr^\star\in \{1,0\}$, then  every honest node  eventually sets  $\VrOutput= \Vr^\star$. 
\end{lemma}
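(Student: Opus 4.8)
The plan is to treat Lines~\ref{line:BRCReadyCondition}--\ref{line:BRCVrOutput} as a Bracha-style \emph{ready} layer and to prove the statement in two standard pieces, agreement on the committed value and liveness of the commitment. Write $\Vr^{\star}$ for the value the given honest node commits in Line~\ref{line:BRCVrOutput} and $\Vr'=1-\Vr^{\star}$ for the opposite value. Both pieces will ultimately be reduced to a single structural fact about the first-type ready messages, which I expect to be the hard part.

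For agreement I would use that every honest node broadcasts at most one ready message: each ready-emitting guard (Lines~\ref{line:BRCReadySendA}, \ref{line:BRCRelialbeBeginBB}, \ref{line:BRCReadySendC}) is qualified by ``$\ltuple\READY,\IDCOOL,*\rtuple$ not yet sent,'' so an honest node supports at most one of $\{0,1\}$. Since committing $\VrOutput=\Vr$ in Line~\ref{line:BRCVrOutput} requires $2t+1$ ready messages for the same $\Vr$ (Line~\ref{line:BRCRelialbeEnd}), and at most $t$ of these can come from $\Fc$, a commitment to $\Vr$ forces at least $t+1$ honest nodes to have broadcast ready for $\Vr$; two honest commitments to different values would then need at least $2(t+1)$ honest ready-senders out of only $n-t$ honest nodes, which at $n=3t+1$ already forces one honest node to support both values, a contradiction. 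For general $n$ this counting is not by itself enough, and I would instead obtain agreement as a corollary of the first-type consistency claim below.

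For liveness I would argue propagation. The committing node saw $2t+1$ ready messages for $\Vr^{\star}$, hence at least $t+1$ were sent by honest nodes; by eventual delivery every honest node receives these, and by the amplification rule (Line~\ref{line:BRCRelialbeBegin}) every honest node that has not already sent a ready message broadcasts $\ltuple\READY,\IDCOOL,\Vr^{\star}\rtuple$. Provided no honest node ever broadcasts ready for $\Vr'$, all $n-t\ge 2t+1$ honest nodes eventually broadcast ready for $\Vr^{\star}$, so every honest node collects $2t+1$ of them and sets $\VrOutput=\Vr^{\star}$ in Line~\ref{line:BRCVrOutput}, as required.

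Thus everything rests on the claim that \emph{no honest node ever broadcasts ready for $\Vr'$}. By induction on the order in which honest ready messages are first emitted, any honest ready sent through the amplification rules (Lines~\ref{line:BRCRelialbeBegin}, \ref{line:BRCRelialbeEnd}) inherits its value from an earlier honest ready, so it suffices to rule out two honest nodes emitting \emph{conflicting first-type} ready messages, i.e. one honest node reaching $|\Ss_1^{[2]}|\ge n-t$ in Line~\ref{line:BRCReadyCondition} while another reaches $|\Ss_0^{[2]}|\ge n-t$. This is the main obstacle. The difficulty is that $\Ss_1^{[2]}$ and $\Ss_0^{[2]}$ are filled through the pairwise link indicators $\Lk_i(j)$ (Lines~\ref{line:RBCph2SS01Cond}--\ref{line:RBCph2SS0NM}), not through raw vote tallies, so a bare quorum intersection of two size-$(n-t)$ sets yields only one common honest index and no immediate contradiction. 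I plan to close it by combining (i) the symmetry $\Lk_i(j)=\Lk_j(i)$ between honest nodes, (ii) the group-uniqueness fact $\gn^{[2]}\le 1$ (the $\OciorRBC$ analogue of Lemma~\ref{lm:uniquegroup}), so that all honest nodes with $\Ry^{[2]}=1$ carry one common message $\Meg_1$, and (iii) the code-distance bounds of Lemma~\ref{lm:sizeboundMatrix}; together these should force the honest contributors of $\Ss_1^{[2]}$ and of $\Ss_0^{[2]}$ to be encodings of the \emph{same} message, collapsing the supposed conflict.
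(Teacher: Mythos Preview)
Your scaffolding---treat the READY layer as Bracha amplification, reduce both agreement and liveness to the claim that no two honest \emph{first-type} READY's conflict---matches the paper's proof exactly; the paper isolates that claim as Lemma~\ref{lm:RBCsmaeready} and then runs precisely your propagation argument (your observation that the simple $2(t+1)$ counting only settles agreement at $n=3t+1$ is also correct, and the paper likewise gets agreement from Lemma~\ref{lm:RBCsmaeready} rather than from that count).

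Where you diverge is in how first-type consistency is proved. The paper does \emph{not} invoke $\gn^{[2]}\le1$, link symmetry, or the code-distance bounds of Lemma~\ref{lm:sizeboundMatrix} here. It argues directly: a first-type $\ltuple\READY,\IDCOOL,\Vr\rtuple$ traces back through Line~\ref{line:BRCReadyCondition} to some honest node with $|\Ss_{\Vr}^{[2]}|\ge n-t$, which the paper reads as ``$n-t$ received $\ltuple\SItwo,\IDCOOL,\Vr\rtuple$ messages,'' hence $n-2t$ honest senders of $\ltuple\SItwo,\IDCOOL,\Vr\rtuple$; since each honest node sends at most one $\ltuple\SItwo,\IDCOOL,*\rtuple$, two opposite first-type READY's would need $n-t$ senders of each SI2 value, but $n-(n-2t)=2t<n-t$. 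This is a one-paragraph counting argument, much lighter than what you outline.

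Your instinct that the link indicators complicate the picture is not unfounded: the identification ``$j\in\Ss_0^{[2]}\Rightarrow j$ sent $\ltuple\SItwo,\IDCOOL,0\rtuple$'' is not literally what Lines~\ref{line:RBCph2SS01Cond}--\ref{line:RBCph2SS0NM} say, since a node that sent $\ltuple\SItwo,\IDCOOL,1\rtuple$ can land in $\Ss_0^{[2]}$ via $\Lkset_0$, and the paper's proof of Lemma~\ref{lm:RBCsmaeready} does not address this asymmetry. That said, your proposed fix is only a sketch: ``collapsing the supposed conflict'' via (i)--(iii) is not worked out, and in particular you do not explain how the conclusion that all honest contributors encode the same $\Meg_1$ actually contradicts $|\Ss_0^{[2]}|\ge n-t$ at the second node. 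If you keep this route you must make that final step explicit; otherwise, it would be cleaner to try to repair the SI2 counting directly rather than import the full $\gn^{[2]}\le1$ machinery.
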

\begin{proof}
Let us consider the case that one  honest node sets the value of $\VrOutput$  in  Line~\ref{line:BRCVrOutput} as  $\VrOutput= \Vr^\star$  for  a binary value $\Vr^\star\in \{1,0\}$. 
In this case,  the  node   setting $\VrOutput= \Vr^\star$ should have received at least $2t+1$  $\ltuple \READY, \IDCOOL, \Vr^\star  \rtuple$ messages (see Line~\ref{line:BRCRelialbeEnd}). 
It means that at least $t+1$  $\ltuple \READY, \IDCOOL, \Vr^\star  \rtuple$ messages have been sent out from honest nodes. 
On the other hand,  if two honest nodes   send out   messages $\ltuple\READY, \IDCOOL, \Vr^\star \rtuple$ and $\ltuple\READY, \IDCOOL, \Vr' \rtuple$, respectively, then $\Vr^\star=\Vr'$ (see Lemma~\ref{lm:RBCsmaeready}).  
Therefore, in this case, each honest node eventually sends out  a message $\ltuple\READY, \IDCOOL, \Vr^\star \rtuple$ (see Lines~\ref{line:BRCRelialbeBegin}-\ref{line:BRCRelialbeBeginBB}). 
Thus, each honest node eventually receives at least $2t+1$  $\ltuple \READY, \IDCOOL, \Vr^\star  \rtuple$ messages, which suggests that each honest node eventually goes to  Line~\ref{line:BRCRelialbeEnd} and then set $\VrOutput= \Vr^\star$ in  Line~\ref{line:BRCVrOutput}. 
\end{proof}

\begin{lemma}    \label{lm:RBCsmaeready}
In  $\OciorRBC$, given $n\geq 3t+1$,   if two honest nodes   send out   messages $\ltuple\READY, \IDCOOL, \Vr^\star \rtuple$ and $\ltuple\READY, \IDCOOL, \Vr' \rtuple$, respectively, then $\Vr^\star=\Vr'$.  
\end{lemma}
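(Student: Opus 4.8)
The plan is to follow the classical Bracha amplification structure and reduce the claim to the impossibility of two honest nodes \emph{directly} triggering $\READY$ for opposite values. First I would enumerate the three ways an honest node emits a $\ltuple\READY,\IDCOOL,\Vr\rtuple$ message: the direct trigger at Line~\ref{line:BRCReadyCondition} (fired by $|\Ss_\Vr^{[2]}|\ge n-t$), the weak amplification at Line~\ref{line:BRCRelialbeBegin} (fired by $t+1$ matching $\READY$ messages), and the strong amplification at Line~\ref{line:BRCRelialbeEnd} (fired by $2t+1$). Consider the first honest node ever to send a $\READY$ for a given value $\Vr$. It cannot have used either amplification rule, since $t+1$ (resp.\ $2t+1$) $\READY$ messages for $\Vr$ contain at least one (resp.\ at least $t+1$) honest senders of $\ltuple\READY,\IDCOOL,\Vr\rtuple$, contradicting minimality. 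Hence every value ever carried by an honest $\READY$ is \emph{seeded} by some honest node that fired the direct trigger, and it suffices to show that no two honest nodes fire the direct trigger for different values.

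Suppose, for contradiction, that honest node $a$ fires the direct trigger with $|\Ss_1^{[2]}|\ge n-t$ and honest node $b$ fires it with $|\Ss_0^{[2]}|\ge n-t$ (using their respective local sets). Since at most $t$ nodes are dishonest, at least $n-2t\ge t+1$ honest nodes lie in $a$'s $\Ss_1^{[2]}$ and at least $n-2t$ honest nodes lie in $b$'s $\Ss_0^{[2]}$; an inclusion--exclusion count over the $n-t$ honest nodes makes their honest parts overlap in at least $n-3t\ge1$ nodes. Fix an honest $c$ in both. Being counted in $a$'s $\Ss_1^{[2]}$, node $c$ genuinely sent $\ltuple\SItwo,\IDCOOL,1\rtuple$ and has a matching link to $a$ (i.e.\ $c\in\Lkset_1$ at $a$); being counted in $b$'s $\Ss_0^{[2]}$ while having sent $\SItwo$ with value $1$ forces, by the receive handler at Line~\ref{line:RBCph2SS01Cond}, $c\in\Lkset_0$ at $b$, i.e.\ a \emph{mismatched} link between $c$ and $b$.

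The heart of the argument is to turn this matched/mismatched pair into a contradiction through the group structure. By the Phase~2 uniqueness result (Lemma~\ref{lm:BRCuniquegroup}, the $\OciorRBC$ analogue of $\gn^{[2]}\le1$), all honest nodes that send $\ltuple\SItwo,\IDCOOL,1\rtuple$ share a single Phase~1 input, call it $\Me^\star$; in particular $c$ and all honest members of $a$'s $\Ss_1^{[2]}$ carry $\Me^\star$. Because $a$ has matched links to at least $n-2t>k-1$ such honest nodes, the minimum--distance bound of Lemma~\ref{lm:sizeboundMatrix} (two distinct codewords of the $(n,k)$ code agree in at most $k-1$ coordinates) forces $a$'s own input to coincide with $\Me^\star$. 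I would then run the symmetric bookkeeping at $b$: the $\ge n-2t\ge t+1$ honest $\SItwo(1)$-senders carrying $\Me^\star$ whose messages $b$ receives are routed by $b$ into its $\Ss_1^{[2]}$ whenever their link to $b$ matches, and the code-distance bound caps how many of them can be masked into $\Ss_0^{[2]}$ instead. Combining this with the disjointness $|\Ss_1^{[2]}|+|\Ss_0^{[2]}|\le n$ at $b$ is meant to yield $|\Ss_0^{[2]}|<n-t$, contradicting that $b$ fired $\READY(0)$; the first paragraph then closes the proof.

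The step I expect to be the main obstacle is exactly this last count. Unlike a clean Bracha quorum, here $\Ss_0^{[2]}$ \emph{mixes} genuine $0$-senders with $1$-senders whose links are merely masked as errors, so a naive $2(n-t)>n$ intersection is not enough on its own: one must show that the honest $\Me^\star$-carriers forcing $a$'s trigger cannot all be simultaneously masked away from $b$'s $\Ss_1^{[2]}$ without driving $b$'s encoding too far from $\Me^\star$ in code distance. Pushing this through will require carefully combining the error-correcting-code distance bound (Lemma~\ref{lm:sizeboundMatrix}) with the Phase~2 group-uniqueness (Lemma~\ref{lm:BRCuniquegroup}), in the same spirit as the consistency argument for $\OciorCOOL$ in Lemma~\ref{lm:OciorCOOLHMDM} and Lemma~\ref{lm:uniquegroup}, and bounding $|\Ss_0^{[1]}|$-based entries so that the masked-error contribution to $\Ss_0^{[2]}$ stays below the $n-t$ threshold.
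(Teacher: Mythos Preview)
Your first paragraph---reducing to the direct trigger at Line~\ref{line:BRCReadyCondition} via a ``first honest $\READY$-sender'' argument---is correct and matches the paper exactly. After that you diverge sharply from the paper's approach.

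The paper's proof is a short quorum count with no code-distance machinery whatsoever. It observes that each honest node sends at most one $\ltuple\SItwo,\IDCOOL,*\rtuple$ message; a direct trigger for $\Vr^\star$ at some honest node means $|\Ss_{\Vr^\star}^{[2]}|\ge n-t$ there, which the paper reads as at least $n-t$ nodes having sent $\ltuple\SItwo,\IDCOOL,\Vr^\star\rtuple$, hence at least $n-2t$ honest ones; that leaves at most $n-(n-2t)=2t<n-t$ nodes able to send $\ltuple\SItwo,\IDCOOL,\Vr'\rtuple$ for $\Vr'\ne\Vr^\star$, so no honest direct trigger for $\Vr'$ can fire. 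No appeal to Lemma~\ref{lm:BRCuniquegroup}, no Lemma~\ref{lm:sizeboundMatrix}, no link-indicator bookkeeping.

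You do put your finger on a genuine wrinkle the paper's write-up elides: $\Ss_0^{[2]}$ is not a pure quorum of $\SItwo(0)$-senders, since $\SItwo(1)$-senders with mismatched links are also routed there by the handler at Line~\ref{line:RBCph2SS01Cond}, so ``$|\Ss_0^{[2]}|\ge n-t$'' is not literally ``received $n-t$ messages $\ltuple\SItwo,\IDCOOL,0\rtuple$.'' That observation is correct. But your proposed remedy---pulling in Phase-2 uniqueness and bounding via code distance how many $\Me^\star$-carriers can be masked into $\Ss_0^{[2]}$---is both far heavier than the paper's intended argument and, by your own admission, incomplete (you flag the decisive count as ``the main obstacle'' without closing it). The right repair stays at the level of the quorum count, exploiting the asymmetry that $\Ss_1^{[2]}$ \emph{is} pure $\SItwo(1)$-senders; a detour through Lemmas~\ref{lm:BRCuniquegroup} and~\ref{lm:sizeboundMatrix} overshoots the target and still leaves a gap.
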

\begin{proof}
If one honest node   sends out  a  message $\ltuple\READY, \IDCOOL, \Vr^\star \rtuple$ (see Lines~\ref{line:BRCReadySendA}, \ref{line:BRCRelialbeBeginBB} and \ref{line:BRCReadySendC}), it means that at least one honest node has received at least $n-t$ $\ltuple \SItwo, \IDCOOL, \Vr^\star   \rtuple$  messages from different  nodes   (see  Line~\ref{line:BRCReadyCondition}), for  a binary value $\Vr^\star\in \{1,0\}$. 
In this case, at least  $n-2t$ honest nodes have sent out the same message $\ltuple \SItwo, \IDCOOL, \Vr^\star   \rtuple$.  
 
Similarly, if one honest node   sends out  a  message $\ltuple\READY, \IDCOOL, \Vr' \rtuple$, it means that at least one honest node has received at least $n-t$   $\ltuple \SItwo, \IDCOOL, \Vr'   \rtuple$  messages from different  nodes, for  a binary value $\Vr'\in \{1,0\}$.

In   $\OciorRBC$,  each honest node sends out at most one  message $\ltuple\SItwo, \IDCOOL, * \rtuple$.  Thus, if  $n-2t$ honest nodes have sent out the same message $\ltuple \SItwo, \IDCOOL, \Vr^\star   \rtuple$,  it is impossible to have    $n-t$  nodes sending out a  different message $\ltuple \SItwo, \IDCOOL, \Vr'   \rtuple$, for  $\Vr'\neq \Vr^\star$, because $n-(n-2t)<n-t$ given $n\geq 3t+1$.  Therefore, if two honest nodes   send out   messages $\ltuple\READY, \IDCOOL, \Vr^\star \rtuple$ and $\ltuple\READY, \IDCOOL, \Vr' \rtuple$, respectively, then $\Vr^\star=\Vr'$. 
\end{proof}

\begin{lemma}    \label{lm:BRCuniquegroup}
In  $\OciorRBC$, given $n\geq 3t+1$, all of the honest nodes who set $\SIPhtwo = 1$ or send out  $\ltuple\SItwo, \IDCOOL, 1\rtuple$ in Phase~$2$  should have  the same input message $\Me^{\star}$ at the beginning of Phase~$1$, for some $\Me^{\star}$, i.e.,  it holds true that $\gn^{[2]} \leq 1$.
\end{lemma}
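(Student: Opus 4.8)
The plan is to mirror the proof of Lemma~\ref{lm:uniquegroup} for the synchronous $\OciorCOOL$ protocol, adapting each step to the asynchronous mechanics of $\OciorRBC$. First I observe that in $\OciorRBC$ the only line that emits $\ltuple\SItwo, \IDCOOL, 1\rtuple$ is Line~\ref{line:RBCph2ASI1}, which simultaneously sets $\Ry_i^{[2]}=1$ and $\SIPhtwo=1$; hence the honest nodes named in the statement are exactly those in $\cup_\Lin\Ac_\Lin^{[2]}$ (see \eqref{eq:RBCAell}), and proving $\gn^{[2]}\leq 1$ is equivalent to the claim. As in the synchronous case I would rule out $\gn^{[2]}=2$ through a three-step chain: establish $\gn^{[2]}\leq 2$ (analogue of Lemma~\ref{lm:eta2bound2}); show that $\gn^{[2]}=2$ forces $\gn^{[1]}=2$ (analogue of Lemma~\ref{lm:eta22212}); and finally show that $\gn^{[1]}=2$ already implies $\gn^{[2]}\leq 1$ (analogue of Lemma~\ref{lm:eta1231}), producing the contradiction. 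The payload is the last step, which I would prove directly.

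For that step, assume $\gn^{[1]}=2$, so there are exactly two groups $\Ac_1^{[1]}$ and $\Ac_2^{[1]}$ of honest nodes that set their Phase-1 success indicators to one while holding distinct messages $\Meg_1\neq\Meg_2$. Splitting into the two cases of \eqref{eq:case1b}-\eqref{eq:case2a} according to which group is no larger than half of $n-|\Fc|-\sum_{\Lin=3}^{\gn}|\Ac_\Lin|$, consider Node~$i$ in the smaller group, say $i\in\Ac_2^{[1]}$. In $\OciorRBC$, Node~$i$ sets $\Ry_i^{[2]}=1$ only when $|\Ss_1^{[1]}|\geq n-t$ (Line~\ref{line:RBCph2ACond}), and the set $\Ss_1^{[1]}$ collected at Node~$i$ consists precisely of the $j$ with $\Ry_j^{[1]}=1$ and a matched link $j\in\Lkset_1$; by \eqref{eq:lkindicator2} this is exactly $\Uc_i^{[2]}=\{j:\Lk_i^{[2]}(j)=1\}$. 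I would then reproduce the counting bound \eqref{eq:link124}-\eqref{eq:A22bound5366}: links to nodes in $\Bc^{[1]}$ are masked to zero, links to $\Ac_{1,1}^{[1]}$ vanish because $\hv_j^\T\Meg_1\neq\hv_j^\T\Meg_2$, and Lemmas~\ref{lm:sizeboundMatrix} and~\ref{lm:boundA22} bound the survivors by $|\Ac_{1,2}^{[1]}|+|\Ac_{2,1}^{[1]}|\leq k-1$ and $|\Ac_{2,2}^{[1]}|\leq 2(k-1)$. Adding $|\Fc|=t$ yields $|\Uc_i^{[2]}|\leq 3(k-1)+t\leq 2t<n-t$ for $k=\lfloor t/5\rfloor+1$ and $n\geq 3t+1$, so Node~$i$ never reaches the threshold and never sets $\Ry_i^{[2]}=1$. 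Thus $\Ac_2^{[1]}$ contributes no node to $\gn^{[2]}$, giving $\gn^{[2]}\leq 1$; Case~2 follows by interchanging the two groups.

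For the first two links of the chain, I expect the combinatorial arguments behind Lemma~\ref{lm:eta2bound2} and Lemma~\ref{lm:eta22212} to carry over, since both rest only on the group and sub-group definitions \eqref{eq:RBCAell00}-\eqref{eq:RBCBdef01}, the code parameter $k=\lfloor t/5\rfloor+1$, and the Phase-1 acceptance rule $|\Lkset_1|\geq n-t$ (Line~\ref{line:RBCph2OneCond}), all of which match the synchronous protocol once stated in terms of final link-indicator values.

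The main obstacle is justifying that the asynchronous scheduling does not disturb these counting identities. Concretely, I would invoke the observation from the analysis preamble that any two honest nodes eventually share a common view, so the final link indicators are symmetric, and then verify that the Phase-2 masking realized through $\Ss_1^{[1]}$ (Lines~\ref{line:RBCph1SS01Cond}-\ref{line:RBCph1SS0NotM}) coincides with the algebraic masking in \eqref{eq:lkindicator2}, i.e. that a link is retained in $\Uc_i^{[2]}$ only when both endpoints carry Phase-1 success indicator one. Once every quantity is read off from these final values, the synchronous counting goes through unchanged, and the worst-case reduction to $|\Fc|=t$ noted in the analysis removes any loss of generality.
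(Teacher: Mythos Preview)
Your proposal is correct and follows essentially the same three-step contradiction chain as the paper: the paper states the RBC analogues as Lemmas~\ref{lm:BRCeta2bound2}, \ref{lm:BRCeta22212}, and \ref{lm:BRCeta1231}, and combines them exactly as you describe, with the detailed counting in Lemma~\ref{lm:BRCeta1231} matching your bound $|\Uc_i^{[2]}|\leq 3(k-1)+t<n-t$. One small imprecision: $\Ss_1^{[1]}$ at Node~$i$ is only a \emph{subset} of $\Uc_i^{[2]}$ (the wait in Lines~\ref{line:RBCph1SS01Cond}--\ref{line:RBCph1SS0NotM} can terminate early via the disjunctive conditions), but that containment direction is all the argument needs.
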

\begin{proof}
The proof of this lemma is similar to that of Lemma~\ref{lm:uniquegroup}. 
At first, from Lemma~\ref{lm:BRCeta2bound2} it is concluded that   $\gn^{[2]} \leq 2$.
Next, we   argue that the case of $\gn^{[2]} = 2$ does not exist.   Let us assume that $\gn^{[2]} = 2$. 
Under the assumption of  $\gn^{[2]} = 2$,  it holds true that $\gn^{[1]} =  2$ (see  Lemma~\ref{lm:BRCeta22212}). However, if $\gn^{[1]} = 2$  then it implies that $\gn^{[2]} \leq   1$ (see  Lemma~\ref{lm:BRCeta1231}), which contradicts the assumption of  $\gn^{[2]} = 2$.
Therefore, the case of $\gn^{[2]} = 2$ does not exist, which, together with the result $\gn^{[2]} \leq 2$, concludes that $\gn^{[2]} \leq 1$.
\end{proof}

\begin{theorem}  [Communication, Round, and Resilience]  \label{thm:BRCPerformance}
For the consensus on an $\Lh$-bit message, and given $n\geq  3t+1$, the total communication complexity of $\OciorRBC$ is  $O(\max\{n\Lh, n^2 \log n \})$  bits, while the communication per node is  $O(\max\{\Lh, n \log n \})$  bits. Additionally, the round  complexity of $\OciorRBC$ is  $7$  asynchronous rounds. Without considering balance communication, the  round  complexity of $\OciorRBC$ is  $6$  rounds.
\end{theorem}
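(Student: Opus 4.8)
The plan is to establish the three claims---total communication, per-node communication, and round complexity---by direct accounting, since correctness is already secured by Theorems~\ref{thm:BRCtotality} and~\ref{thm:BRCvalidity}. The key quantitative input is the symbol size. Since the $(n,k)$ RS code operates over $GF(2^c)$ with $n\leq 2^c-1$, and the $\Lh$-bit message occupies $k$ data symbols, each coded symbol has size $s = O(\max\{\Lh/k,\ \log n\})$ bits. With the protocol's choice $k=\lfloor t/5\rfloor+1$ and the resilience $n\geq 3t+1$ (so $n=\Theta(t)$ and $k=\Theta(t)=\Theta(n)$), this gives $s=O(\max\{\Lh/n,\ \log n\})$, hence $n^2 s = O(\max\{n\Lh,\ n^2\log n\})$ and $ns = O(\max\{\Lh,\ n\log n\})$. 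These two products are exactly the target bounds, so the whole argument reduces to showing that every message type contributes at most $O(n^2 s)$ to the total and at most $O(ns)$ per node.

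First I would tally the communication by message type. In the initial phase the leader sends one symbol to each node (Line~\ref{line:ph0Lead}), costing $O(ns)$, and each node echoes its symbol to all nodes (Line~\ref{line:ph0Echo}), costing $O(n^2 s)$. In Phase~1 each node sends the pair $(y_j^{(i)},y_i^{(i)})$ to every node (Line~\ref{line:RBCECCEncindicator}), again $O(n^2 s)$, while the $\SIone$ messages are single bits broadcast all-to-all, costing $O(n^2)$ plus $O(n^2\log n)$ for headers. Phases~2 and~3 add the $\SItwo$ and $\READY$ indicators ($O(n^2)$ control bits each) and the $\CORRECTSYMBOL$ symbols broadcast all-to-all (Line~\ref{line:BRCph3SendCorrectSymbols}), once more $O(n^2 s)$. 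Summing, the dominant term is $O(n^2 s)=O(\max\{n\Lh,\ n^2\log n\})$, and the control/header traffic $O(n^2\log n)$ is absorbed into it. For the per-node bound I would observe that in the balanced protocol each node transmits $O(n)$ coded symbols in total (its echoed symbol to all nodes, the $2n$ symbols of Phase~1, and its corrected symbol to all nodes in Phase~3) plus $O(n)$ control bits, for $O(ns)=O(\max\{\Lh,\ n\log n\})$; crucially the leader also transmits only $O(n)$ symbols, which is precisely why a direct broadcast of $\wv$ is replaced by coded symbols plus echo.

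Next I would count rounds by exhibiting the longest causal chain of send steps, traversed by an honest node that must reach Phase~3 to decode. This chain is $\LEADER$ (Line~\ref{line:ph0Lead}) $\to$ echo (Line~\ref{line:ph0Echo}) $\to$ $\SYMBOL$ (Line~\ref{line:RBCECCEncindicator}) $\to$ $\SIone$ $\to$ $\SItwo$ $\to$ $\READY$ (Line~\ref{line:BRCReadySendA}) $\to$ $\CORRECTSYMBOL$ (Line~\ref{line:BRCph3SendCorrectSymbols}), after which the online error correction and the output in Line~\ref{line:BRCph3Output2} are local and add no round. Treating each all-to-all (or leader-to-all) dissemination as one asynchronous round gives seven rounds. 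For the unbalanced variant the leader broadcasts $\wv$ directly, merging the $\LEADER$ and echo steps into one, so the chain loses exactly one round and the count drops to six (the total communication remaining $O(\max\{n\Lh,\ n^2\log n\})$ since the leader's direct send is $n\Lh$ bits). I would also verify the shorter paths terminate no later: a node with $\SIPhtwo=1$ outputs at Line~\ref{line:BRCph3SIPhtwoOutput} right after the $\READY$ round, and a node outputting the default value does so at Line~\ref{line:BRCoutputdefault}, both within the same bound.

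The main obstacle is the round count for the $\READY$ dissemination, which carries a Bracha-style amplification (direct send on $n-t$ copies of $\SItwo$, echo on $t+1$ $\READY$, decision on $2t+1$ $\READY$): I must argue that this amplification is charged as a single counted round on the critical path under the convention that each message-type broadcast is one round, rather than naively counting each amplification wave separately. The remaining points are routine: confirming the symbol-size bound in the $\max$ form (that $k=\Theta(n)$ makes $n^2\Lh/k=\Theta(n\Lh)$, so neither term dominates spuriously) and checking that the Phase-3 inputs to OEC---the $\CORRECTSYMBOL$ messages together with the $\SYMBOL$ messages from nodes with $\Ry_i^{[2]}=1$ (Lines~\ref{line:RBCPh3OECSecCond}--\ref{line:Ph3OECAllEnd})---are already available by the end of the $\CORRECTSYMBOL$ round, so decoding needs no extra communication step.
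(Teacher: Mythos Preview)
Your proposal is correct and takes essentially the same approach as the paper: a direct tally of communication by message type (using the symbol size $c\approx\max\{\ell/k,\log n\}$ with $k=\lfloor t/5\rfloor+1$) and a phase-by-phase round enumeration summing to $2+2+2+1=7$, with the unbalanced variant dropping the echo round to reach $6$. One minor quibble, shared with the paper's own proof: the inference ``$n\geq 3t+1$, so $n=\Theta(t)$'' is not literally valid (the hypothesis allows $t=o(n)$); the stated bound $O(\max\{n\ell,n^2\log n\})$ genuinely needs $k=\Theta(n)$, i.e., the optimal-resilience regime $t=\Theta(n)$, which the paper also assumes implicitly.
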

\begin{proof}
The proposed $\OciorRBC$  satisfies the totality, validity and consistency  conditions  in all executions,  given   $n\geq  3t+1$ (see Theorems~\ref{thm:BRCtotality}-\ref{thm:BRCvalidity}). 
For the proposed $\OciorRBC$, the  communication is involved in Lines~\ref{line:ph0Lead}, \ref{line:ph0Echo}, \ref{line:RBCECCEncindicator}, \ref{line:RBCph2SI1}, \ref{line:RBCph2B}, \ref{line:RBCph2CSI0}, \ref{line:RBCph2ASI1}, \ref{line:RBCph2BSI0}, \ref{line:BRCReadySendA}, \ref{line:BRCRelialbeBeginBB}, \ref{line:BRCReadySendC}, \ref{line:BRCph3SendCorrectSymbols}. Specifically, in each communication, the node sends coded symbols or binary information to other nodes, where each symbol  carries only $\cb$ bits, for  $\cb =   \bigl \lceil \frac{ \max\{ \ell, \ k\cdot \log (n+1) \} }{k} \bigr\rceil$ and  $k =    \bigl \lfloor   \frac{ t  }{5 } \bigr\rfloor    +1$.  Also, the total number of communication steps for each node is finite.  Therefore, the communication per node is  $O(\max\{\Lh, n \log n \})$  bits, while the  total communication complexity of $\OciorRBC$ is  $O(\max\{n\Lh, n^2 \log n \})$  bits. 

$\OciorRBC$ consists of an initial phase and Phases~1-3.   The number of asynchronous rounds in these phases are:   $2$ rounds (see Lines~\ref{line:ph0Lead}, \ref{line:ph0Echo}),    $2$ rounds  (see Lines~\ref{line:RBCECCEncindicator}, \ref{line:RBCph2SI1}, \ref{line:RBCph2B}),  $2$ round  (see Lines~\ref{line:RBCph2CSI0}, \ref{line:RBCph2ASI1}, \ref{line:RBCph2BSI0},  \ref{line:BRCReadySendA}, \ref{line:BRCRelialbeBeginBB}, \ref{line:BRCReadySendC}), and $1$ round  (see Line~\ref{line:BRCph3SendCorrectSymbols}),  respectively.  
Therefore,   the round  complexity of $\OciorRBC$ is  $7$  rounds in the worst case.  
\end{proof}

\subsection{Lemmas used in the proof of Lemma~\ref{lm:BRCuniquegroup}}    \label{sec:ProofBRCuniquegroup}

The proof of Lemma~\ref{lm:BRCuniquegroup} will use the result of \cite[Lemma 8]{Chen:2020arxiv}.  
This result  considers a graph $G=(\Pc, \Ec)$, where $\Pc$ is a set of  $n -t$  vertices,  for $\Pc= [1: n-t]$ without loss of generality,  and $\Ec$ is a set of edges. 
In this graph,  there is a  given vertex $i^{\star}$ for $i^{\star}\in \Pc$, and a set of vertices $\Cc$ for $\Cc \subseteq \Pc \setminus \{i^{\star}\}$ and  $|\Cc|  \geq n-2t -1$,  such that each vertex in $\Cc$ is connected with at least $n-2t$ edges and that one of the edges is connected to vertex $i^{\star}$.  
Let $E_{i,j}= 1$ (respectively, $E_{i,j}= 0$)   denote the presence (respectively, absence)  of an edge  between vertex $i$ and vertex $j$, for $E_{i,j}= E_{j,i}, \forall i, j \in \Pc$.  
Mathematically, for this graph $G=(\Pc, \Ec)$, there exists a set  $\Cc \subseteq \Pc \setminus \{i^{\star}\}$ satisfying  the following conditions: 
 \begin{align}
E_{i, i^{\star}}&= 1                          \quad     \forall i \in \Cc   \label{eq:graph01}  \\
\sum_{j\in \Pc}E_{i, j} &\geq   n-2t   \quad     \forall i \in \Cc  \label{eq:graph02}   \\
|\Cc| &   \geq n-2t -1     \label{eq:graph03} 
 \end{align}
for a given $i^{\star}\in \Pc= [1: n-t]$.  
For this graph, we use $\Dc \subseteq \Pc$ to define  a set of vertices such that each vertex in $\Dc$ is connected with at least $k$ vertices in $\Cc$, i.e.,  
 \begin{align}
\Dc \defeq \Bigl\{i: \ \sum_{j\in \Cc }E_{i, j}  \geq  k , \   i \in  \Pc \setminus \{i^{\star}\} \Bigr\}  \label{eq:graphD01}   
 \end{align}
where $k$ is a parameter of  $(n,k)$ error correction code, which is set here as $k=       \lfloor   \frac{ t  }{5 }  \rfloor    +1$.  
For this  graph $G=(\Pc, \Ec)$,  the size of $\Dc$ can be bounded, based on the result of  \cite[Lemma 8]{Chen:2020arxiv} that is restated below.

\begin{lemma}  \cite[Lemma 8]{Chen:2020arxiv}  \label{lm:graph}
For any graph $G=(\Pc, \Ec)$ specified by \eqref{eq:graph01}-\eqref{eq:graph03} and for the set $\Dc \subseteq \Pc$ defined by \eqref{eq:graphD01}, and given $n\geq 3t+1$, the size of $\Dc$ is bounded as: 
 \begin{align}
|\Dc| &\geq  n-9t/4-1 .    \label{eq:graphDr11}  
 \end{align}
\end{lemma}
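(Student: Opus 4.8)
The plan is to bound $|\Dc|$ by double-counting the edges running between $\Cc$ and $\Pc \setminus \{i^{\star}\}$, reading the same edge total from both endpoints. On the $\Cc$ side, condition \eqref{eq:graph02} gives every $i \in \Cc$ at least $n-2t$ incident edges in $\Pc$, and condition \eqref{eq:graph01} reserves one of them for $i^{\star}$; hence each such $i$ has at least $n-2t-1$ neighbours inside $\Pc \setminus \{i^{\star}\}$. Summing and writing $c = |\Cc|$ (so that $c \geq n-2t-1$ by \eqref{eq:graph03}),
\[
\sum_{j \in \Pc \setminus \{i^{\star}\}} \sum_{i \in \Cc} E_{i,j} = \sum_{i \in \Cc} \sum_{j \in \Pc \setminus \{i^{\star}\}} E_{i,j} \geq c\,(n-2t-1).
\]

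On the $\Pc \setminus \{i^{\star}\}$ side I would partition that vertex set into $\Dc$ and its complement. By the definition \eqref{eq:graphD01}, each $j \notin \Dc$ contributes at most $k-1$ to the inner sum $\sum_{i \in \Cc} E_{i,j}$, while each $j \in \Dc$ contributes at most $c$. Using $|\Pc \setminus \{i^{\star}\}| = n-t-1$, the total is at most $|\Dc|\,c + (n-t-1-|\Dc|)(k-1)$. Combining the two bounds and solving for $|\Dc|$ — the coefficient $c-(k-1)$ is positive since $c \geq n-2t-1 \geq t > k-1$ — gives
\[
|\Dc| \geq \frac{c\,(n-2t-1) - (n-t-1)(k-1)}{c-(k-1)} =: f(c).
\]

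It then remains to minimise $f(c)$ over the admissible range and insert the parameters. A short monotonicity check shows $f$ is nondecreasing in $c$ — the numerator of $f'(c)$ works out to $(k-1)\,t \geq 0$ — so the worst case is $c = n-2t-1$; likewise $f$ is nonincreasing in $k-1$, so I may replace $k-1 = \lfloor t/5 \rfloor$ by the upper bound $t/5$. Substituting $c = n-2t-1$ and $k-1 = t/5$, the target inequality $f(c) \geq n - 9t/4 - 1$ simplifies, after clearing the positive denominator and cancelling, to $n-2t-1 \geq t$, which is exactly the resilience hypothesis $n \geq 3t+1$. I expect the only real work to be this final arithmetic reduction: keeping $c$ as a free variable rather than prematurely fixing $c = n-2t-1$, and tracking the two monotonicities, so that the chain of inequalities collapses cleanly onto $n \geq 3t+1$ without leaving an unwanted slack term.
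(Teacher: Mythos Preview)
Your double-counting argument is correct: the lower bound from the $\Cc$-side, the upper bound from the $\Dc$/non-$\Dc$ split, the two monotonicity claims, and the final arithmetic reduction to $n-2t-1 \geq t$ all check out. One small point worth making explicit in the write-up is that $c-(k-1)>0$ relies on $c \geq n-2t-1 \geq t$, which in turn needs $n \geq 3t+1$; you state this but it is worth flagging that the resilience hypothesis is already used at the moment you divide, not only in the final step.

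Note, however, that the present paper does not supply its own proof of this lemma: it is quoted verbatim as \cite[Lemma~8]{Chen:2020arxiv} and used as a black box. So there is no in-paper proof to compare against. That said, the edge double-counting between $\Cc$ and $\Pc\setminus\{i^\star\}$ is the natural (and essentially the only) way to exploit conditions \eqref{eq:graph01}--\eqref{eq:graph03} and the threshold definition \eqref{eq:graphD01}, and it is the argument used in the cited reference; your treatment of the optimisation over $c$ and $k-1$ via the two monotonicities is a clean way to finish without case analysis.
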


The proof of Lemma~\ref{lm:BRCuniquegroup} will also use the  following lemma, which  is obtained  by  following the  proof  of \cite[Lemma~9]{Chen:2020arxiv}. 

\begin{lemma}    \label{lm:ociorRBCsizem}
When $\gn^{[2]}\geq 1$, it holds true that $|\Ac_{\Lin}| \geq n-9t/4$, for any  $\Lin \in [1: \gn^{[2]}]$.
\end{lemma}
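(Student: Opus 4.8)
The plan is to adapt the proof of \cite[Lemma~9]{Chen:2020arxiv}, instantiating the graph lemma (Lemma~\ref{lm:graph}) on the honest nodes. Since $\gn^{[2]}\geq 1$ and $\Lin\in[1:\gn^{[2]}]$, the group $\Ac_{\Lin}^{[2]}$ is nonempty by definition \eqref{eq:RBCAell}, so I would fix a representative honest node $i^\star\in\Ac_{\Lin}^{[2]}$; it satisfies $\Ry_{i^\star}^{[2]}=1$ and $\Me_{i^\star}^{[0]}=\Meg_{\Lin}$. The goal is to exhibit a large set $\Dc$ of honest nodes that the distance property of the error correction code forces to share the Phase-$0$ input $\Meg_{\Lin}$, so that $\Ac_{\Lin}\supseteq\Dc\cup\{i^\star\}$ and the bound follows from $|\Dc|$.

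First I would build the graph $G=(\Pc,\Ec)$ of Lemma~\ref{lm:graph} with $\Pc$ the set of the $n-t$ honest nodes and an edge $E_{i,j}=1$ precisely when the (final) link indicator satisfies $\Lk_i(j)=\Lk_j(i)=1$, whose symmetry is guaranteed by the discussion preceding this lemma. Taking $i^\star$ as the distinguished vertex, I would use that setting $\Ry_{i^\star}^{[2]}=1$ requires $|\Ss_1^{[1]}|\geq n-t$ at $i^\star$ (Line~\ref{line:RBCph2ACond}): among those at least $n-t$ indices at most $t$ are dishonest, so at least $n-2t$ of them are honest nodes $j$ matched with $i^\star$ (hence $E_{j,i^\star}=1$, Line~\ref{line:RBCph1SS1}) that have set $\Ry_j^{[1]}=1$. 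Collecting these honest nodes other than $i^\star$ into $\Cc$ gives $|\Cc|\geq n-2t-1$, i.e. \eqref{eq:graph01} and \eqref{eq:graph03}. Each $j\in\Cc$ set $\Ry_j^{[1]}=1$, which requires $|\Lkset_1|\geq n-t$ at $j$ (Lines~\ref{line:RBCph2OneCond}--\ref{line:RBCph2SI1}), so $j$ is matched with at least $n-2t$ honest nodes and $\sum_{l\in\Pc}E_{j,l}\geq n-2t$, which is \eqref{eq:graph02}. Lemma~\ref{lm:graph} then supplies a set $\Dc\subseteq\Pc\setminus\{i^\star\}$ of honest nodes each matched to at least $k$ members of $\Cc$, with $|\Dc|\geq n-9t/4-1$.

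The heart of the argument is the message-consistency step. Every $j\in\Cc$ is matched with $i^\star$ (Line~\ref{line:RBCph1MatchCond}), so its own symbol equals $y_j^{(j)}=y_j^{(i^\star)}=\hv_j^\T\Meg_{\Lin}$, i.e. it is encoded from $\Meg_{\Lin}$. For any $i\in\Dc$ and each of the $\geq k$ nodes $j\in\Cc$ it matches, the matching forces $\hv_j^\T\Me_i^{[0]}=y_j^{(i)}=y_j^{(j)}=\hv_j^\T\Meg_{\Lin}$, so the codewords $\ECCEnc(n,k,\Me_i^{[0]})$ and $\ECCEnc(n,k,\Meg_{\Lin})$ agree in at least $k$ coordinates. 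Since the $(n,k)$ Reed--Solomon code is MDS, two distinct messages produce codewords agreeing in at most $k-1$ coordinates, forcing $\Me_i^{[0]}=\Meg_{\Lin}$; hence $\Dc\subseteq\Ac_{\Lin}$. As $i^\star\in\Ac_{\Lin}$ but $i^\star\notin\Dc$, this gives $|\Ac_{\Lin}|\geq|\Dc|+1\geq n-9t/4$.

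I expect the main obstacle to be the combinatorial bookkeeping that makes the hypotheses \eqref{eq:graph01}--\eqref{eq:graph03} of the graph lemma hold exactly, in particular the off-by-one yielding $|\Cc|\geq n-2t-1$ and the verification that the honest-honest link indicators settle to consistent symmetric values so that $G$ is well defined in the asynchronous setting. By contrast, the code-distance deduction is routine given the MDS property already used in Lemma~\ref{lm:sizeboundMatrix}, and asynchrony enters only through the need to reason about the eventual values of $\Ss_1^{[1]}$ and the link indicators rather than their transient states.
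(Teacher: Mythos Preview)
Your proposal is correct and follows essentially the same approach as the paper: fix $i^\star\in\Ac_{\Lin}^{[2]}$, use the condition $|\Ss_1^{[1]}|\geq n-t$ at $i^\star$ (together with $\Ss_1^{[1]}\subseteq\Lkset_1$) to build the set $\Cc$ of honest, Phase-$1$-successful nodes matched with $i^\star$, verify the hypotheses \eqref{eq:graph01}--\eqref{eq:graph03}, apply Lemma~\ref{lm:graph}, and then use the MDS distance property to force $\Me_i^{[0]}=\Meg_{\Lin}$ for every $i\in\Dc$. The paper in fact only spells out Step~(a) and defers Steps~(b)--(d) to \cite[Lemma~9]{Chen:2020arxiv}, so your explicit message-consistency argument is, if anything, more detailed than what the paper writes out.
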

 \begin{proof}
By  following the  proof  of \cite[Lemma 9]{Chen:2020arxiv}, the proof here includes  the following key steps: 
 \begin{itemize}
\item  Step (a): Transform the network into a graph that is within the family of graphs satisfying \eqref{eq:graph01}-\eqref{eq:graph03}, for a fixed $i^{\star}$ in  $\Ac_{\Lin^{\star}}^{[2]}$ and $\Lin^{\star} \in [1: \gn^{[2]}]$.
\item  Step (b): Bound the size of a group of honest nodes, denoted by $\Dc'$ (with the same form as  in \eqref{eq:graphD01}), using the result of Lemma~\ref{lm:graph}, i.e., $|\Dc'| \geq n-9t/4-1$.
\item Step (c): Argue that every $\Pro$ in $\Dc'$ has the same initial message as $\PRO$~$i^{\star}$.
\item  Step (d): Conclude from Step (c) that $\Dc'$ is a subset of $\Ac_{\Lin^{\star}}$, i.e., $\Dc'\cup \{i^{\star}\} \subseteq \Ac_{\Lin^{\star}}$ and conclude that the size of $\Ac_{\Lin^{\star}}$ is bounded by the number determined in Step (b), i.e., $|\Ac_{\Lin^{\star}}| \geq |\Dc'| +1 \geq n-9t/4-1+1$, for  $\Lin^{\star} \in [1: \gn^{[2]}]$.
  \end{itemize}

 \emph{Step (a):} The first step of the proof is to transform the network into a graph that is within the family of graphs defined by \eqref{eq:graph01}-\eqref{eq:graph03}. We will consider the case of $\gn^{[2]}\geq 1$. 
 Let us consider a fixed $i^{\star}$ for $i^{\star}\in \Ac_{\Lin^{\star}}^{[2]}$ and $\Lin^{\star} \in [1: \gn^{[2]}]$, and given $\gn^{[2]}\geq 1$. 
 Based on the definition in  \eqref{eq:RBCAell}, in this setting  the honest Node~$i^{\star}$ sets the success indicator $\Ry_{i^{\star}}^{[2]} =1$  at   Phase~2, under the condition of   
   \begin{align}
   |\Ss_1^{[1]}|\geq  n- t    \label{eq:sindicatorSs} 
  \end{align}
(see Lines~\ref{line:RBCph2ACond} and \ref{line:RBCph2ASI1}). The condition in \eqref{eq:sindicatorSs}   implies the following inequalities: 
   \begin{align}
     |\Ss_1^{[1]}\cap \{\Fc \cup \{\cup_{p=1}^{\gn^{[1]}}\Ac_{p}^{[1]} \}\}| &\geq  n- t    \label{eq:sindicatorSsA}    \\
     |\Ss_1^{[1]}\cap  \{\cup_{p=1}^{\gn^{[1]}}\Ac_{p}^{[1]} \}| &\geq  n- t-t    \label{eq:sindicatorSsB}    \\
     |\Lkset_1 \cap  \{\cup_{p=1}^{\gn^{[1]}}\Ac_{p}^{[1]} \}| &\geq  n- t-t    \label{eq:sindicatorUs} 
  \end{align}
where $\Ss_1^{[1]}$ and $\Lkset_1$ are viewed from Node~$i^{\star}$;  \eqref{eq:sindicatorSsA} follows from the facts that $\Ry_j^{[1]} =0$ and that $j\notin \Ss_1^{[1]}$, $\forall j   \in [1:n] \setminus  \{\Fc \cup \{\cup_{p=1}^{\gn^{[1]}}\Ac_{p}^{[1]} \}\}$; \eqref{eq:sindicatorSsB} stems from the assumption that $|\Fc|\leq t$; and 
 \eqref{eq:sindicatorUs} is true due to the identity that   $\Ss_1^{[1]} \subseteq \Lkset_1$ (see Lines~\ref{line:RBCph1SS1Cond} and \ref{line:RBCph1SS1}).
  The condition in \eqref{eq:sindicatorUs} also implies that
    \begin{align}
\sum_{j \in \{\Lkset_1 \cap\{\cup_{p=1}^{\gn^{[1]}}\Ac_{p}^{[1]}\}\}\setminus \{i^{\star}\}} \Lk_{i^{\star}}^{[1]} (j)  \geq n - t -t  -1.    \label{eq:sindicator3lm05} 
  \end{align}  
  where $\Lk_{i^{\star}}^{[1]} (j)$ is the link indicator at   Phase~1 defined in \eqref{eq:lkindicator}.
Based on the  definition in  \eqref{eq:lkindicator}, it is true that  $\Lk_{i^{\star}}^{[1]} (j)  =1, \forall  j\in \Lkset_1$ (see Lines~\ref{line:RBCph1MatchCond} and \ref{line:RBCph1Match}) and that $\Lk_{i^{\star}}^{[1]} (j)  =1, \forall  j\in \Lkset_1 \cap  \{\cup_{p=1}^{\gn^{[1]}}\Ac_{p}^{[1]} \}$.

 For $i^{\star}\in \Ac_{\Lin^{\star}}^{[2]}$ and $\Lin^{\star} \in [1: \gn^{[2]}]$, let us define a subset of $\{\cup_{p=1}^{\gn^{[1]}}\Ac_{p}^{[1]}\}\setminus \{i^{\star}\}$ of honest nodes as 
     \begin{align}
\Cc' \defeq  \{j:    \Lk_{i^{\star}}^{[1]} (j) =1, j \in \{\cup_{p=1}^{\gn^{[1]}}\Ac_{p}^{[1]}\}\setminus \{i^{\star}\}  \} .    \label{eq:cdef01} 
  \end{align}
We can understand $\Cc'$  as a subset of $\{\cup_{p=1}^{\gn^{[1]}}\Ac_{p}^{[1]}\}\setminus \{i^{\star}\}$ of  honest nodes, in which each node sends a matched observation to Node~$i^{\star}$.  The observation sent from Node~$j$ to Node~$i^{\star}$ is defined by   $(y_{i^{\star}}^{(j)}, y_j^{(j)})$  (see Line~\ref{line:RBCECCEncindicator}). 
This observation is said to be matched if  $ (y_{i^{\star}}^{(j)}, y_j^{(j)}) = (y_{i^{\star}}^{(i^{\star})}, y_j^{(i^{\star})})$. 
One can see that  $\{\Lkset_1 \cap\{\cup_{p=1}^{\gn^{[1]}}\Ac_{p}^{[1]}\}\}\setminus \{i^{\star}\}$ is a subset of $\Cc'$ due to the fact that $\Lk_{i^{\star}}^{[1]} (j)  =1, \forall  j\in \Lkset_1 \cap  \{\cup_{p=1}^{\gn^{[1]}}\Ac_{p}^{[1]} \}$.  
 Based on \eqref{eq:sindicator3lm05} and \eqref{eq:cdef01}, the following conclusions are true  
 \begin{align}
\Lk_{j}^{[1]} (i^{\star}) &=1,  \quad \forall j  \in  \Cc'  \label{eq:cdef02}  \\
|\Cc'| &\geq     n - 2t  -1.  \label{eq:cdef03} 
 \end{align}
Note that  in our setting it holds true that $ \Lk_{i}^{[1]} (j)  =  \Lk_{j}^{[1]} (i)$, $\forall i, j  \in  \cup_{\Lin=1}^{\gn}\Ac_{\Lin}$ (see \eqref{eq:lkindicator}).

 Due to the fact that $\Cc' \subseteq \cup_{p=1}^{\gn^{[1]}}\Ac_{p}^{[1]}$ and  that  $ \Ry_j^{[1]} =1,   \forall j \in  \cup_{p=1}^{\gn^{[1]}}\Ac_{p}^{[1]}$ (see \eqref{eq:RBCAell}), the following conclusion is  true:    
  \begin{align}
\Ry_j^{[1]} =1, \quad  \forall j \in  \Cc' .   \label{eq:cdef04} 
 \end{align}
 The conclusion in \eqref{eq:cdef04} also implies   that 
   \begin{align}
\sum_{p =1}^n \Lk_{j}^{[1]} (p)  \geq n - t ,   \quad \forall j    \in  \Cc'   \label{eq:sindicator3lm0211} 
  \end{align}           
(see Line~\ref{line:RBCph2OneCond} and Line~\ref{line:RBCph2SI1}) and that  
   \begin{align}
\sum_{p  \in  \cup_{\Lin=1}^{\gn}\Ac_{\Lin}}  \Lk_{j}^{[1]} (p)  \geq n - 2t ,   \quad \forall j    \in  \Cc'   \label{eq:sindicator3lm0222} 
  \end{align}
where $\cup_{\Lin=1}^{\gn}\Ac_{\Lin}=[1:n] \setminus \Fc$.  Intuitively, for any $j    \in  \Cc'$, Node~$j$ receives at least $n-2t$ number of  matched observations  from the honest nodes  within $\cup_{\Lin=1}^{\gn}\Ac_{\Lin}$.  Let us define a subset of $\{\cup_{\Lin=1}^{\gn}\Ac_{\Lin}\} \setminus \{i^{\star}\}$ of honest $\Pros$ as 
 \begin{align}
\Dc' \defeq \Bigl\{p: \   \sum_{j    \in  \Cc' }  \Lk_{j}^{[1]} (p)    \geq  k , \   p  \in  \{\cup_{\Lin=1}^{\gn}\Ac_{\Lin}\} \setminus \{i^{\star}\} \Bigr\}  \label{eq:graphD01prime}   
 \end{align}
where $k$ is set as $k=       \lfloor   \frac{ t  }{5 }  \rfloor    +1$.  
 We can understand  $\Dc'$  as a set of  honest nodes in which each node sends at least $k$ matched observations to the nodes in $ \Cc'$. 
 
At this point,  we map the network into a graph $G=(\Pc', \Ec')$, where   $\Pc'$ is a set of $n -t$  vertices defined by $\Pc' = \cup_{\Lin=1}^{\gn}\Ac_{\Lin}$, and $\Ec'$ is a set of edges defined by $E_{i,j}= \Lk_{i}^{[1]} (j), \forall i, j  \in \Pc'$. 
For this graph $G=(\Pc', \Ec')$,  there exists a set  $\Cc' \subseteq \Pc' \setminus \{i^{\star}\}$ such that the  conditions in  \eqref{eq:cdef02}, \eqref{eq:cdef03} and \eqref{eq:sindicator3lm0222} are satisfied, for a given $i^{\star}\in \Ac_{\Lin^{\star}}^{[2]} \subseteq \Pc'$.    Since conditions in  \eqref{eq:cdef02}, \eqref{eq:cdef03} and \eqref{eq:sindicator3lm0222} are similar to the conditions in \eqref{eq:graph01}, \eqref{eq:graph03} and \eqref{eq:graph02}, respectively, this  graph $G=(\Pc', \Ec')$ falls into a family of graphs satisfying \eqref{eq:graph01}-\eqref{eq:graph03}.

\emph{Steps (b)-(d):} The remaining steps of this proof are similar to the steps (b)-(d) of the proof of \cite[Lemma 9]{Chen:2020arxiv}. 
\end{proof}

\begin{lemma}  \label{lm:BRCeta2bound2}
For the proposed  $\OciorRBC$ with $n\geq 3t+1$, it holds true that $\gn^{[2]} \leq 2$.
\end{lemma}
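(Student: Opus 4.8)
The plan is to derive the bound $\gn^{[2]}\le 2$ by a direct counting argument, leaning on the size lower bound already established in Lemma~\ref{lm:ociorRBCsizem} together with the fact that the groups $\Ac_\Lin$ partition the honest nodes. The key observation is that each honest node carries a unique initial message $\Me_i^{[0]}$, so the groups $\{\Ac_\Lin\}_{\Lin\in[1:\gn]}$ defined in \eqref{eq:RBCAell00} are pairwise disjoint and their sizes sum to exactly the number of honest nodes, namely $n-t$ (recall that we focus on $|\Fc|=t$).

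First I would record the global count $\sum_{\Lin=1}^{\gn}|\Ac_\Lin|=n-t$, so that in particular $\sum_{\Lin=1}^{\gn^{[2]}}|\Ac_\Lin|\le n-t$. Next I would argue by contradiction: assume $\gn^{[2]}\ge 3$. Since $\gn^{[2]}\ge 1$, Lemma~\ref{lm:ociorRBCsizem} applies to each of the first three (disjoint) groups, giving $|\Ac_\Lin|\ge n-9t/4$ for $\Lin\in\{1,2,3\}$. Summing these three bounds and comparing with the global count yields
\begin{align}
3\Bigl(n-\tfrac{9t}{4}\Bigr)\le \sum_{\Lin=1}^{3}|\Ac_\Lin|\le n-t,
\end{align}
which simplifies to $2n\le 23t/4$, i.e.\ $n\le 23t/8$. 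This contradicts the standing resilience hypothesis $n\ge 3t+1>23t/8$. Hence the case $\gn^{[2]}\ge 3$ is impossible, and $\gn^{[2]}\le 2$ follows (the remaining cases $\gn^{[2]}\in\{0,1,2\}$ satisfy the claim trivially).

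The main obstacle is not this final counting step, which is elementary arithmetic once the ingredients are in place, but rather Lemma~\ref{lm:ociorRBCsizem} itself, whose size bound $|\Ac_\Lin|\ge n-9t/4$ encapsulates the graph-theoretic argument of \cite[Lemma 9]{Chen:2020arxiv} adapted to the asynchronous $\OciorRBC$ protocol. I would therefore make sure Lemma~\ref{lm:ociorRBCsizem} is proved before this one, and carefully check that its hypothesis $\gn^{[2]}\ge 1$ is exactly what licenses applying it to all three groups. An alternative route, mirroring the $\COOL$ analysis, would bound the pairwise overlaps via $|\Ac_{\Lin,j}|+|\Ac_{j,\Lin}|<k$ from Lemma~\ref{lm:sizeboundMatrix} and count matched links directly; but the route through Lemma~\ref{lm:ociorRBCsizem} is shorter and reuses machinery already set up in the excerpt.
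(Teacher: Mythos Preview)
Your proposal is correct and follows essentially the same approach as the paper: the paper's proof explicitly says it is based on Lemma~\ref{lm:ociorRBCsizem} and is similar to \cite[Lemma~11]{Chen:2020arxiv}, which is precisely the counting-plus-contradiction argument you wrote out (three disjoint groups each of size at least $n-9t/4$ cannot fit inside the $n-t$ honest nodes when $n\geq 3t+1$). Your arithmetic check $3(n-9t/4)\le n-t \Rightarrow n\le 23t/8 < 3t+1$ is exactly the intended computation.
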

 \begin{proof}
 This proof is based on the result  of Lemma~\ref{lm:ociorRBCsizem}.  
The proof of this lemma is similar to that of  \cite[Lemma 11]{Chen:2020arxiv}.  The details are omitted here. 
 \end{proof}

\begin{lemma}    \label{lm:BRCeta22212}
For the proposed  $\OciorRBC$ with $n\geq 3t+1$,  if $\gn^{[2]} = 2$,  then it holds true that $\gn^{[1]} =   2$.   
\end{lemma}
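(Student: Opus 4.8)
The plan is to argue by contradiction, mirroring the proof of the $\COOL$ analog \cite[Lemma 13]{Chen:2020arxiv} but working with the asynchronous link indicators $\Lk_i^{[1]}(j)$ defined in \eqref{eq:lkindicator}. Because the group labeling is nested and $\gn^{[2]}\leq \gn^{[1]}$ by construction, the hypothesis $\gn^{[2]}=2$ already forces $\gn^{[1]}\geq 2$; it therefore suffices to rule out $\gn^{[1]}\geq 3$. So I would assume $\gn^{[1]}\geq 3$, which produces a nonempty phase-1 group $\Ac_3^{[1]}$ carrying a genuine (non-default) message $\Meg_3$ distinct from the two surviving phase-2 messages $\Meg_1,\Meg_2$, and then fix any honest node $i\in \Ac_3^{[1]}$, so that $\Ry_i^{[1]}=1$ and hence $\sum_{j} \Lk_i^{[1]}(j)\geq n-t$.

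First I would pin down how large the two surviving groups must be. Applying Lemma~\ref{lm:ociorRBCsizem} with $\gn^{[2]}=2$ gives $|\Ac_1|,|\Ac_2|\geq n-9t/4$. Since the honest nodes partition into $\Ac_1,\dots,\Ac_{\gn}$ with $\sum_{\Lin=1}^{\gn}|\Ac_\Lin| = n-t$, this yields the key smallness estimate $\sum_{\Lin\geq 3}|\Ac_\Lin| \leq (n-t)-2(n-9t/4) = 7t/2 - n$, so all groups other than the two survivors are jointly tiny.

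Next I would bound how many links node $i$ can actually form. A link $\Lk_i^{[1]}(j)=1$ with an honest $j$ requires the exchanged observations to agree on both coordinates; for $j$ in a group $\Ac_\Lin$ with $\Lin\neq 3$ this forces $i\in \Ac_{3,\Lin}$ and $j\in \Ac_{\Lin,3}$, so the number of such $j$ is at most $|\Ac_{\Lin,3}|$, which is below $k$ for $\Lin\in\{1,2\}$ by Lemma~\ref{lm:sizeboundMatrix}. Counting the at most $t$ dishonest links, the at most $|\Ac_3|$ same-message links, the cross links to $\Ac_1,\Ac_2$ (each at most $k-1$), and the cross links to groups $\Lin\geq 4$ (at most $\sum_{\Lin\geq 4}|\Ac_\Lin|$), I obtain
\begin{align}
\sum_{j}\Lk_i^{[1]}(j) \;\leq\; t + 2(k-1) + \sum_{\Lin\geq 3}|\Ac_\Lin| \;\leq\; t + 2(k-1) + \tfrac{7t}{2} - n. \non
\end{align}
Combining this with the certification requirement $\sum_j \Lk_i^{[1]}(j)\geq n-t$ and substituting $k=\lfloor t/5\rfloor+1$ would force $n \leq (k-1)+\tfrac{11t}{4} \leq \tfrac{59t}{20}$, which contradicts $n\geq 3t+1$. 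This contradiction rules out a third phase-1 group, so $\gn^{[1]}\leq 2$ and therefore $\gn^{[1]}=2$.

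The main obstacle, and the step I would be most careful with, is the link-counting argument: I must use the RBC link-indicator definition \eqref{eq:lkindicator} (symbol agreement on both coordinates, evaluated at the final views recorded in the text) and correctly translate each admissible honest link into membership in the subgroups $\Ac_{3,\Lin}$ and $\Ac_{\Lin,3}$ so that Lemma~\ref{lm:sizeboundMatrix} applies. Asynchrony enters only through the fact that $i$ certified $\Ry_i^{[1]}=1$ on the basis of $n-t$ received matches; since dishonest contributions are capped by $|\Fc|=t$ and honest matches are governed entirely by the encoding/message structure (default-message nodes never send symbols and thus never match), the synchronous counting carries over verbatim. The remaining numerical verification is routine once the constant $9t/4$ from Lemma~\ref{lm:ociorRBCsizem} and the value of $k$ are inserted.
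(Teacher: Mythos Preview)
Your proposal is correct and follows essentially the same approach the paper indicates: it mirrors the argument of \cite[Lemma~13]{Chen:2020arxiv} by combining the large-group bound of Lemma~\ref{lm:ociorRBCsizem} with the cross-group bound of Lemma~\ref{lm:sizeboundMatrix}, and the link-counting you outline for a node $i\in\Ac_3^{[1]}$ is exactly the right translation to the asynchronous indicators of \eqref{eq:lkindicator}. The numerics ($n\leq 11t/4+(k-1)\leq 59t/20<3t$) are correct, and your observation that $\Meg_3\neq\defaultvalue$ (since $\Ry_i^{[1]}=1$ forces $\Phoneindicator=1$) cleanly handles the only asynchrony-specific wrinkle.
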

\begin{proof}
 This proof is based on the result  of Lemma~\ref{lm:sizeboundMatrix}    and Lemma~\ref{lm:ociorRBCsizem}.  
The proof of this lemma is similar to that of   \cite[Lemma 13]{Chen:2020arxiv}.  The details are omitted here. 
\end{proof}

\begin{lemma}    \label{lm:BRCeta1231}
For the proposed  $\OciorRBC$ with $n\geq 3t + 1$, if $\gn^{[1]} = 2$  then it holds true that $\gn^{[2]} \leq   1$.
\end{lemma}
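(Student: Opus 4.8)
The plan is to mirror the proof of Lemma~\ref{lm:eta1231} almost line for line, adapting only the single step that links the Phase-2 success condition to the link-indicator count, since in $\OciorRBC$ a node raises $\Ry_i^{[2]}=1$ through the trigger $|\Ss_1^{[1]}|\geq n-t$ (Lines~\ref{line:RBCph2ACond}-\ref{line:RBCph2ASI1}) rather than through a direct sum over masked links. Assuming $\gn^{[1]}=2$, I would first instantiate the groups $\Ac_1^{[1]},\Ac_2^{[1]},\Ac_{1,1}^{[1]},\Ac_{1,2}^{[1]},\Ac_{2,1}^{[1]},\Ac_{2,2}^{[1]},\Bc^{[1]}$ exactly as in \eqref{eq:Aellbig10455}-\eqref{eq:BdefB1} (now using $\Me_i^{[0]}$), and split into the two exhaustive cases \eqref{eq:case1b} and \eqref{eq:case2a}, according to whether $|\Ac_2|$ or $|\Ac_1|$ is at most half of $n-|\Fc|-\sum_{\Lin=3}^{\gn}|\Ac_{\Lin}|$.

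For Case~1 the target is $\Ac_2^{[2]}=\emptyset$. Fix any $i\in\Ac_2^{[1]}$; since $\Ry_i^{[1]}=1$, the only route to $\Ry_i^{[2]}=1$ is to observe $|\Ss_1^{[1]}|\geq n-t$ from its own view. The key observation is that an index $j$ enters $\Ss_1^{[1]}$ only if Node~$i$ received $\Ry_j^{[1]}=1$ \emph{and} $j\in\Lkset_1$, i.e. $\Lk_i^{[1]}(j)=1$ (Lines~\ref{line:RBCph1SS1Cond}-\ref{line:RBCph1SS1}). Hence, invoking the definition \eqref{eq:lkindicator2} of $\Lk_i^{[2]}$ together with $\Ry_i^{[1]}=1$ and bounding the at most $|\Fc|$ dishonest contributors separately,
\begin{align}
|\Ss_1^{[1]}| \;\leq\; \sum_{j\notin\Fc}\Lk_i^{[2]}(j) \;+\; |\Fc| .
\end{align}
I would then bound $\sum_{j\notin\Fc}\Lk_i^{[2]}(j)$ exactly as in \eqref{eq:link124}-\eqref{eq:link6254}: links to $\Bc^{[1]}$ vanish because $\Ry_j^{[1]}=0$ forces $\Lk_i^{[2]}(j)=0$, and links to $\Ac_{1,1}^{[1]}$ vanish because $\hv_j^\T\Meg_1\neq\hv_j^\T\Meg_2$ forbids a match for $i\in\Ac_2^{[1]}$, leaving only $\Ac_{1,2}^{[1]}\cup\Ac_{2,1}^{[1]}\cup\Ac_{2,2}^{[1]}$ as contributors. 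This gives $|\Ss_1^{[1]}|\leq |\Ac_{1,2}^{[1]}|+|\Ac_{2,1}^{[1]}|+|\Ac_{2,2}^{[1]}|+|\Fc|$, the very right-hand side of \eqref{eq:link6254}.

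From this point the arithmetic is identical to \eqref{eq:link3235252}-\eqref{eq:A22bound5366}: Lemma~\ref{lm:sizeboundMatrix} yields $|\Ac_{1,2}^{[1]}|+|\Ac_{2,1}^{[1]}|\leq k-1$; Lemma~\ref{lm:boundA22}, whose proof transfers verbatim because the Phase-1 trigger $\sum_j\Lk_i^{[1]}(j)\geq n-t$ is the same condition in Line~\ref{line:RBCph2OneCond}, yields $|\Ac_{2,2}^{[1]}|\leq 2(k-1)$; and with $k=\lfloor t/5\rfloor+1$ and $n\geq 3t+1$ one obtains $|\Ss_1^{[1]}|\leq 3(k-1)+t\leq 2t<n-t$. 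Consequently no $i\in\Ac_2^{[1]}$ can ever meet the trigger in Line~\ref{line:RBCph2ACond}, so $\Ac_2^{[2]}=\emptyset$ and therefore $\gn^{[2]}\leq 1$. Case~2 follows by interchanging the roles of $\Ac_1$ and $\Ac_2$, giving $\Ac_1^{[2]}=\emptyset$ and completing the proof.

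The main obstacle I anticipate is rigorously justifying the displayed inequality that plays the role of \eqref{eq:link124}: one must show that the asynchronous, message-driven set $\Ss_1^{[1]}$ is dominated by the \emph{final} link-indicator sum plus $|\Fc|$. This rests on the eventual-agreement property of honest link views recorded after \eqref{eq:lkindicator2} (for honest $j$, the received $\Ry_j^{[1]}$ and the status $j\in\Lkset_1$ converge to $\Ry_j^{[1]}$ and $\Lk_i^{[1]}(j)=1$), plus the trivial fact that the $\leq t$ dishonest nodes add at most $|\Fc|$. Once this bridging step is secured, the remainder is the same ECC-counting already performed for $\OciorCOOL$.
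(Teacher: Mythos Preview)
Your proposal is correct and follows essentially the same approach as the paper. The paper likewise instantiates the groups under $\gn^{[1]}=2$, splits into the two cases \eqref{eq:RBCcase1b}-\eqref{eq:RBCcase2a}, and for Case~1 bounds $|\Uc_i^{[2]}|=\sum_j\Lk_i^{[2]}(j)$ by $|\Ac_{1,2}^{[1]}|+|\Ac_{2,1}^{[1]}|+|\Ac_{2,2}^{[1]}|+|\Fc|$ before invoking Lemmas~\ref{lm:sizeboundMatrix} and \ref{lm:boundA22}; the only cosmetic difference is that the paper bounds $|\Uc_i^{[2]}|$ (and uses $\Ss_1^{[1]}\subseteq\Uc_i^{[2]}$ implicitly via the definition \eqref{eq:lkindicator2}), whereas you bound $|\Ss_1^{[1]}|$ directly with the honest/dishonest split, arriving at the identical right-hand side.
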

\begin{proof}
The proof of this lemma is similar to that of  Lemma~\ref{lm:eta1231} of $\OciorCOOL$.   
We will consider the assumption of $\gn^{[1]}= 2$. Under this assumption,  the definition in \eqref{eq:RBCAell00}-\eqref{eq:RBCBdef01} implies that 
 \begin{align}
 \Ac_{1}^{[1]} =&   \{  i:  \Ry_i^{[1]} =1,\Me_i^{[0]}  =  \Meg_{1},  \  i \notin  \Fc, \ i \in [1:n]\}       \label{eq:Aellbig10455RBC}  \\ 
  \Ac_{2}^{[1]} =&   \{  i:  \Ry_i^{[1]} =1, \Me_i^{[0]}  =  \Meg_{2},  \  i \notin  \Fc, \ i \in [1:n]\}      \label{eq:Aellbig1045522RBC}    \\ 
\Ac_{1,2}^{[1]} =   & \{  i:  \   i\in  \Ac_{1}^{[1]},   \   \hv_i^\T  \Meg_{1}  = \hv_i^\T  \Meg_2 \}    \label{eq:Alj2995RBC}  \\
\Ac_{1,1}^{[1]} =   & \Ac_{1}^{[1]} \setminus \Ac_{1,2}^{[1]} = \{  i:  \   i\in  \Ac_{1}^{[1]},   \   \hv_i^\T  \Meg_{1}  \neq  \hv_i^\T  \Meg_2 \}  \label{eq:All2955251RBC} \\
\Ac_{2,1}^{[1]} =   & \{  i:  \   i\in  \Ac_{2}^{[1]},   \   \hv_i^\T  \Meg_{2}  = \hv_i^\T  \Meg_1 \}   \label{eq:Alj299535RBC}  \\
\Ac_{2,2}^{[1]} =   & \Ac_{2}^{[1]} \setminus \Ac_{2,1}^{[1]} = \{  i:  \   i\in  \Ac_{2}^{[1]},   \   \hv_i^\T  \Meg_{2}  \neq  \hv_i^\T  \Meg_1 \}  \label{eq:All29559386RBC}  \\
\Bc^{[1]} =  & \{  i:  \Ry_i^{[1]} =0  ~\text{or $\Ry_i^{[1]}$ has never been set}, \  i \notin  \Fc, \ i \in [1:n] \}    \label{eq:BdefB1RBC}         
 \end{align} 
 where  $\Me_{i}^{[0]}$  denotes the value of $\Me_{i}$ updated at Phase~$0$. 
If Node~$i$ never updates  the value of $\Me_{i}$ before termination, then  $\Me_{i}^{[0]}$ is considered to be a default value $\Me_{i}^{[0]}=  \defaultvalue$.  
Since $|\Ac_{1}|+|\Ac_{2}|= n-|\Fc|-\sum_{\Lin=3}^{\gn} |\Ac_{\Lin}|$,  it is true that at least one of the following cases is satisfied: 
\begin{align}
\text{Case~1:} \quad  
|\Ac_{2}| \leq  &  \frac{n-|\Fc|-\sum_{\Lin=3}^{\gn} |\Ac_{\Lin}|}{2}   \label{eq:RBCcase1b} 
 \end{align}
\begin{align}
\text{Case~2:} \quad |\Ac_{1}| \leq &\frac{n-|\Fc|-\sum_{\Lin=3}^{\gn} |\Ac_{\Lin}|}{2}  .   \label{eq:RBCcase2a} 
 \end{align}

\subsubsection{Analysis for Case~1}   
We will first consider Case~1 and prove that $|\Ac_{2}^{[2]}| =0$ under this case. 
Let us define $\Uc_{i}^{[p]}$ as a set of links that are matched with Node~$i$ at Phase~$p$, that is, 
\begin{align}
\Uc_{i}^{[p]} :=  & \{  j:  \Lk_{i}^{[p]} (j)  =1, j \in [1:n]\},  \quad  \text{for}  \quad i\in [1:n],  p\in \{1,2\},      
\end{align}
where $\Lk_{i}^{[p]} (j) $ is defined in \eqref{eq:lkindicator} and \eqref{eq:lkindicator2}. 
 Then, for any $i\in \Ac_{2}^{[1]}$, the size of $\Uc_{i}^{[2]}$ can be bounded as 
 \begin{align}  
 |\Uc_{i}^{[2]}| = & \sum_{ j \in [1:n]}   \Lk_{i}^{[2]} (j)        \label{eq:link124RBC} \\
 = & \sum_{ j \in [1:n]\setminus \Bc^{[1]}}   \Lk_{i}^{[2]} (j)        \label{eq:link4252RBC} \\
  = & \sum_{ j \in [1:n]\setminus \{\Bc^{[1]}\cup \Ac_{1,1}^{[1]}\}}   \Lk_{i}^{[2]} (j)        \label{eq:link82525RBC} \\
    = & \sum_{ j \in  \{\Ac_{1,2}^{[1]}\cup \Ac_{2,1}^{[1]}\cup \Ac_{2,2}^{[1]} \cup\Fc \}}   \Lk_{i}^{[2]} (j)        \label{eq:link7256RBC} \\
  \leq & |\Ac_{1,2}^{[1]}|  + |\Ac_{2,1}^{[1]}|  + | \Ac_{2,2}^{[1]}|+|\Fc|              \label{eq:link6254RBC} 
\end{align}
where \eqref{eq:link4252RBC} stems from the fact that $\Ry_j^{[1]} =0$ for any $j\in \Bc^{[1]}$ (see \eqref{eq:BdefB1RBC}), which suggests that $\Lk_{i}^{[2]} (j)  =0$ (see \eqref{eq:lkindicator2}); 
\eqref{eq:link82525RBC} results from the identity that  $\hv_j^\T  \Meg_{1}  \neq  \hv_j^\T  \Meg_2 $ for any  $j\in \Ac_{1,1}^{[1]}$  (see \eqref{eq:All2955251RBC}), which implies that   $\Lk_{i}^{[1]} (j)  =0$;
\eqref{eq:link7256RBC} is from the fact that $[1:n]=   \Ac_{1,1}^{[1]} \cup \Ac_{1,2}^{[1]} \cup\Ac_{2,1}^{[1]}\cup \Ac_{2,2}^{[1]}\cup \Bc^{[1]} \cup \Fc$.

From Lemma~\ref{lm:sizeboundMatrix}, the summation of $|\Ac_{1,2}^{[1]}|  + |\Ac_{2,1}^{[1]}|$ in \eqref{eq:link6254RBC} can be bounded as 
 \begin{align}  
   |\Ac_{1,2}^{[1]}|  + |\Ac_{2,1}^{[1]}|   \leq k-1  .         \label{eq:link3235252RBC} 
\end{align}
From Lemma~\ref{lm:boundA22}, the term $|\Ac_{2,2}^{[1]}|$ in \eqref{eq:link6254RBC} can be upper bounded by 
 \begin{align}
|\Ac_{2,2}^{[1]}| \leq  2(k-1).   \label{eq:A22bound23435RBC} 
\end{align}
At this point, for any $i\in \Ac_{2}^{[1]}$, the size of $\Uc_{i}^{[2]}$ can be bounded as 
 \begin{align}  
 |\Uc_{i}^{[2]}|   \leq & \  |\Ac_{1,2}^{[1]}|  + |\Ac_{2,1}^{[1]}|  + | \Ac_{2,2}^{[1]}|+|\Fc|    \label{eq:A22bound1133RBC} \\
   \leq &\  k-1  + 2(k-1)       +|\Fc|   \label{eq:A22bound8255RBC}  \\ 
  \leq & \ 3(k- 1)  + t   \non  \\ 
    = & \ 3( \lfloor    t /5   \rfloor    +1 -1)   + t  \label{eq:A22bound65464RBC}  \\ 
    \leq &\  2 t   \non  \\ 
     <  & \  n-t  \label{eq:A22bound5366RBC}  
\end{align} 
where  \eqref{eq:A22bound1133RBC} is from    \eqref{eq:link6254RBC};   \eqref{eq:A22bound8255RBC} is from Lemma~\ref{lm:sizeboundMatrix} and Lemma~\ref{lm:boundA22}; \eqref{eq:A22bound65464RBC} uses the value of the parameter $k$, i.e., $k= \lfloor    t /5   \rfloor    +1$; and the last inequality uses the condition of $n\geq 3t+1>3t$.  
With the result in \eqref{eq:A22bound5366RBC}, it is concluded that    $\Ry_i^{[2]}$ cannot be $1$ for any $i\in \Ac_{2}^{[1]}$,  due to the derived result $|\Uc_{i}^{[2]}| <n-t$ (see Lines~\ref{line:RBCph2ACond} and \ref{line:RBCph2ASI1}). 
Therefore, it can be concluded that 
  \begin{align}
|\Ac_{2}^{[2]}|  =0 \label{eq:case1A2S0RBC} 
\end{align}
  for Case~1.

\subsubsection{Analysis for Case~2}  
By interchange the roles of $\Ac_{1}$ and  $\Ac_{2}$, one can easily follow the proof for Case~1 and  show that 
   \begin{align}
|\Ac_{1}^{[2]}|  =0  \label{eq:case1A1S0RBC} 
\end{align}
for Case~2. 
Then it completes the proof of this lemma.  
\end{proof}



\end{document}